\DeclareMathOperator{\rep}{rep}
\DeclareMathOperator{\val}{val}
\DeclareMathOperator{\dom}{dom}
\theoremstyle{plain}
\newtheorem{theorem}{Theorem}
\newtheorem{lemma}[theorem]{Lemma}
\newtheorem{corollary}[theorem]{Corollary}
\newtheorem{prop}[theorem]{Proposition}
\theoremstyle{definition}
\newtheorem{definition}[theorem]{Definition}
\newtheorem{remark}[theorem]{Remark}
\newtheorem{example}[theorem]{Example}
\numberwithin{equation}{section}
\title[Revisiting regular sequences]{Revisiting regular sequences in light of rational base numeration systems}
\author{Michel Rigo and Manon Stipulanti}
\address{
Department of Mathematics\\
University of Li\`ege\\
All\'ee de la D\'ecouverte 12\\
4000 Li\`ege, Belgium\\
{\tt \{m.rigo,m.stipulanti\}@uliege.be}}
\begin{document}

\maketitle

\begin{abstract}
Regular sequences generalize the extensively studied automatic sequences. 
Let $S$ be an abstract numeration system.
When the numeration language $L$ is prefix-closed and regular, a sequence is said to be $S$-regular if the module generated by its $S$-kernel is finitely generated.

In this paper, we give a new characterization of such sequences in terms of the underlying numeration tree $T(L)$ whose nodes are words of $L$.
We may decorate these nodes by the sequence of interest following a breadth-first enumeration.
For a prefix-closed regular language $L$, we prove that a sequence is $S$-regular if and only if the tree $T(L)$ decorated by the sequence is linear, i.e., the decoration of a node depends linearly on the decorations of a fixed number of ancestors.

Next, we introduce and study regular sequences in a rational base numeration system, whose numeration language is known to be highly non-regular.
We motivate and comment our definition that a sequence is $\frac{p}{q}$-regular if the underlying numeration tree decorated by the sequence is linear.
We give the first few properties of such sequences, we provide a few examples of them, and we propose a method for guessing $\frac{p}{q}$-regularity.
Then we discuss the relationship between $\frac{p}{q}$-automatic sequences and $\frac{p}{q}$-regular sequences. 
We finally present a graph directed linear representation of a $\frac{p}{q}$-regular sequence.  Our study permits us to highlight the places where the regularity of the numeration language plays a predominant role.
\end{abstract}

\bigskip
\hrule
\bigskip

\noindent 2010 {\it Mathematics Subject Classification}: 68Q45, 68R15, 11A63, 11A67, 11B85.

\noindent \emph{Keywords: Regular sequences, abstract numeration systems, rational base numeration systems, decorated linear trees, kernels, linear representations.}

\bigskip
\hrule
\bigskip

\section{Introduction}\label{sec: intro}

The notion of $k$-regular sequences \cite{Ring} is a natural generalization of $k$-automatic sequences \cite{Rowland} whenever the set of taken values is infinite. It means that the $k$-kernel of the sequence is included in a finitely generated module (or simply a finite-dimensional $\mathbb{Q}$-vector space when dealing with integer sequences). Otherwise stated, a sequence $(x_n)_{n\ge 0}$ is {\em $k$-regular} if and only if there exists an integer $L\ge 0$ such that any subsequence $(x_{k^\ell n+r})_{n\ge 0}$ of its $k$-kernel can be expressed, for all $\ell>L$ and $0\le r<k^\ell$, as a linear combination of a finite number of subsequences of the form  $(x_{k^m n+j})_{n\ge 0}$ for $0\le m\le L$ and $0 \le j< k^m$. This concept has proven its usefulness in many fields: number theory, combinatorics on words, enumeration or discrete mathematics and theoretical computer science or algorithm analysis \cite{Ring,AS2,BR2,CRS}. The notion of kernel and regular sequences have been generalized to non-standard numeration systems with a regular numeration language \cite{MR,CCS}.

In a recent article \cite{RS}, we introduced automatic sequences built on a rational base numeration system. In this framework, contrarily to the classical situation where the numeration language is regular, the sequences are built from representations belonging to a highly non-regular language. In the current paper, our aim is to extend the notion of regular sequences to this framework. Our study permits us to highlight the places where the regularity of the numeration language plays a predominant role. Because of its universality, the case of integer bases does not always reveal the importance of the underlying language.

The simplest example of what should be a $\frac32$-regular sequence is probably the sum-of-digits sequence in base $\frac{3}{2}$ taking values in $\mathbb{N}$:
$$\mathbf{s}=(s_n)_{n\ge 0}=0, 2, 3, 3, 5, 4, 5, 7, 5, 5, 7, 8, 5, 7, 6, 7, 9, 9,\ldots.$$
Indeed, the representations of the first few integers in base~$\frac{3}{2}$ are $\varepsilon$, $2$, $21$, $210$, $212$, $2101$, $2120$, $2122$, $21011$, $21200$, \ldots. If, like Allouche and Shallit \cite{Ring}, we consider the kernel formed of all subsequences of $\mathbf{s}$ whose selected indices have a representation in base~$\frac{3}{2}$ sharing a fixed suffix, since the prefix-tree of the base-$\frac{3}{2}$ numeration language is built on a periodic signature \cite{Marsault--Sakarovitch-2016}, we obtain the ``classical'' $3$-kernel of the sequence: every third $\frac32$-representation in radix order has the same last digit, every ninth $\frac32$-representation has the same suffix of length~$2$, and so on and so forth. Since $\mathbf{s}$ is unbounded, the set of subsequences of $\mathbf{s}$ of the form $(s_{3^jn+r})_{n\ge 0}$ is infinite, and moreover it has no clear combinatorial properties (as mentioned above, one usually looks for linear relations between elements of the kernel). A similar negative observation is made when computing elements of its $2$-kernel or using the generalized definition of kernel from \cite{CCS}. (See Section~\ref{sec: p/q-regularite} for more details.) Hence the usual definitions of regularity based on kernel sets do not provide us with a suitable setting. We note that the classical framework is thus not applicable.

It is therefore necessary to find an alternative to the definition involving the finitely generated property of the module generated by the kernel. Our motivation is to introduce a new definition of a regular sequence in a rational base numeration system, which also extends the classical framework of integer base numeration systems or numeration systems built on a prefix-closed regular language. Our approach is based on the structure of the prefix-tree of the numeration language. Indeed, since rational base numeration languages are prefix-closed, they are conveniently represented by trees whose nodes are in one-to-one correspondence with the words of the language. Note that the prefix-closed property is also assumed in \cite{CCS}. 
  The nodes of such a tree are then decorated by the sequence $\mathbf{x}=(x_n)_{n\ge 0}$ of interest by breadth-first enumeration (or serialization): the $n$th node representing $n$ has decoration $x_n$. We introduce the concept of a {\em linear decorated tree} where the decoration of a node depends linearly on the decorations of a fixed number of ancestors.

  The reader accustomed to $k$-regular sequences should not be surprised by our description given in terms of prefix-trees in which are found linear relationships. Indeed, this is somehow a reformulation of the fact that linear relations occur among elements of the $k$-kernel. However, compared with the definition based on the kernel, our approach allows us to propose a definition extending regularity to numeration systems with non-regular (prefix-closed) languages. As an example, the sum-of-digits sequence $\mathbf{s}$ gives rise to a decorated linear tree but its kernel does not belong to a finitely generated module.   Moreover, we are able to propose a kind of linear representation that permits us to compute the decoration associated with $x_n$ with a number of matrix multiplications equal to the length of the representation of $n$ in the considered rational base numeration system.

Our study also highlights the limitations of a numeration system built on a non-regular language. With our definition of regularity extending the classical framework, we show that any regular sequence taking a finite number of values is automatic. However, the converse is false. One can easily construct a sequence produced by an automaton whose outputs are all distinct, which is not regular.

Berstel and Reutenauer \cite{BR2} present the $k$-regularity of a sequence $(x_n)_{n\ge 0}$ through the fact that the formal series $\sum_{n\ge 0} x_n \rep_k(n)$ is rational (where $\rep_k(n)$ denotes the base-$k$ representation of $n$). In a more general setting, one considers the series $\sum_{n\ge 0} x_n \rep_S(n)$ where $\rep_S(n)$ is the representation of $n$ is a convenient abstract numeration system $S$ and the support of these series is thus included in the numeration language.  With the above example of the sum-of-digits sequence $\mathbf{s}$, such a series (each term is made of a coefficient followed by a word of the language) starts with
$$\sum_{n\ge 0} s_n \rep_{\frac{3}{2}}(n)=2\cdot 2+ 3\cdot 21+ 3\cdot 210+5\cdot 212+4\cdot 2101+5\cdot 2120+\cdots.$$
So in the case of $\frac32$-representations, we have a series with a non-regular support. In particular, this series cannot be $\mathbb{N}$-rational~\cite[Chapter~3]{BR2}.
In this paper, we are extensively considering trees and subtrees but we are not considering series as functions defined over the set of finite trees (i.e., with bounded height) and taking values in a semiring. Indeed, we point out that such a variant of rational series on trees has been considered in \cite{BR} and is an independent matter. The map associating with a tree its height or the evaluation of a parse tree of an arithmetic expression are examples of this kind. The authors studied the corresponding notion of rational series.

The paper is organized as follows.
In Section~\ref{sec: arbres}, we recall the necessary background on trees. In particular, we distinguish decorated and undecorated trees.
Section~\ref{sec: arbres decores} is dedicated to our main concept, namely linear trees.
Roughly a tree is $h$-linear for some $h\ge 0$ if, for each type of undecorated subtrees of height $h$, the decoration of each node on level $h$ can be obtained as a linear combination of decorations of nodes on levels $0, 1,\ldots, h-1$. 
We provide examples of such trees in three different numeration systems (integer base, Fibonacci or Zeckendorf system and rational base $\frac{3}{2}$).
In Section~\ref{sec: regularite et arbres}, with Theorems~\ref{the:kautom} and \ref{the:Sreg}, given an abstract numeration system $S$ based on a prefix-closed regular language, we give an alternative characterization of regularity by showing that a sequence is $S$-regular if and only if the decorated tree associated with the numeration language is linear.
Then, in Section~\ref{sec: p/q-regularite}, we explain why such an equivalence does not hold for a rational base whose numeration language is not regular. We thus argue that a meaningful definition for regularity in rational bases should be related to the linearity of the decorated numeration tree. This provides a new framework to extend the notion further beyond rational bases.
We give the first few properties of regular sequences in rational bases. As Allouche and Shallit in \cite{AS2}, we provide a method for guessing $\frac{p}{q}$-regularity. Thanks to this technique, we provide a few examples of $\frac{p}{q}$-regular sequences such as $(n^d)_{n\ge 0}$ for any integer $d\ge 0$.
Then we compare the set of $\frac{p}{q}$-automatic sequences with the set of $\frac{p}{q}$-regular sequences taking finitely many values. 
We make use of our results to show that the cumulative version (defined later in the paper) of a $\frac{p}{q}$-regular sequence is also $\frac{p}{q}$-regular.
Finally in Section~\ref{sec: matrices}, we provide a graph directed linear representation of a regular sequence in rational bases. This again highlights the main differences between a regular and a non-regular numeration languages.

\section{Trees}\label{sec: arbres}

Let $L$ be a prefix-closed language over a totally ordered alphabet $(A,<)$ and let $\mathbf{x}=x_0x_1\cdots$ be an infinite sequence over a commutative semiring $\mathbb{K}$, i.e., $\mathbf{x}\in\mathbb{K}^\mathbb{N}$. With $L$ is associated a tree whose (linearly ordered) nodes are ``colored'' by the terms of $\mathbf{x}$. The tree representation of the free monoid $A^*$ is classical; we consider a subtree of this infinite tree.

\begin{definition}
  With every prefix-closed language $L\subseteq A^*$ is associated a labeled tree $T(L)$ whose nodes are the words of $L$. The empty word $\varepsilon$ is the root. Edges of the tree are labeled by letters in $A$. The alphabet is assumed to be totally ordered, so the edges are also ordered. If $u$ and $ua$ are two nodes with $a\in A$, there is an edge of label $a$ between them. 
\end{definition}

Enumerating the words of $L$ by radix order (so considering an abstract numeration system $S=(L,A,<)$ in the sense of Lecomte and Rigo \cite{LR}) corresponds to the breadth-first traversal (or serialization \cite{Marsault--Sakarovitch-2}) of the ordered tree $T(L)$. For all $n\ge 0$, the $n$th word $w_n$ in $L$ corresponds to the $n$th node of $T(L)$. We let $\rep_S(n)$ denote the $(n+1)$st word in $L$ for all $n\ge 0$, and $\val_S:L\to\mathbb{N}$ is the inverse function mapping any word of $L$ to its position in the radix ordered language $L$.

\begin{definition}
A {\em decoration} of a tree $T=(V,E)$ is a map from the set $V$ of vertices to some set $B$ associating a value in $B$ with each node of the tree.
\end{definition}

To avoid any confusion or misunderstanding, we say that edges have {\em labels} and nodes have {\em decorations}. In a tree, nodes of {\em level~$\ell\ge 0$} are those at distance $\ell$ from the root.

 From now on, we consider the labeled tree $T(L)$ decorated by $\mathbf{x}$.  It is denoted by $T_\mathbf{x}(L)$. Otherwise stated, for the abstract numeration system built on $L$, the node corresponding to a word $w\in L$ in $T_\mathbf{x}(L)$ has decoration $x_{\val(w)}$. By abuse of notation, since integers are in one-to-one correspondence with words in $L$, we also write $x_w$. In Figure~\ref{fig:dec_tree}, we have depicted the first few levels of the tree of the numeration language associated with base-$2$, Fibonacci and base-$\frac32$ numeration systems. The decorations are given by the sequence $(n)_{n\ge 0}$. 

\begin{figure}[h!t]
  \centering
\begin{minipage}{.4\linewidth}
  \tikzset{
  s_gra/.style = {circle,fill=gray!30, inner sep=1pt, minimum size=15pt},
}
\tikzstyle{level 1}=[sibling distance=50mm]
\tikzstyle{level 2}=[sibling distance=33mm]
\tikzstyle{level 3}=[sibling distance=13mm]
\tikzstyle{level 4}=[sibling distance=7mm]
\begin{tikzpicture}[->,>=stealth',level distance = 1cm]
  \node [s_gra] {$0$} 
  child { node [s_gra] {$1$} 
    child { node [s_gra] {$2$} 
      child { node [s_gra] {$4$} 
        child { node [s_gra] {$8$} 
          edge from parent node[left] {$0$} }
        child { node [s_gra] {$9$} 
          edge from parent node[right] {$1$} }
        edge from parent node[left] {$0$} }
      child { node [s_gra] {$5$} 
        child { node [s_gra] {$10$} 
          edge from parent node[left] {$0$} }
        child { node [s_gra] {$11$} 
          edge from parent node[right] {$1$} }
        edge from parent node[right] {$1$} }
      edge from parent node[above] {$0$} } 
    child { node [s_gra] {$3$} 
      child { node [s_gra] {$6$} 
        child { node [s_gra] {$12$} 
          edge from parent node[left] {$0$} }
        child { node [s_gra] {$13$} 
          edge from parent node[right] {$1$} }
        edge from parent node[left] {$0$} }
      child { node [s_gra] {$7$} 
        child { node [s_gra] {$14$} 
          edge from parent node[left] {$0$} }
        child { node [s_gra] {$15$} 
          edge from parent node[right] {$1$} }
        edge from parent node[right] {$1$} }
      edge from parent node[above] {$1$} } 
    edge from parent node[left] {$1$}
  }
;
\end{tikzpicture}
\end{minipage}\hskip1.5cm  
\begin{minipage}{.2\linewidth}
\tikzset{
  s_gra/.style = {circle,fill=gray!30, inner sep=1pt, minimum size=15pt},
}
\begin{tikzpicture}[->,>=stealth',level/.style={sibling distance = 4cm/#1},level distance = 1cm]
  \node [s_gra] {$0$} 
  child { node [s_gra] {$1$} 
    child { node [s_gra] {$2$} 
      child {node [s_gra] {$3$} 
        child {node [s_gra] {$5$} edge from parent node[left] {$0$}}
        child {node [s_gra] {$6$} edge from parent node[right] {$1$}} 
            edge from parent node[left] {$0$}}
    child {node [s_gra] {$4$} 
      child {node [s_gra] {$7$} edge from parent node[right] {$0$}} 
  edge from parent node[right] {$1$}}
edge from parent node[left] {$0$}}
edge from parent node[left] {$1$}}  
;
\end{tikzpicture}
\end{minipage}\hskip1cm  
\begin{minipage}{.2\linewidth}
\tikzset{
  s_gra/.style = {circle,fill=gray!30, inner sep=1pt, minimum size=15pt},
}
\begin{tikzpicture}[->,>=stealth',level/.style={sibling distance = 4cm/#1},level distance = 1cm]
  \node [s_gra] {$0$} 
  child { node [s_gra] {$1$} 
    child { node [s_gra] {$2$} 
      child {node [s_gra] {$3$} 
      child {node [s_gra] {$5$} edge from parent node[left] {$1$}} 
            edge from parent node[left] {$0$}}
    child {node [s_gra] {$4$} 
      child {node [s_gra] {$6$} edge from parent node[left] {$0$}} 
      child {node [s_gra] {$7$} edge from parent node[right] {$2$}} 
  edge from parent node[right] {$2$}}
edge from parent node[left] {$1$}}
edge from parent node[left] {$2$}}  
;
\end{tikzpicture}
\end{minipage}
\caption{Prefixes of height $4$ of three trees $T(L)$ where $L$ is respectively the base-$2$, Fibonacci and base-$\frac{3}{2}$ numeration language.}
  \label{fig:dec_tree}
\end{figure}

\begin{definition}
The {\em domain} $\dom(T)$ of a labeled tree $T$ is the set of labels of paths from the root to its nodes. In particular, $\dom(T(L))=L$. The {\em truncation} of a tree at height~$h$ is the restriction of the tree to the domain $\dom(T)\cap A^{\le h}$.
\end{definition}

\begin{definition}
  Let $w\in L$. We let $T[w]$ denote the (infinite) subtree of $T(L)$ having $w$ as root. Its domain is $w^{-1}L=\{u \mid wu\in L\}$. We say that $T[w]$ is a {\em suffix} of~$T$. The {\em factor of height $h$ rooted at $w$} is the truncation of $T[w]$ at height~$h$. It is denoted by $T[w,h]$. The {\em prefix of height $h$} of $T$ is the factor $T[\varepsilon,h]$. We distinguish the case without or with decorations.
  \begin{itemize}
  \item Two factors $T[w,h]$ and $T[w',h]$ of the same height in the (undecorated) tree $T(L)$ are {\em equal} if they have the same domain. 
  \item Two factors $T[w,h]$ and $T[w',h]$ of the same height in the decorated tree $T_\mathbf{x}(L)$  are {\em equal} if they have the same domain and the same decorations, i.e., $x_{wu}=x_{w'u}$ for all $u\in \dom(T[w,h])$.
      \end{itemize}
\end{definition}

For instance, in the first tree of Figure~\ref{fig:dec_tree}, the factors $T[10,2]$ and $T[11,2]$ are equal as undecorated trees (they have the same domain $\{0,1\}^{\le 2}$), but are different as decorated trees.

\begin{lemma}\label{lem:auto}
  If $L$ is a regular language, then the non-decreasing function mapping $n$ to the number of (undecorated) factors of height $n$ in $T(L)$ is bounded by the number of states of the trim minimal automaton of $L$.
\end{lemma}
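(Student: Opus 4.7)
The plan is to reduce the counting of undecorated factors of a given height to the Myhill--Nerode equivalence classes of $L$, exploiting the hypothesis that $L$ is regular. Let $\mathcal{A} = (Q, A, \delta, q_0, F)$ denote the trim minimal deterministic automaton recognizing $L$, and for $w \in A^*$ write $\delta^*(w)$ for the state reached from $q_0$ upon reading $w$.

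The key observation is that the undecorated factor $T[w,n]$ is entirely determined by its domain $w^{-1}L \cap A^{\le n}$. By Myhill--Nerode, whenever $\delta^*(w) = \delta^*(w')$ we have $w^{-1}L = w'^{-1}L$, and therefore $T[w,n] = T[w',n]$ as undecorated trees for every $n$. Consequently, the number of distinct factors of height $n$ appearing in $T(L)$ is bounded above by the number of distinct states $\delta^*(w)$ with $w \in L$, which is at most $|Q|$.

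For the monotonicity statement, it suffices to observe that if $T[w,n+1] = T[w',n+1]$, then the restrictions of both sides to $A^{\le n}$ coincide, so $T[w,n] = T[w',n]$. Thus the equivalence relation \emph{``having identical factors of height $n+1$''} refines the equivalence \emph{``having identical factors of height $n$''}, and the number of classes cannot decrease with $n$.

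There is no substantive obstacle here; the argument amounts to a direct application of the classical characterization of regular languages via the finiteness of right quotients. The only minor point requiring care is that nodes of $T(L)$ are indexed by $w \in L$ rather than by arbitrary $w \in A^*$, so the relevant count is the number of states of $\mathcal{A}$ reached by words of $L$, which is trivially bounded by $|Q|$.
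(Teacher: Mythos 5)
Your proof is correct, but it is not quite the argument the paper gives, and the difference is worth recording. You establish the upper bound directly through the Myhill--Nerode direction: words reaching the same state of the automaton have the same right quotient $w^{-1}L$, hence the same domain $w^{-1}L\cap A^{\le n}$ and the same undecorated factor $T[w,n]$ for every $n$, so the number of factors is at most the number of states reached by words of $L$; and you justify monotonicity by observing that equality of height-$(n+1)$ factors refines equality of height-$n$ factors. The paper's proof instead dwells on the converse implication: two distinct states of the minimal automaton are separated by some word of length at most $N-1$ (where $N$ is the number of states), so words reaching distinct states already have distinct factors at height $N-1$. That separation fact is really what is needed for the subsequent Remark~\ref{rem:isoTw} (the eventual one-to-one correspondence between states and factors of large enough height); the upper bound itself --- which is the content of the lemma as literally stated --- is left implicit there, resting on exactly the ``same state $\Rightarrow$ same quotient $\Rightarrow$ same factor'' step that you spell out. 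So your route is the more direct and self-contained one for the statement as phrased, and it also covers the monotonicity claim, which the paper's proof does not address; what it does not give is the bounded-height separation of distinct states, which the paper needs later.
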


\begin{proof}
  Any two states in the minimal automaton of $L$ are distinguished: there is a word accepted from exactly one of the two states and rejected from the other one.
Moreover, it is a classical result that if two states are distinguishable, then they are distinguished by a word of length at most $N-1$ where $N$ is the number of states of the minimal automaton of $L$. 
  Otherwise stated, if the word $w$ (respectively $w'$) leads to a state $q$ (respectively $q'$), then there is a word $u$ of length $h\le N-1$ such that $u$ belongs to the domain $\dom(T[w,h])=w^{-1} L\cap A^{\le h}$ of $T[w,h]$ or to the domain $\dom(T[w',h])={w'}^{-1} L\cap A^{\le h}$ of $T[w',h]$ but not to both of them.
\end{proof}

\begin{remark}\label{rem:isoTw}
Let $L$ be a regular language having $\mathcal{M}_L$ as minimal automaton. From the above lemma, if the height $h$ is large enough, e.g., at least the number of states of $\mathcal{M}_L$, there is a one-to-one correspondence between the states of $\mathcal{M}_L$ and the pairwise distinct factors of height $h$ in $T(L)$. In particular, $T[w,h]=T[w',h]$ if and only if, from the initial state, the words $w$ and $w'$ lead to the same state of  $\mathcal{M}_L$.
\end{remark}

\section{Decorated $h$-linear trees}\label{sec: arbres decores}

The next definition is central and has to be understood as follows. Roughly speaking, whenever two factors of height $h$ occurring in $T(L)$ are isomorphic as labeled trees, then the decorations of the nodes on the lowest level~$h$ are obtained as linear combinations of the decorations of the nodes on the other levels. For nodes on level $h$ at the same relative positions in the two trees, the linear combinations to get the corresponding decorations are the same.

Let $\mathbb{L}$ be a commutative Noetherian ring, i.e., a ring in which every ideal is finitely generated. We assume that $\mathbb{L}$ is a subsemiring of $\mathbb{K}$. So we will consider $\mathbb{L}$-linear combinations of elements in $\mathbb{K}^\mathbb{N}$. We make such an assumption because in general, a submodule of a finitely generated module is not always finitely generated, but it is the case when the module is over a Noetherian ring. In particular, principal ideal domains (PID for short) and fields, such as $\mathbb{Z}$ and $\mathbb{Q}$ respectively, are Noetherian rings.

\begin{definition}  Let $L$ be any prefix-closed language (not necessarily regular) over a finite alphabet $A$. Fix some integer $h\ge 1$ and let $r(h)\ge 1$ denote the number of pairwise distinct labeled trees of height $h$ occurring in $T(L)$, denoted by $T_1,\ldots,T_{r(h)}$.
  The decorated tree $T_\mathbf{x}(L)$ is {\em $(\mathbb{L},h)$-linear} (or simply {\em $h$-linear}) if, for all $1\le j\le r(h)$ and all words $w\in L$ such $T[w,h]=T_j$, 
  then for all words $u$ belonging to $\dom(T_j) \cap A^h$,
  there exist constants $c_{j,u,v}\in\mathbb{L}$ such that 
$$x_{wu}=\sum_{\substack{v\in \dom(T_j)\\ |v|<h}} c_{j,u,v}\, x_{wv}.$$
In what follows, for all words $w\in L$ such that the height-$h$ subtree $T[w,h]$ of $T(L)$ is equal to $T_j$ for some $1\le j\le r(h)$, we say that the \emph{type} of $T[w,h]$ is $T_j$.
\end{definition}

\begin{lemma}\label{lem:h+1}
  If a decorated tree $T_\mathbf{x}(L)$ is $h$-linear, then it is $(h+1)$-linear.
\end{lemma}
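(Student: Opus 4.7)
The plan is to reduce the $(h+1)$-linearity condition to the $h$-linearity condition by considering the height-$h$ subtree hanging one step below the root. Suppose $T_\mathbf{x}(L)$ is $h$-linear. Fix a type $T'$ of a height-$(h+1)$ subtree occurring in $T(L)$, and let $w\in L$ be such that $T[w,h+1]=T'$. I want to show that every decoration at level $h+1$ of $T[w,h+1]$ can be written as an $\mathbb{L}$-linear combination of the decorations at levels $0,1,\ldots,h$, with coefficients depending only on $T'$.

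Let $u\in \dom(T')\cap A^{h+1}$ and write $u=au''$ with $a\in A$ and $u''\in A^h$. Since $L$ is prefix-closed, $wa\in L$, and the height-$h$ factor $T[wa,h]$ is entirely determined (as an undecorated labeled tree) by $T'$ and $a$: it is the subtree of $T'$ rooted at the $a$-child of the root, truncated to height $h$. Denote its type by $T''=T''(T',a)$. Since $u=au''\in\dom(T')$, we have $u''\in\dom(T'')\cap A^h$. Applying $h$-linearity to the subtree rooted at $wa$, there exist constants $c_{T'',u'',v}\in\mathbb{L}$ (depending only on the type $T''$, hence only on $T'$ and $a$) such that
$$x_{wau''}=\sum_{\substack{v\in\dom(T'')\\ |v|<h}} c_{T'',u'',v}\, x_{wav}.$$
Now observe that for every such $v$, the word $av$ satisfies $av\in w^{-1}L$ (because $wav\in L$) and $|av|\le h<h+1$, so $av\in\dom(T')$ with length strictly less than $h+1$. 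Setting
$$c_{T',u,v'}=\begin{cases} c_{T''(T',a),u'',v} & \text{if } v'=av \text{ with } v\in\dom(T''),\ |v|<h,\\ 0 & \text{otherwise,}\end{cases}$$
we obtain $x_{wu}=\sum_{v'\in\dom(T'),\,|v'|<h+1} c_{T',u,v'}\,x_{wv'}$, and these constants depend only on the type $T'$ and the leaf $u$.

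The only step that requires some care is the verification that the resulting coefficients $c_{T',u,v'}$ are intrinsic to the type $T'$, not to the particular witness $w$ chosen. This is immediate from the fact that the function $(T',a)\mapsto T''(T',a)$ is well-defined on types (since the undecorated subtree $T[wa,h]$ is wholly determined by $T[w,h+1]$), and that the $h$-linearity hypothesis already packages its constants as functions of the height-$h$ type. No other subtlety arises: the argument is a single step of induction on the height, and iterating it gives, as a corollary, that an $h$-linear tree is $h'$-linear for every $h'\ge h$.
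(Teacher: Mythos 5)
Your proof is correct and follows essentially the same route as the paper's: decompose the height-$(h+1)$ factor into the height-$h$ factors rooted at the children of the root and apply $h$-linearity to each of them. You are in fact slightly more careful than the published proof, since you explicitly verify that the resulting constants depend only on the type of the height-$(h+1)$ factor (via the observation that the type of $T[wa,h]$ is determined by the type of $T[w,h+1]$ and the letter $a$), a point the paper leaves implicit.
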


\begin{proof}
  Let $T[w,h+1]$ be a subtree of height $h+1$ in $T_\mathbf{x}(L)$. Assume that the words $wa_1,\ldots,wa_\ell$ are the children of $w$ with $a_j \in A$ for all $1\le j \le \ell$. Hence $T[w,h+1]$ contains $\ell$ disjoint (i.e., with no common node) subtrees $T[wa_1,h],\ldots,T[wa_\ell,h]$ of height $h$. By assumption, for all $1\le j \le \ell$, the decorations on the last level of $T[wa_j,h]$ are linear combinations of the decorations of the nodes in $T[wa_j,h-1]$, which are nodes of $T[w,h]$. Otherwise stated, the decorations on the last level of $T[w,h+1]$ are linear combinations of the decorations of the nodes in $T[w,h]$.
\end{proof}

If a decorated tree $T_\mathbf{x}(L)$  is $h$-linear, it is enough to know its prefix of height $h$ and the linear relations occurring on the lower level in every non-isomorphic factor of height $h$ to recover the full decorated tree. Indeed, from the prefix of height $h-1$ of any factor of height~$h$, one can compute the decorations on the next level.
We now illustrate this fact on three different types of numeration systems.

Let $k\ge 2$ be an integer. We let $A_k$ be the finite alphabet $\{0,1,\ldots,k-1\}$. The usual base-$k$ numeration system is built on the language
\begin{equation}
  \label{eq:Lk}
L_k:=\{\varepsilon\}\cup \{1,\ldots,k-1\}\{0,\ldots,k-1\}^*.  
\end{equation}

\begin{example}[Integer base]
  Assume that the decorations take values in $\mathbb{N}$ and that we consider the tree $T(L_2)$ in the binary numeration system. From the prefix and the linear combinations given in Figure~\ref{fig:ex1}, we can build a decorated infinite $3$-linear tree. We proceed as follows. Start with the given decorated prefix of height~$3$. In the factor $T[1,3]$, the first three levels are already decorated. If we make use of the given relations, we can compute the decorations on level~$4$ as depicted in Figure~\ref{fig:ex1b}. For instance, the fourth node on level 4 has decoration $0+2-1=1$. We can continue with the factor $T[10,3]$ (and respectively $T[11,3]$) where the first three levels are already decorated. We use the same relations and obtain the decorations on level~$5$, and so on and so forth.
  \begin{figure}[h!tb]
    \centering
\tikzset{
  s_gra/.style = {circle,fill=gray!30, inner sep=1pt, minimum size=15pt},
}
\begin{tikzpicture}[->,>=stealth',level/.style={sibling distance = 4cm/#1},level distance = 1cm]
  \node [s_gra] {$0$} 
  child { node [s_gra] {$0$} 
    child { node [s_gra] {$0$}
      child { node [s_gra] {$0$} edge from parent node[left] {$0$}}
      child { node [s_gra] {$0$} edge from parent node[right] {$1$}}
      edge from parent node[left] {$0$}}
    child { node [s_gra] {$1$}
      child { node [s_gra] {$1$} edge from parent node[left] {$0$}}
      child { node [s_gra] {$2$} edge from parent node[right] {$1$}}
      edge from parent node[right] {$1$}}
edge from parent node[right] {$1$}}  
;
\end{tikzpicture}\vskip.5cm
\tikzset{
  s_end/.style = {inner sep=1pt, minimum size=15pt},
  s_gra/.style = {circle,fill=gray!30, inner sep=1pt, minimum size=15pt}
}
\tikzstyle{level 1}=[sibling distance=45mm]
\tikzstyle{level 2}=[sibling distance=21mm]
\tikzstyle{level 3}=[sibling distance=13mm]
\begin{tikzpicture}[->,>=stealth',level/.style={sibling distance = 3cm/#1},level distance = 1cm]
  \node [s_gra] {$x$} 
  child { node [s_gra] {$y$} 
    child { node [s_gra] {}
      child { node [s_end] {$y$} edge from parent node[left] {$0$}}
      child { node [s_end] {$y$} edge from parent node[right] {$1$}}
      edge from parent node[left] {$0$}}
    child { node [s_gra] {}
      child { node [s_end] {$y$} edge from parent node[left] {$0$}}
      child { node [s_end] {$y+t-z$} edge from parent node[right] {$1$}}
      edge from parent node[right] {$1$}}
    edge from parent node[above] {$0$}}
   child { node [s_gra] {$z$} 
     child { node [s_gra] {}
       child { node [s_end] {$z$} edge from parent node[left] {$0$}}
       child { node [s_end] {$z$} edge from parent node[right] {$1$}}
       edge from parent node[left] {$0$}}
     child { node [s_gra] {$t$}
       child { node [s_end] {$t$} edge from parent node[left] {$0$}}
       child { node [s_end] {$2t-z$} edge from parent node[right] {$1$}}
       edge from parent node[right] {$1$}}
edge from parent node[above] {$1$}}  
;
\end{tikzpicture}
    \caption{A first example (base-$2$ case): the prefix of height $3$ and linear relations to extend the tree.}
    \label{fig:ex1}
  \end{figure}
  We obtain the tree in Figure~\ref{fig:ex1b} and the serialization of the decorations is the sequence $0,0,0,1,0,0,1,2,0,0,0,1,1,1,2,3,\ldots$ {\cite[A014081]{OEIS}}, which counts the number of occurrences of $11$ (with possible overlaps) in the binary representation of $n$. To prove that we indeed get this sequence, observe that from a node decorated by $z$ to the one decorated by $t$, an extra $11$ has been read. Thus, $t-z$ should be equal to $1$ so, each time a new factor $11$ occurs, $t-z$ has to be added to the current decoration. This explains the relations on the lowest level in Figure~\ref{fig:ex1}.  
  \begin{figure}[h!tb]
    \centering
\begin{minipage}{.4\linewidth}
  \tikzset{
  s_gra/.style = {circle, draw, inner sep=1pt, minimum size=15pt},
}
\tikzstyle{level 1}=[sibling distance=50mm]
\tikzstyle{level 2}=[sibling distance=33mm]
\tikzstyle{level 3}=[sibling distance=13mm]
\tikzstyle{level 4}=[sibling distance=7mm]
\begin{tikzpicture}[->,>=stealth',level distance = 1cm]
  \node [s_gra] {$0$} 
  child { node [s_gra] {$0$} 
    child { node [s_gra] {$0$} 
      child { node [s_gra] {$0$} 
        child { node [s_gra] {$0$} 
          edge from parent node[left] {$0$} }
        child { node [s_gra] {$0$} 
          edge from parent node[right] {$1$} }
        edge from parent node[left] {$0$} }
      child { node [s_gra] {$0$} 
        child { node [s_gra] {$0$} 
          edge from parent node[left] {$0$} }
        child { node [s_gra] {$1$} 
          edge from parent node[right] {$1$} }
        edge from parent node[right] {$1$} }
      edge from parent node[above] {$0$} } 
    child { node [s_gra] {$1$} 
      child { node [s_gra] {$1$} 
        child { node [s_gra] {$1$} 
          edge from parent node[left] {$0$} }
        child { node [s_gra] {$1$} 
          edge from parent node[right] {$1$} }
        edge from parent node[left] {$0$} }
      child { node [s_gra] {$2$} 
        child { node [s_gra] {$2$} 
          edge from parent node[left] {$0$} }
        child { node [s_gra] {$3$} 
          edge from parent node[right] {$1$} }
        edge from parent node[right] {$1$} }
      edge from parent node[above] {$1$} } 
    edge from parent node[right] {$1$}
  }
;
\end{tikzpicture}
\end{minipage}
\caption{The first levels of the corresponding $3$-linear tree.}
    \label{fig:ex1b}
  \end{figure}
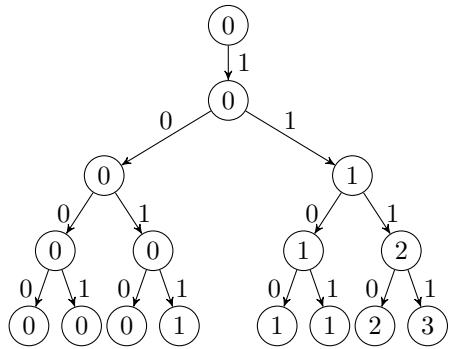
\end{example}

\begin{remark}\label{rk: factors of height h}
  In the tree $T(L_k)$ corresponding to the base-$k$ numeration system, there are exactly two factors of height $h$: the prefix of height $h$ of the tree (occurring only once and whose root has $k-1$ children) and the full $k$-ary tree of height $h$ (occurring everywhere else and whose root has $k$ children). This is consistent with Lemma~\ref{lem:auto} and Remark~\ref{rem:isoTw}: the minimal automaton of $L_k$ has two states. In the initial state, one cannot read leading zeroes. The second state has a loop for any symbol in $A_k$ including zero.
\end{remark}

\begin{example}[Fibonacci numeration system]\label{exa:fib}
  Assume that the decorations take values in $\mathbb{N}$ and that we consider the tree $T(F)$ in the Fibonacci numeration system  with the language $F=1\{0,01\}^*\cup\{\varepsilon\}$ over $\{0,1\}$. Compared with the previous examples, we have two non-isomorphic (undecorated) factors $t_1,t_2$ of height~$2$ to consider as shown in Figure~\ref{fig:ex2}. If we proceed as in the first example, from the prefix of height $2$ and the linear combinations given in Figure~\ref{fig:ex2}, we can build a decorated infinite $2$-linear tree. Start with the prefix denoted by $t_0$. The factor $T[1,2]$ is of type $t_1$. Thus knowing the decorations of the first two levels, we compute the decorations on level~$3$ using the relations in the second tree in Figure~\ref{fig:ex2}. Now, the factor $T[10,2]$ is of type $t_2$ and we compute the decorations on level~$4$ using the relations in the third tree in Figure~\ref{fig:ex2}. Then we have to consider the two factors $T[100,2]$ and $T[101,2]$ of types $t_2$ and $t_1$ respectively and we make use of the relations to compute the next decorations.
   \begin{figure}[h!tb]
    \centering
\tikzset{
  s_gra/.style = {circle,fill=gray!30, inner sep=1pt, minimum size=15pt},
}
$t_0$:\begin{tikzpicture}[->,>=stealth',level/.style={sibling distance = 4cm/#1},level distance = 1cm]
  \node [s_gra] {$1$} 
  child { node [s_gra] {$2$} 
    child { node [s_gra] {$3$} edge from parent node[left] {$0$}}
edge from parent node[left] {$1$}}  
;
\end{tikzpicture}\hskip1.5cm
\tikzset{
  s_gra/.style = {circle,fill=gray!30, inner sep=1pt, minimum size=15pt},
  s_whi/.style = {inner sep=1pt, minimum size=15pt}
}
$t_1$:\begin{tikzpicture}[->,>=stealth',level/.style={sibling distance = 3cm/#1},level distance = 1cm]
  \node [s_gra] {$x$} 
  child { node [s_gra] {$y$} 
    child { node [s_whi] {$2x$} edge from parent node[left] {$0$}}
    child { node [s_whi] {$2x$} edge from parent node[right] {$1$}}
    edge from parent node[left] {$0$}}  
;
\end{tikzpicture}\hskip1.5cm
\tikzset{
    s_gra/.style = {circle,fill=gray!30, inner sep=1pt, minimum size=15pt},
  s_whi/.style = {inner sep=1pt, minimum size=15pt}
}
$t_2$:\begin{tikzpicture}[->,>=stealth',level/.style={sibling distance = 3cm/#1},level distance = 1cm]
  \node [s_gra] {$x$} 
  child { node [s_gra] {$y$} 
    child { node [s_whi] {$2y-x$} edge from parent node[left] {$0$}}
    child { node [s_whi] {$2x$} edge from parent node[right] {$1$}}
    edge from parent node[above] {$0$}}
   child { node [s_gra] {$z$} 
    child { node [s_whi] {$\frac{3z}{2}$} edge from parent node[left] {$0$}}
edge from parent node[above] {$1$}}  
;
\end{tikzpicture}
    \caption{A second example (Fibonacci case): the prefix of height $2$ and linear relations to extend the tree.}
    \label{fig:ex2}
  \end{figure}
Keeping doing so, we obtain the tree in Figure~\ref{fig:ex2b} and the serialization of the decorations is the sequence $1,2,3,4,4,5,6,6,6,8,9,8,8,7,10,12,12,12,10,12,12,\ldots$ {\cite[A282717]{OEIS}} counting the number of nonzero entries in $n$th row of the generalized Pascal triangle based on Fibonacci representations. The linear relations are taken from \cite{LRS}.
  \begin{figure}[h!tb]
    \centering
\tikzset{
  s_gra/.style = {circle,fill=gray!30, inner sep=1pt, minimum size=15pt}
}
\begin{tikzpicture}[->,>=stealth',level/.style={sibling distance = 6cm/#1},level distance = 1cm]
  \node [s_gra] {$1$} 
  child { node [s_gra] {$2$} 
    child { node [s_gra] {$3$} 
      child {node [s_gra] {$4$} 
        child {node [s_gra] {$5$} 
          child {node [s_gra] {$6$} edge from parent node[left] {$0$}}
          child {node [s_gra] {$8$} edge from parent node[right] {$1$}}
        edge from parent node[left] {$0$}}
      child {node [s_gra] {$6$}
child {node [s_gra] {$9$} edge from parent node[left] {$0$}}
        edge from parent node[right] {$1$}} 
            edge from parent node[left] {$0$}}
    child {node [s_gra] {$4$} 
      child {node [s_gra] {$6$}
        child {node [s_gra] {$8$} edge from parent node[left] {$0$}}
        child {node [s_gra] {$8$} edge from parent node[right] {$1$}}
        edge from parent node[left] {$0$}} 
  edge from parent node[right] {$1$}}
edge from parent node[left] {$0$}}
edge from parent node[left] {$1$}}  
;
\end{tikzpicture}
\caption{The first levels of the corresponding $2$-linear tree.}
    \label{fig:ex2b}
  \end{figure}
\end{example}

Rational bases numeration systems were introduced in \cite{Akiyama--Frougny-Sakarovitch-2008} and are special kinds of abstract numeration systems. Let $p$ and $q$ be two relatively prime integers with $p > q > 1$.
We let $L_\frac{p}{q} \subseteq A_p^*$ denote the numeration language in base $\frac{p}{q}$.
It is known that $L_\frac{p}{q}$ is prefix-closed and non-regular.
The corresponding tree $T(L_\frac{p}{q})$ has a purely periodic labeled signature denoted by $(w_0,\ldots,w_{q-1})$, where each word $w_i$ belongs to $A_p^*$. For details, we refer the reader to \cite{Akiyama--Frougny-Sakarovitch-2008,Marsault-these,Marsault--Sakarovitch-2,Marsault--Sakarovitch-2016,RS}.

\begin{example}[Rational base]\label{exa:rat32}
  Consider the rational base $\frac{3}{2}$ having $(02,1)^\omega$ as signature and the tree $T(L_\frac32)$. Let us quickly recall how this tree is built. Following the breadth-first traversal, each node periodically has either two children with edges of labels $0$ and $2$, or one child with a single edge of label $1$. Except for the root which has only a single edge with label $2$ to avoid representations with leading zeroes, we alternate between nodes of degree one or two. 

  \begin{figure}[h!tb]
    \centering
\tikzset{
  s_gra/.style = {circle,fill=gray!30, inner sep=1pt, minimum size=15pt},
}
\begin{tikzpicture}[->,>=stealth',level/.style={sibling distance = 4cm/#1},level distance = 1cm]
  \node [s_gra] {$0$} 
  child { node [s_gra] {$2$} 
    child { node [s_gra] {$3$} edge from parent node[left] {$1$}}
edge from parent node[left] {$2$}}  
;
\end{tikzpicture}\hskip1.5cm
\tikzset{
  s_whi/.style = {inner sep=1pt, minimum size=15pt},
  s_gra/.style = {circle,fill=gray!30, inner sep=1pt, minimum size=15pt}
}
\begin{tikzpicture}[->,>=stealth',level/.style={sibling distance = 3cm/#1},level distance = 1cm]
  \node [s_gra] {$x$} 
  child { node [s_gra] {$y$} 
    child { node [s_whi] {$y$} edge from parent node[left] {$0$}}
    child { node [s_whi] {$3y-2x$} edge from parent node[right] {$2$}}
    edge from parent node[left] {$1$}}  
;
\end{tikzpicture}\hskip1.5cm
\tikzset{
  s_whi/.style = {inner sep=1pt, minimum size=15pt},
  s_gra/.style = {circle,fill=gray!30, inner sep=1pt, minimum size=15pt}
}
\begin{tikzpicture}[->,>=stealth',level/.style={sibling distance = 3cm/#1},level distance = 1cm]
  \node [s_gra] {$x$} 
  child { node [s_gra] {$y$} 
    child { node [s_whi] {$y$} edge from parent node[left] {$0$}}
    child { node [s_whi] {$z$} edge from parent node[right] {$2$}}
    edge from parent node[above] {$0$}}
   child { node [s_gra] {$z$} 
    child { node [s_whi] {$\frac{3z-x}{2}$} edge from parent node[left] {$1$}}
edge from parent node[above] {$2$}}  
;
\end{tikzpicture}\\
\tikzset{
  s_whi/.style = {inner sep=1pt, minimum size=15pt},
  s_gra/.style = {circle,fill=gray!30, inner sep=1pt, minimum size=15pt}
}
\begin{tikzpicture}[->,>=stealth',level/.style={sibling distance = 3cm/#1},level distance = 1cm]
  \node [s_gra] {$x$} 
  child { node [s_gra] {$y$} 
    child { node [s_whi] {$2y-x$} edge from parent node[left] {$1$}}
    edge from parent node[left] {$1$}}  
;
\end{tikzpicture}\hskip1.5cm
\tikzset{
  s_whi/.style = {inner sep=1pt, minimum size=15pt},
  s_gra/.style = {circle,fill=gray!30, inner sep=1pt, minimum size=15pt}
}
\begin{tikzpicture}[->,>=stealth',level/.style={sibling distance = 3cm/#1},level distance = 1cm]
  \node [s_gra] {$x$} 
  child { node [s_gra] {$y$} 
    child { node [s_whi] {$\frac{x+z}{2}$} edge from parent node[left] {$1$}}
    edge from parent node[above] {$0$}}
   child { node [s_gra] {$z$} 
    child { node [s_whi] {$z$} edge from parent node[left] {$0$}}
    child { node [s_whi] {$2z-x$} edge from parent node[right] {$2$}}
edge from parent node[above] {$2$}}  
;
\end{tikzpicture}
\caption{A third example (base-$\frac{3}{2}$ case).}
    \label{fig:ex3}
  \end{figure}
  
    As in the previous examples, from the prefix of height $2$ and the linear combinations given in Figure~\ref{fig:ex3}, we can build a decorated infinite 2-linear tree on $T(L_\frac32)$. It is depicted in Figure~\ref{fig:ex3b}. We proceed as in the other examples: in $T(L_\frac32)$, if in a given factor the first two levels have been decorated, then we get the decorations on the lower level.
      We get the serialization: $0,2,3,3,5,4,5,7,5,5,7,8,5,7,6,7,9,9,\ldots$, which is the sum-of-digits sequence in base $\frac{3}{2}$. To show that it is indeed this sequence, for the first (respectively second) type of tree, observe that $y-x$ (respectively $z-x$) is equal to $1$ (respectively $2$). One can therefore deduce the relations to be considered.
  For instance, in the second tree, the second leaf has decoration $3y-2x=y+2 \cdot (y-x)=y+2$, which indeed reflects the sum of digits.

  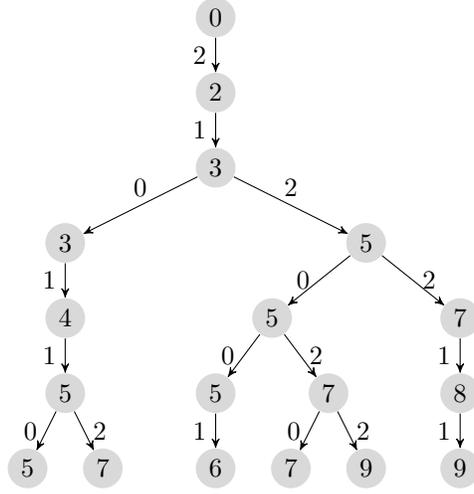
\begin{figure}[h!t]
    \centering
    \tikzset{
  s_gra/.style = {circle,fill=gray!30, inner sep=1pt, minimum size=15pt}
}
\tikzstyle{level 1}=[sibling distance=60mm]
\tikzstyle{level 2}=[sibling distance=50mm]
\tikzstyle{level 3}=[sibling distance=40mm]
\tikzstyle{level 4}=[sibling distance=25mm]
\tikzstyle{level 5}=[sibling distance=15mm]
\tikzstyle{level 6}=[sibling distance=10mm]
\begin{tikzpicture}[->,>=stealth',level distance = 1cm]
  \node [s_gra] {$0$} 
  child { node [s_gra] {$2$} 
    child { node [s_gra] {$3$} 
      child {node [s_gra] {$3$} 
        child {node [s_gra] {$4$}
          child {node [s_gra] {$5$}
            child {node [s_gra] {$5$} edge from parent node[left] {$0$}}
            child {node [s_gra] {$7$} edge from parent node[right] {$2$}}
            edge from parent node[left] {$1$}}
          edge from parent node[left] {$1$}} 
            edge from parent node[above] {$0$}}
    child {node [s_gra] {$5$} 
      child {node [s_gra] {$5$}
        child {node [s_gra] {$5$}
          child {node [s_gra] {$6$} edge from parent node[left] {$1$}}
          edge from parent node[left] {$0$}} 
        child {node [s_gra] {$7$}
          child {node [s_gra] {$7$} edge from parent node[left] {$0$}}
          child {node [s_gra] {$9$} edge from parent node[right] {$2$}}
          edge from parent node[right] {$2$}} 
        edge from parent node[left] {$0$}} 
      child {node [s_gra] {$7$}
        child {node [s_gra] {$8$}
          child {node [s_gra] {$9$} edge from parent node[left] {$1$}}
          edge from parent node[left] {$1$}}
        edge from parent node[right] {$2$}} 
  edge from parent node[above] {$2$}}
edge from parent node[left] {$1$}}
edge from parent node[left] {$2$}}  
;
\end{tikzpicture}
    \caption{The first levels of the corresponding $2$-linear tree.}
    \label{fig:ex3b}
  \end{figure}
\end{example}

\section{Link with regular sequences, so many kernels}\label{sec: regularite et arbres}

Let $k\ge 2$ be an integer. We can easily characterize $k$-regular sequences in terms of the linearity of the tree they decorate.  Recall that the $k$-kernel of a sequence $\mathbf{x}=x_0x_1 \cdots$ is the set
\begin{equation}
  \label{eq:kker}
\ker_k(\mathbf{x}):=\{(x_{k^jn+r})_{n\ge 0}\mid j\ge 0, 0\le r<k^j\}.  
\end{equation}
The $\mathbb{L}$-module generated by a subset $B$ is defined as the set $\langle B\rangle_\mathbb{L}$ of finite linear combinations of elements in $B$ with coefficients in $\mathbb{L}$ of the form $\ell_1 b_1+\cdots +\ell_m b_m$ for some $m\ge 0$ where $\ell_i \in \mathbb{L}$ and $b_i \in B$.
A sequence $\mathbf{x}\in\mathbb{K}^\mathbb{N}$ is said to be {\em $(\mathbb{L},k)$-regular} if the $\mathbb{L}$-module $\langle\ker_k(\mathbf{x})\rangle_\mathbb{L}$ generated by the $k$-kernel of the sequence $\mathbf{x}$ is finitely generated \cite{Ring}.

\begin{remark}\label{rem:gen}
Let us recall an important fact that we will use several times in the following.
  As observed in \cite[Theorem~2.2]{Ring}, if $\langle A\rangle_\mathbb{L}= \langle \mathbf{s}_1,\ldots, \mathbf{s}_m\rangle_\mathbb{L}$, then each $\mathbf{s}_i$ is a $\mathbb{L}$-linear combination of some elements of $A$. Since there is a finite number of $\mathbf{s}_i$ to consider, these elements $\mathbf{s}_1,\ldots, \mathbf{s}_m$ are in fact generated by a finite number of elements of the set $A$. This means that $\langle A\rangle_\mathbb{L}$ is generated by a finite number of elements in $A$. We will make use of this remark first with $A=\ker_k(\mathbf{x})$.
\end{remark}

\begin{theorem}\label{the:kautom}
Let $k\ge 2$ be an integer.  A sequence $\mathbf{x}=x_0x_1 \cdots$ taking values in $\mathbb{K}$ is $(\mathbb{L},k)$-regular if and only if the decorated tree $T_\mathbf{x}(L_k)$ is $(\mathbb{L},h)$-linear for some $h\ge 1$.
\end{theorem}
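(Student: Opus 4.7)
The plan is to prove the two implications separately, using Remark~\ref{rem:gen} and a bijection on short words for the forward direction, and an explicit induction over kernel depth for the backward one. By Remark~\ref{rk: factors of height h}, the tree $T(L_k)$ has only two types of height-$h$ subtrees: the \emph{prefix type} (occurring only at $w=\varepsilon$) and the \emph{full $k$-ary type} (occurring at every $w$ with $|w|\ge 1$). Throughout, for $v\in A_k^*$ I write $r_v$ for the integer whose base-$k$ writing is $v$ (possibly with leading zeroes), so that $\val(wv)=k^{|v|}\val(w)+r_v$ for every $w\in L_k$.

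For the forward implication, Remark~\ref{rem:gen} applied to $\ker_k(\mathbf{x})$ yields a finite generating set of kernel depths bounded by some $H$, and I set $h=H+1$. Every $(x_{k^hn+r})_{n\ge 0}$ is then an $\mathbb{L}$-linear combination of the $(x_{k^{j'}n+r'})_{n\ge 0}$ with $j'<h$. Through the bijection $v\in A_k^{<h}\longleftrightarrow(|v|,r_v)$, evaluating at $n=\val(w)\ge 1$ turns each such decomposition into $x_{wu}=\sum_{v\in A_k^{<h}} c_{u,v}\,x_{wv}$ for every $w\in L_k$ with $|w|\ge 1$: this is the full $k$-ary-type relation. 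For the prefix type, evaluating the same kernel identity at $n=0$ and re-indexing the summation via the bijection $\val\colon L_k\cap A_k^{<h}\to\{0,\ldots,k^{h-1}-1\}$ produces the required coefficients.

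For the backward implication, set $\mathbf{t}_{j,r}:=(x_{k^jn+r})_{n\ge 0}$ and consider the finitely generated $\mathbb{L}$-module $N_h:=\langle\mathbf{t}_{j,r}:0\le j\le h,\,0\le r<k^j\rangle_\mathbb{L}$. I prove by induction on $j$ that $\mathbf{t}_{j,r}\in N_h$ for every $(j,r)$: the case $j\le h$ is immediate. For $j>h$, I decompose $r=k^hq+s$ with $0\le s<k^h$, let $u\in A_k^h$ be the length-$h$ representation of $s$, and apply the full $k$-ary-type relation at $w=\rep_k(k^{j-h}n+q)$ to obtain
\[
x_{k^jn+r}=\sum_{v\in A_k^{<h}} c_{u,v}\,x_{k^{j-h+|v|}n+k^{|v|}q+r_v},
\]
valid whenever $k^{j-h}n+q\ge 1$. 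When $q\ge 1$ this covers every $n\ge 0$ and the induction hypothesis concludes.

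The main obstacle is the subcase $q=0$: the identity then only holds for $n\ge 1$, and the sequence difference $\mathbf{d}:=\mathbf{t}_{j,r}-\sum_v c_{u,v}\mathbf{t}_{j-h+|v|,r_v}$ is supported solely at $n=0$ with value $\beta_u:=x_{r_u}-\sum_v c_{u,v}x_{r_v}$. The resolution is to apply the same coefficients at the shallowest level, yielding $\mathbf{t}_{h,r_u}-\sum_v c_{u,v}\mathbf{t}_{|v|,r_v}$: this sequence is likewise supported only at $n=0$ with the identical value $\beta_u$ there, its vanishing for $n\ge 1$ being precisely the full $k$-ary-type relation. Hence $\mathbf{d}$ coincides with this base-level expression, which lies in $N_h$ because $\mathbf{t}_{h,r_u}$ and each $\mathbf{t}_{|v|,r_v}$ with $|v|<h$ are among the generators, completing the induction.
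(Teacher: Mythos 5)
Your proposal is correct, and both implications follow the same basic route as the paper: use Remark~\ref{rem:gen} to reduce to a decomposition of depth-$h$ kernel elements over depth-$<h$ ones, translate between kernel relations and tree relations via $\val(wv)=k^{|v|}\val(w)+r_v$, and recurse on depth for the converse. The one place you genuinely diverge is the treatment of the $n=0$ boundary in the backward direction. The paper notes that the tree relations only give the kernel identity for $n\ge 1$ (since $T[\varepsilon,h]$ is a different undecorated type) and patches this by adjoining an extra generator $\mathbf{z}$ supported at $n=0$, then unfolds $(x_{k^{h+t}n+p})_{n\ge 1}=(x_{k^h(k^tn+q_p)+r_p})_{n\ge 1}$ recursively. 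You instead run a clean strong induction on the depth $j$ over the fixed finitely generated module $N_h=\langle \mathbf{t}_{j,r}: j\le h\rangle_{\mathbb{L}}$, and in the problematic subcase $q=0$ you observe that the discrepancy sequence $\mathbf{d}$ (supported at $n=0$ with value $\beta_u$) is literally equal to the same linear expression instantiated at the shallowest level, $\mathbf{t}_{h,r_u}-\sum_v c_{u,v}\mathbf{t}_{|v|,r_v}$, hence already lies in $N_h$. This buys two things: the generating set never needs to be enlarged beyond genuine kernel elements, and you sidestep the question (left implicit in the paper) of whether the correction constant $\beta_u\in\mathbb{K}$ can be realized as an $\mathbb{L}$-multiple of a single auxiliary sequence $\mathbf{z}$. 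Your explicit handling of the prefix-type relation in the forward direction (collapsing the pairs $(j,s)$ onto the canonical representations in $L_k\cap A_k^{<h}$) also makes precise a re-indexing the paper passes over quickly.
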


\begin{proof}
  If the $\mathbb{L}$-module $M$ generated by the $k$-kernel is finitely generated, then there exists some integer $h\ge 1$ such that, for all $0\le r<k^h$, the sequence $(x_{k^hn+r})_{n\ge 0}$ is a linear combination of sequences of the form $(x_{k^jn+s})_{n\ge 0}$ for $0\le j<h$ and $0\le s<k^j$. Using Remark~\ref{rem:gen}, we can assume that $M$ is generated by finitely many elements of the $k$-kernel itself. Otherwise stated, there exist constants $c_{r,j,s}\in\mathbb{L}$ such that 
  \begin{equation}
    \label{eq:khlin}
    \forall n\ge 0,\quad x_{k^hn+r}=\sum_{j=0}^{h-1}\sum_{s=0}^{k^j-1} c_{r,j,s}\, x_{k^jn+s}.
  \end{equation}
For the base-$k$ numeration system, the language $L_k$ and a word $w\in L_k$, the tree $T[w,h]$ of height $h$ is such that
$$\val_k(\dom(T[w,h]))= \{ k^j \val_k(w)+ s \mid 0\le j\le h, 0\le s<k^j\}.$$
So $T_\mathbf{x}(L_k)$ is $(\mathbb{L},h)$-linear. Indeed, Equation~\eqref{eq:khlin} means that in every tree $T[w,h]$ for $w\in L_k$, the decorations on the last level satisfy the same linear relations depending on decorations on the other levels. 

Conversely, if $T_\mathbf{x}(L_k)$ is $(\mathbb{L},h)$-linear, then we obtain relations similar to \eqref{eq:khlin} for all $n\ge 1$. (Indeed, the case of the prefix of height $h$ may give other relations for $n=0$ because $T[\varepsilon,h]$ and $T[w,h]$, for $w\neq\varepsilon$, are not equal as undecorated trees. This is due to the absence of leading zeroes in greedy representations. Recall Remark~\ref{rk: factors of height h}.) Let $0\le r<k^h$. Hence
$(x_{k^hn+r})_{n\ge 1}$ is a linear combination of the sequences $(x_{k^jn+s})_{n\ge 1}$ for $0\le j <h$ and $0\le s<k^j$. Now $(x_{k^hn+r})_{n\ge 0}$ is a linear combination of the sequences $(x_{k^jn+s})_{n\ge 0}$ for $0\le j<h$ and $0\le s<k^j$ and some extra sequence $\mathbf{z}$ whose terms are all zeroes except the first one. From this, we deduce that the sequence $\mathbf{x}$ is $(\mathbb{L},k)$-regular.
Indeed, for $t\ge 1$ and $0\le p< k^{h+t}$, if we write $p=k^h q_p + r_p$, then we have
  \begin{equation}
    \label{eq:iterk}
    (x_{k^{h+t}n+p})_{n\ge 1}=(x_{k^h(k^tn+q_p)+r_p})_{n\ge 1}.
  \end{equation}
  We can therefore express the former sequence as a linear combination of sequences of the form $(x_{k^j(k^t n+q_p)+s})_{n\ge 1}$ for $0\le j <h$ and $0\le s<k^j$ and continue recursively until we get a combination of the sequences $(x_{k^jn+s})_{n\ge 1}$ for $0\le j <h$ and $0\le s<k^j$.
  Therefore the sequence $ (x_{k^{h+t}n+p})_{n\ge 0}$ is a linear combination of sequences $(x_{k^jn+s})_{n\ge 1}$, for $0\le j <h$ and $0\le s<k^j$, and $\mathbf{z}$.
\end{proof}

We consider the definition from \cite{CCS} extending the one of \cite{MR} which permits us to consider $S$-regular sequences in particular abstract numerations systems. 
Let $S=(L,A,<)$ be an abstract numeration system built on a prefix-closed regular language and let $\mathbf{x}=x_0x_1\cdots$ be a sequence in $\mathbb{K}^\mathbb{N}$. 
For all words $u\in A^*$, consider the sequence 
$$\tau(\mathbf{x},u):n\mapsto\left\{
  \begin{array}{ll}
    x_{\rep_S(n)u}, & \text{ if } \rep_S(n)u\in L;\\
    0, & \text{ otherwise}.\\
  \end{array}\right.
  $$
  The {\em $S$-kernel} of the sequence $\mathbf{x}$ is the set
  \begin{equation}
    \label{eq:sker}
    \ker_S(\mathbf{x}):=\{ \tau(\mathbf{x},u) \mid u\in A^*\}.
  \end{equation}
  \begin{remark}[About leading zeroes]
    For a sequence $\mathbf{x}$, there is a small difference between the elements of its $k$-kernel and those of its $S_k$-kernel for the abstract numeration system $S_k$ built on the language $L_k$ from \eqref{eq:Lk}. This difference arises because leading zeroes are allowed in the positional base-$k$ numeration system but are forbidden in the abstract numeration system setting (leading zeroes change the length and thus the value for the radix order). Indeed, when a word $u$ has digit $0$ as a prefix, then $\rep_S(0)u=\varepsilon u=u$ starts with $0$ and does not belong to $L_k$.
    So the sequence $\tau(\mathbf{x},u)$ in $\ker_{S_k}(\mathbf{x})$ starts with $0$ compared with the sequence $(\mathbf{x}(k^{|u|}n+\val_k(u)))_{n\ge 0}$ in $\ker_k(\mathbf{x})$ starting with $\mathbf{x}(k^{|u|}0+\val_k(u))=x_{\val_k(u)}$, which can be non-zero. So the two subsequences built from the same suffix $u$ can differ on the first term. With this distinction, this is why the prefix of height~$h$ in $T(L_k)$ differs from any other factor having a leftmost branch labeled with zeroes. 
    Nevertheless from \cite{CCS}, $\langle\ker_k(\mathbf{x})\rangle_\mathbb{L}$ is finitely generated if and only if $\langle\ker_{S_k}(\mathbf{x})\rangle_\mathbb{L}$ is.
  \end{remark}
  
\begin{definition}
Let $S=(L,A,<)$ be an abstract numeration system built on a prefix-closed regular language $L$.
  A sequence $\mathbf{x}$ taking values in $\mathbb{K}$ is {\em $(\mathbb{L},S)$-regular} if the $\mathbb{L}$-module $\langle\ker_S(\mathbf{x})\rangle_\mathbb{L}$ generated by the $S$-kernel of $\mathbf{x}$ is finitely generated.
\end{definition}

It will be convenient to convey an extra piece of information about the type of tree that is encountered when reading $\rep_S(n)$ in the process of building subsequences. 
Let $h\ge 1$. There are finitely many (undecorated) labeled trees $\{T_1,\ldots,T_{r(h)}\}$ of height $h$ occurring in $T(L)$. For all words $u\in A^*$ and all $j\in\{1,\ldots,r(h)\}$, consider the sequence
$$\tau(\mathbf{x},u,T_j):n\mapsto\left\{
  \begin{array}{ll}
    x_{\rep_S(n)u}, & \text{ if } \rep_S(n)u\in L \text{ and } T[\rep_S(n),h]=T_j;\\
    0, & \text{ otherwise}.\\
  \end{array}\right.
$$
In particular, we have
\begin{equation}
  \label{eq:split}
  \sum_{j=1}^{r(h)} \tau(\mathbf{x},u,T_j) = \tau(\mathbf{x},u).
\end{equation}
The {\em $h$-filtered $S$-kernel} of the sequence $\mathbf{x}$ is the set
  $$\ker_{h,S}(\mathbf{x}):=\{ \tau(\mathbf{x},u,T_j) \mid u\in A^*, 1\le j\le r(h)\}.$$  
Note that the $0$-filtered $S$-kernel is simply the $S$-kernel of the sequence.

For convenience, we introduce some characteristic sequences, for $j\in\{1,\ldots,r(h)\}$, 
$$\chi_j:n\mapsto\left\{
  \begin{array}{ll}
    1, & \text{ if } T[\rep_S(n),h]=T_j;\\
    0, & \text{ otherwise}.\\
  \end{array}\right.
  $$
  This means that
  $$\tau(\mathbf{x},u) \odot \chi_j = \tau(\mathbf{x},u,T_j),$$
  where $\odot$ is the term-wise multiplication (which is compatible with the notation of the Hadamard product, see Remark~\ref{rem:hadamard}).
  \begin{example}
    Consider the three trees $t_0,t_1,t_2$ from Figure~\ref{fig:ex2}. The tree $t_0$ has $\varepsilon$ as root. Trees of type $t_1$ have a root reached with a path ending with $1$ (after a letter $1$, one can only add a letter $0$). Trees of type $t_2$ have a root reached with a path ending with $0$ (after a letter $0$, one can add either $0$ or $1$).
In the next table, we give the values of $\tau(\mathbf{x},0,t_1)$ and $\tau(\mathbf{x},0,t_2)$. Note that $\tau(\mathbf{x},0,t_0)=0,0,0,...$.
  $$\begin{array}{r|rrrrrrrrrrr}
  \rep_S(n) & \varepsilon & 1 & 10 & 100 & 101 & 1000 & 1001 & 1010 & 10000 \\
      \rep_S(n) 0 & 0 & 10 & 100 & 1000 & 1010 & 10000 & 10010 & 10100 & 100000 \\
      \hline
          \tau(\mathbf{x},0)     & 0 & 3 & 4 & 5 & 6 & 6 & 9 & 8 & 7 \\
           \chi_1 & 0&1&0&0&1&0&1&0&0\\
  \tau(\mathbf{x},0,t_1)     & 0 & 3 & 0 & 0 & 6 & 0 & 9 & 0 & 0 \\
      \chi_2 & 0&0&1&1&0&1&0&1&1\\
      \tau(\mathbf{x},0,t_2)      & 0 & 0 & 4 & 5 & 0 & 6 & 0 & 8 & 7 \\
   \end{array}$$
  
\end{example}

  \begin{lemma}
    Let $h\ge 0$. The $S$-kernel of a sequence is finite if and only if its $h$-filtered kernel is finite.
  \end{lemma}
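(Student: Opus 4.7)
The plan is to exploit the two identities introduced just before the statement: first the splitting $\sum_{j=1}^{r(h)} \tau(\mathbf{x},u,T_j)=\tau(\mathbf{x},u)$ coming from Equation~\eqref{eq:split}, and second the term-wise factorization $\tau(\mathbf{x},u,T_j)=\tau(\mathbf{x},u)\odot \chi_j$. Together with the fact that $r(h)$ is finite (there are only finitely many labeled trees of height $h$ occurring in $T(L)$ whenever the alphabet is finite, and in particular the sequences $\chi_1,\ldots,\chi_{r(h)}$ form a finite family depending only on $h$), these two identities immediately yield a bijection-free argument in each direction.

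For the implication from $S$-kernel finite to $h$-filtered $S$-kernel finite, I would argue as follows: every element of $\ker_{h,S}(\mathbf{x})$ has the shape $\tau(\mathbf{x},u)\odot\chi_j$ for some $u\in A^*$ and some $j\in\{1,\ldots,r(h)\}$. If $\ker_S(\mathbf{x})$ is finite, then $\{\tau(\mathbf{x},u):u\in A^*\}$ is finite, so the set of Hadamard products obtained by pairing those finitely many sequences with one of the $r(h)$ fixed sequences $\chi_j$ is also finite. Hence $\ker_{h,S}(\mathbf{x})$ is finite.

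For the converse, assume $\ker_{h,S}(\mathbf{x})$ is finite. By \eqref{eq:split}, each $\tau(\mathbf{x},u)$ is equal to the ordered sum $\tau(\mathbf{x},u,T_1)+\cdots +\tau(\mathbf{x},u,T_{r(h)})$. Since the number of $r(h)$-tuples of elements drawn from a finite set is finite, and each $\tau(\mathbf{x},u)$ is obtained as the sum of the components of such a tuple, the set $\{\tau(\mathbf{x},u):u\in A^*\}=\ker_S(\mathbf{x})$ is finite as well.

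There is no real obstacle here: the lemma is essentially bookkeeping once one notices that the filtration simply splits each kernel element through multiplication by the finitely many characteristic sequences $\chi_j$ and that summing reconstructs it. The only mild point worth being careful about is that $r(h)$ is indeed finite for every fixed $h$ (which is clear because a labeled tree of height $h$ over the finite alphabet $A$ is a finite combinatorial object, so the set of isomorphism types occurring is a subset of a finite set), a fact already implicitly used in the definition of $\ker_{h,S}(\mathbf{x})$.
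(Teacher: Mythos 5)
Your proof is correct and follows essentially the same route as the paper: the forward direction uses that each $\tau(\mathbf{x},u,T_j)$ is $\tau(\mathbf{x},u)\odot\chi_j$ with only $r(h)$ choices of $\chi_j$, and the converse uses the splitting identity~\eqref{eq:split} to reconstruct each kernel element as a sum of finitely many filtered ones. The paper's own proof is just a more compressed statement of the same two observations.
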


  \begin{proof}
    Each element $\tau(\mathbf{x},u)$ of the $S$-kernel gives rise to at most $r(h)$ distinct elements $\tau(\mathbf{x},u,T_j)$ of the $h$-filtered kernel. Conversely, from Equation~\eqref{eq:split}, each element of the $S$-kernel is obtained by summing up some of the elements of the $h$-filtered kernel.
  \end{proof}
  
  \begin{lemma}\label{lem:fg}
    The $\mathbb{L}$-module $\langle\ker_S(\mathbf{x})\rangle_\mathbb{L}$ generated by the $S$-kernel of $\mathbf{x}$ is finitely generated if and only if the $\mathbb{L}$-module $\langle\ker_{h,S}(\mathbf{x})\rangle_\mathbb{L}$ generated by the $h$-filtered $S$-kernel of $\mathbf{x}$ is finitely generated for all $h\ge 0$.
  \end{lemma}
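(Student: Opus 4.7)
The plan is to prove both implications using just two ingredients: the identity $\tau(\mathbf{x},u,T_j)=\chi_j\odot\tau(\mathbf{x},u)$ together with the decomposition~\eqref{eq:split}, and the Noetherian hypothesis on $\mathbb{L}$ (so that submodules of finitely generated modules are again finitely generated).

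For the ``only if'' direction, I would fix an arbitrary $h\ge 0$ and appeal to Remark~\ref{rem:gen}: if $\langle\ker_S(\mathbf{x})\rangle_\mathbb{L}$ is finitely generated, then it is generated by finitely many elements of the $S$-kernel itself, say $\tau(\mathbf{x},u_1),\ldots,\tau(\mathbf{x},u_m)$. For any $u\in A^*$ and any $j\in\{1,\ldots,r(h)\}$, write $\tau(\mathbf{x},u)=\sum_{i=1}^m c_i\,\tau(\mathbf{x},u_i)$ with $c_i\in\mathbb{L}$ and take the Hadamard product on both sides with $\chi_j$. Since $\odot$ is bilinear and distributes over $\mathbb{L}$-linear combinations term-by-term, this yields
$$\tau(\mathbf{x},u,T_j)=\chi_j\odot\tau(\mathbf{x},u)=\sum_{i=1}^m c_i\,(\chi_j\odot\tau(\mathbf{x},u_i))=\sum_{i=1}^m c_i\,\tau(\mathbf{x},u_i,T_j).$$
The finite set $\{\tau(\mathbf{x},u_i,T_j)\mid 1\le i\le m,\ 1\le j\le r(h)\}$ thus generates $\langle\ker_{h,S}(\mathbf{x})\rangle_\mathbb{L}$, proving the claim for every $h$.

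For the ``if'' direction, it suffices to assume that $\langle\ker_{h,S}(\mathbf{x})\rangle_\mathbb{L}$ is finitely generated for some single $h\ge 0$ (the case $h=0$ being trivial since then the two kernels coincide). From Equation~\eqref{eq:split}, every generator $\tau(\mathbf{x},u)$ of the $S$-kernel equals the finite sum $\sum_{j=1}^{r(h)}\tau(\mathbf{x},u,T_j)$ and therefore already lies in $\langle\ker_{h,S}(\mathbf{x})\rangle_\mathbb{L}$. Consequently $\langle\ker_S(\mathbf{x})\rangle_\mathbb{L}$ is a submodule of the finitely generated $\mathbb{L}$-module $\langle\ker_{h,S}(\mathbf{x})\rangle_\mathbb{L}$, and the Noetherian property of $\mathbb{L}$ forces it to be finitely generated as well.

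I do not expect a real obstacle here: the Hadamard product interacts compatibly with linear combinations, so the only conceptual point is to resist the temptation to prove the ``if'' direction by explicitly producing generators of $\langle\ker_S(\mathbf{x})\rangle_\mathbb{L}$; it is much cleaner simply to observe containment of modules and invoke Noetherianity. The slight subtlety worth flagging in the write-up is that the quantifier ``for all $h\ge 0$'' in the statement is used only lightly: the forward implication delivers all $h$ at once, while the backward implication needs only one $h$.
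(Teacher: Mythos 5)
Your proof is correct and follows essentially the same route as the paper: the forward direction via Remark~\ref{rem:gen} and the distributivity of $\odot$ over $\mathbb{L}$-linear combinations, and the backward direction via the containment $\langle\ker_S(\mathbf{x})\rangle_\mathbb{L}\subseteq\langle\ker_{h,S}(\mathbf{x})\rangle_\mathbb{L}$ from Equation~\eqref{eq:split} together with Noetherianity. Your remark that the backward implication only needs a single $h$ is a accurate (if minor) sharpening of how the quantifier is used.
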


  \begin{proof}
    From Equation~\eqref{eq:split}, we deduce that $\langle\ker_{S}(\mathbf{x})\rangle_\mathbb{L}$ is included in $\langle\ker_{h,S}(\mathbf{x})\rangle_\mathbb{L}$. 
  Any submodule of a finitely generated module over a Noetherian ring is also
  finitely generated. Hence, if $\langle\ker_{h,S}(\mathbf{x})\rangle_\mathbb{L}$ is finitely generated, so is $\langle\ker_S(\mathbf{x})\rangle_\mathbb{L}$.

  Assume that $\langle\ker_S(\mathbf{x})\rangle_\mathbb{L}$ is finitely generated. Using Remark~\ref{rem:gen}, we may assume that it is generated by a finite number of elements $\mathbf{s}_1,\ldots,\mathbf{s}_m\in \ker_S(\mathbf{x})$. Every element of $\ker_{h,S}(\mathbf{x})$ is of the form $\mathbf{t}\odot \chi_j$ for some $\mathbf{t}\in \ker_S(\mathbf{x})$ and some $1\le j\le r(h)$. Hence, it is a $\mathbb{L}$-linear combination of $\mathbf{s}_1\odot\chi_j,\ldots,\mathbf{s}_m\odot\chi_j$. Otherwise stated, this means that the $h$-filtered $S$-kernel is generated by the sequences $\mathbf{s}_i\odot\chi_j$, $1\le i\le m$, $1\le j\le r(h)$.
\end{proof}

\begin{prop}
Let $S=(L,A,<)$ be an abstract numeration system built on a prefix-closed (not necessarily regular) language.  If the $\mathbb{L}$-module $\langle\ker_{S}(\mathbf{x})\rangle_\mathbb{L}$ generated by the $S$-kernel of a sequence $\mathbf{x}$ taking values in $\mathbb{K}$ is finitely generated, then the decorated tree $T_\mathbf{x}(L)$ is $(\mathbb{L},h)$-linear for some $h\ge 1$.
\end{prop}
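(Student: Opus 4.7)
My plan is to convert the finite generation hypothesis directly into an explicit bound on the depth at which linearity occurs. By Remark~\ref{rem:gen}, the module $\langle\ker_S(\mathbf{x})\rangle_\mathbb{L}$ is already generated by finitely many kernel elements $\tau(\mathbf{x},u_1),\ldots,\tau(\mathbf{x},u_N)$, and the key idea is to set $h:=1+\max_i|u_i|$ so that every generator is indexed by a word strictly shorter than~$h$. This gap is precisely what provides the room needed for the linear relations to live on the bottom level of height-$h$ factors.

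With $h$ fixed, for every $u\in A^h$ there is a decomposition $\tau(\mathbf{x},u)=\sum_{i=1}^N \ell_{u,i}\,\tau(\mathbf{x},u_i)$ inside the module. I would then take any type $T_j$ of height $h$ occurring in $T(L)$, any node $w\in L$ with $T[w,h]=T_j$, any leaf position $u\in\dom(T_j)\cap A^h$, and evaluate the previous identity at $n=\val_S(w)$. The left-hand side becomes $x_{wu}$, while on the right-hand side each term $\tau(\mathbf{x},u_i)(n)$ contributes $x_{wu_i}$ when $wu_i\in L$ and $0$ otherwise. Collecting the surviving terms and pooling together indices $i$ that share the same $u_i$ yields constants $c_{j,u,v}\in\mathbb{L}$, depending only on $j$, $u$, and $v$, that realise the identity $x_{wu}=\sum_{v\in\dom(T_j),\,|v|<h} c_{j,u,v}\, x_{wv}$ required by the definition of $(\mathbb{L},h)$-linearity.

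The step that deserves the most care — and the only place where prefix-closure of $L$ genuinely enters — is the identification of the surviving indices: since $|u_i|<h$, the condition $wu_i\in L$ is equivalent to $u_i\in\dom(T[w,h])=\dom(T_j)$, so the set of contributing $u_i$'s depends only on the type $T_j$ and not on the particular $w$ chosen. This uniformity in $w$ is exactly what the definition demands. Notably, no regularity of $L$ is used anywhere in this argument, which is what makes the proposition usable beyond the regular setting of Theorem~\ref{the:Sreg}, in particular for the rational-base numeration systems studied in the next section.
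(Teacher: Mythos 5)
Your proof is correct and follows essentially the same route as the paper's: both invoke Remark~\ref{rem:gen} to obtain generators $\tau(\mathbf{x},u_i)$ from the kernel itself, set $h=1+\max_i|u_i|$, and evaluate the resulting decomposition of $\tau(\mathbf{x},u)$ at $n=\val_S(w)$, noting that the set of contributing generators is governed solely by the type of $T[w,h]$ because every $|u_i|<h$. The only cosmetic difference is that the paper packages this ``surviving terms'' bookkeeping into the $h$-filtered kernel and Lemma~\ref{lem:fg}, whereas you carry it out directly; both yield constants $c_{j,u,v}$ independent of $w$, as the definition of $(\mathbb{L},h)$-linearity requires.
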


\begin{proof}
  As in the proof of the above lemma, we may assume that $\langle\ker_{S}(\mathbf{x})\rangle_\mathbb{L}$ is generated by a finite number of sequences $\tau(\mathbf{x},u_1),\ldots,\tau(\mathbf{x},u_m)$. Let $h=1+\max_i |u_i|$. By Lemma~\ref{lem:fg}, $\langle\ker_{h,S}(\mathbf{x})\rangle_\mathbb{L}$ is generated by the sequences of the form $\tau(\mathbf{x},u_i,T_j)=\tau(\mathbf{x},u_i)\odot\chi_j$, $i=1,\ldots,m$, $j=1,\ldots,r(h)$. Let $v$ be a word of length $h$ and $j$ be such that $v$ belongs to the domain of $T_j$ (one of the labeled factors of height $h$). Then, $\tau(\mathbf{x},v,T_j)$ is a $\mathbb{L}$-linear combination of the $\tau(\mathbf{x},u_i,T_j)$. Due to the form of the sequences (obtained as a multiplication by $\chi_j$), we can indeed assume that no sequence of the form $\tau(\mathbf{x},u',T_i)$ with $i\neq j$ occurs in the decomposition of $\tau(\mathbf{x},v,T_j)$. For any $n$ such that $\rep_S(n)$ is the root of a factor of the type $T_j$, this means that $x_{\rep_S(n)v}$ is a $\mathbb{L}$-linear combination of the $x_{\rep_S(n)u_i}$'s. Since $|v|=h$, this means that in any factor of height $h$, the decorations of the leaves are linear combinations of decorations of the above nodes (since $|u_i|\le h-1$). We conclude that $T_\mathbf{x}(L)$ is $(\mathbb{L},h)$-linear. 
\end{proof}

In the next proposition, the assumption that the numeration language is regular is important. Also see Remark~\ref{rk: regular lang important assumption}.

\begin{prop}
  Let $S=(L,A,<)$ be an abstract numeration system built on a prefix-closed regular language.  If the  tree $T_\mathbf{x}(L)$ decorated by a sequence $\mathbf{x}$ taking values in $\mathbb{K}$ is $(\mathbb{L},h)$-linear for some $h\ge 1$, then the sequence $\mathbf{x}$ is $(\mathbb{L},S)$-regular.
\end{prop}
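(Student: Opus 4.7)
The plan is to exhibit an explicit finite subset of the $h$-filtered $S$-kernel of $\mathbf{x}$ that generates the whole module $\langle\ker_{h,S}(\mathbf{x})\rangle_\mathbb{L}$, and then to invoke Lemma~\ref{lem:fg}. By Lemma~\ref{lem:h+1}, after possibly enlarging $h$ we may assume that $h$ exceeds the number $N$ of states of the trim minimal automaton $\mathcal{M}_L$ of $L$. Then Remark~\ref{rem:isoTw} identifies each type $T_j$ of height-$h$ factor in $T(L)$ with a unique state $q_j$ of $\mathcal{M}_L$ in such a way that $T[w,h]=T_j$ if and only if $w$ reaches $q_j$ from the initial state. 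I set
$$G := \{\tau(\mathbf{x}, v, T_j) : 1 \leq j \leq r(h),\ v \in A^{<h}\},$$
which is a finite subset of $\ker_{h,S}(\mathbf{x})$.

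I would then prove, by induction on $|u|$, that every sequence $\tau(\mathbf{x}, u, T_j)$ lies in $\langle G \rangle_\mathbb{L}$. The case $|u|<h$ holds by definition of $G$. For the inductive step with $|u|=\ell\geq h$, decompose $u=u_1u_2$ with $|u_2|=h$. If $u_1\notin q_j^{-1}L$, then $\rep_S(n)u\notin L$ for every $n$ with $T[\rep_S(n),h]=T_j$, so $\tau(\mathbf{x},u,T_j)$ is identically zero. Otherwise, the crucial regularity-based observation is that reading $u_1$ from any such $\rep_S(n)$ lands at one and the \emph{same} state $q_k$, namely the state reached from $q_j$ by $u_1$; hence $T[\rep_S(n)u_1,h]=T_k$ uniformly in $n$. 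Applying $h$-linearity to this height-$h$ factor of type $T_k$, with leaf word $u_2\in\dom(T_k)\cap A^h$, produces constants $c_{k,u_2,v}\in\mathbb{L}$ independent of $n$ such that
$$x_{\rep_S(n)u_1u_2} \;=\; \sum_{\substack{v\in\dom(T_k)\\|v|<h}} c_{k,u_2,v}\, x_{\rep_S(n)u_1v}.$$
Reading this pointwise in $n$ (and using that $u_1v\in q_j^{-1}L$, since $v\in q_k^{-1}L$) yields
$$\tau(\mathbf{x},u,T_j) \;=\; \sum_{\substack{v\in\dom(T_k)\\|v|<h}} c_{k,u_2,v}\, \tau(\mathbf{x},u_1v,T_j),$$
and each $u_1v$ has length $|u_1|+|v|<\ell$, so the induction hypothesis applies to every term on the right.

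Once the claim is proved, $\ker_{h,S}(\mathbf{x})\subseteq\langle G\rangle_\mathbb{L}$ with $G$ finite, so $\langle\ker_{h,S}(\mathbf{x})\rangle_\mathbb{L}$ is finitely generated, and Lemma~\ref{lem:fg} then gives that $\langle\ker_S(\mathbf{x})\rangle_\mathbb{L}$ is finitely generated, i.e., $\mathbf{x}$ is $(\mathbb{L},S)$-regular. The main obstacle is ensuring that the type $T_k$ attached to $T[\rep_S(n)u_1,h]$ does not depend on $n$; this uniformity rests entirely on the bijection between types of height-$h$ factors and states of $\mathcal{M}_L$ (Remark~\ref{rem:isoTw}), hence on the regularity of $L$. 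Without regularity, one cannot iterate $h$-linearity through arbitrarily long $u$ in this controlled way, which is precisely the phenomenon that makes the proposition fail in general (cf.\ Remark~\ref{rk: regular lang important assumption}).
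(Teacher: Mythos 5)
Your proposal is correct and takes essentially the same route as the paper's proof: reduce to the $h$-filtered kernel via Lemma~\ref{lem:fg}, enlarge $h$ (Lemma~\ref{lem:h+1}) so that Remark~\ref{rem:isoTw} pins down the type of $T[\rep_S(n)u_1,h]$ uniformly in $n$ as the state $\delta(q_j,u_1)$, and iterate the $h$-linearity relations to write every $\tau(\mathbf{x},u,T_j)$ as an $\mathbb{L}$-combination of the finitely many $\tau(\mathbf{x},v,T_j)$ with $|v|<h$. The only difference is bookkeeping --- you organize the iteration as a strong induction that peels off the last $h$ letters of $u$, whereas the paper lengthens the suffix one letter at a time and re-expands --- and your version is, if anything, the cleaner write-up of the same argument (just note explicitly that the case $u_2\notin\dom(T_k)$, like the case $u_1\notin q_j^{-1}L$, yields the zero sequence).
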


\begin{proof}
    By Lemma~\ref{lem:h+1}, we can take $h$ large enough such that the factors of height $h$ in $T(L)$ are in one-to-one correspondence with the states of the minimal automaton of $L$ (see Remark~\ref{rem:isoTw}) and this correspondence is compatible with the transition function of the automaton, i.e., if a state $q$ corresponds to $T[w,h]$, then $q.a$ corresponds to $T[wa,h]$. So instead of denoting by $T_1,\ldots,T_{r(h)}$ the trees of height $h$ occurring in $T(L)$ we can denote them by their respective root $T_w$.
  
By assumption, $T_\mathbf{x}(L)$ is $(\mathbb{L},h)$-linear for some $h\ge 1$: if $w$ is the root of a tree $T[w,h]$, then for all word $u$ of length $h$ such that $wu\in L$, there exist constants $c_{w,u,v}\in\mathbb{L}$ such that 
\begin{equation}\label{eq:decomp}
x_{wu} = \sum_{\substack{v\in \dom (T_w)\\ |v|<h}} c_{w,u,v}\, x_{wv}.
\end{equation}
If $u$ is such that $wu\not\in L$, then the constants are set to $0$.
Hence  $\tau(\mathbf{x},u,T_w)$ is a $\mathbb{L}$-linear combination of sequences of the form $\tau(\mathbf{x},v,T_w)$ for $|v|<h$. 

We proceed as in Equation~\eqref{eq:iterk} to show that any element of the $h$-filtered kernel can be expressed as a linear combination of sequences of the form $\tau(\mathbf{x},v,T_w)$ for $|v|<h$. Note that there are finitely many such sequences (as there are finitely many trees $T_w$). Let $a\in A$ such that $au$ is a word of length $h+1$ and $wau\in L$. By using Equation~\eqref{eq:decomp} in the first and last equalities, we have 
  \begin{eqnarray*}
    x_{wau} &=& \sum_{\substack{v\in \dom (T_{wa})\\ |v|<h}} c_{wa,u,v}\, x_{wav} \\
                     &=& \sum_{\substack{v\in \dom (T_{wa})\\ |v|<h-1}} c_{wa,u,v}\, x_{wav} + \sum_{\substack{v\in \dom (T_{wa})\\ |v|=h-1}} c_{wa,u,v}\, x_{wav} \\
    &=& \sum_{\substack{v\in \dom (T_{wa})\\ |v|<h-1}} c_{wa,u,v}\, x_{wav} + \sum_{\substack{v\in \dom (T_{wa})\\ |v|=h-1}} c_{wa,u,v}\, \sum_{\substack{z\in \dom (T_w)\\ |z|<h}} c_{w,av,z}\, x_{wz},
  \end{eqnarray*}
  which means that $\tau(\mathbf{x},au,T_w)$ is a $\mathbb{L}$-linear combination of sequences of the form $\tau(\mathbf{x},av,T_{wa})$ for $|av|<h$ and of sequences of the form $\tau(\mathbf{x},z,T_{w})$ for $|z|<h$. Repeating this argument, this shows that the $h$-filtered $S$-kernel is finitely generated by elements of the form $\tau(\mathbf{x},v,T_w)$ for $|v|<h$ and $w\in A^*$. Hence we conclude with Lemma~\ref{lem:fg}.
\end{proof}

\begin{remark}\label{rk: regular lang important assumption}
  We stress out that the assumption on the regularity of the language $L$ appears quite subtly in the above proof. Assuming that $h$ is large enough, when we have a factor $T[w,h]$ occurring in $T(L)$, adding an extra letter $a$ leads to a unique tree $T[wa,h]$. There is no ambiguity because there is an underlying finite automaton recognizing $L$. More precisely, whenever $T[w,h]=T[w',h]$, then $T[wa,h]=T[w'a,h]$. See Lemma~\ref{lem:auto} and Remark~\ref{rem:isoTw}.
  
  In a more general setting such as rational base numeration systems, this is no longer the case. In base $\frac{3}{2}$ for instance, for any height $h$ and any word $w$, the factor $T[w,h]$ appears as the prefix of two distinct trees of height $h+1$. 
For example, see the four trees in the right side of Figure~\ref{fig:ex3}: each tree of height $1$ is the prefix of two distinct trees of height $2$. So $T[w,h]=T[w',h]$ never implies that $T[wa,h]=T[w'a,h]$.

\end{remark}

We summarize the above two propositions as follows.

\begin{theorem}\label{the:Sreg}
  Let $S=(L,A,<)$ be an abstract numeration system built on a prefix-closed regular language. A sequence $\mathbf{x}$ taking values in $\mathbb{K}$ is $(\mathbb{L},S)$-regular sequence if and only if the decorated tree $T_\mathbf{x}(L)$ is $(\mathbb{L},h)$-linear for some $h\ge 1$.
\end{theorem}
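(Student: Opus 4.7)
The plan is to observe that Theorem~\ref{the:Sreg} is nothing more than the conjunction of the two propositions proved immediately above, so the proof reduces to assembling them and checking that no extra hypothesis is needed beyond what each half requires.

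First I would handle the direct implication. Assume $\mathbf{x}$ is $(\mathbb{L},S)$-regular, that is, $\langle\ker_S(\mathbf{x})\rangle_\mathbb{L}$ is finitely generated. Applying the first of the two propositions directly yields that $T_\mathbf{x}(L)$ is $(\mathbb{L},h)$-linear for some $h\ge 1$. I would point out that this half does not actually use the regularity of~$L$; only the prefix-closed assumption is needed, and this is already reflected in the statement of that proposition.

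For the converse, assume $T_\mathbf{x}(L)$ is $(\mathbb{L},h)$-linear for some $h\ge 1$. The second of the two propositions applies and concludes that $\mathbf{x}$ is $(\mathbb{L},S)$-regular. Here the regularity of $L$ is used in an essential way, since the proof invoked Remark~\ref{rem:isoTw} to match factors of height~$h$ with states of the minimal automaton $\mathcal{M}_L$, ensuring that $T[w,h]=T[w',h]$ implies $T[wa,h]=T[w'a,h]$, as underlined in Remark~\ref{rk: regular lang important assumption}.

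The main subtlety, such as it is, is a bookkeeping point: I should make explicit in the write-up that in the forward direction the value of $h$ produced by the first proposition is bounded by $1+\max_i |u_i|$ for some finite family of generating sequences $\tau(\mathbf{x},u_i)$, while in the backward direction the value of $h$ may need to be enlarged via Lemma~\ref{lem:h+1} in order to put the factors of height~$h$ in bijection with the states of $\mathcal{M}_L$. Since $h$-linearity is preserved under increasing $h$, this enlargement is harmless, and the equivalence of the two properties stands. No further argument is needed.
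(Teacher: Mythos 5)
Your proposal is correct and matches the paper exactly: the theorem is stated there as a summary of the two preceding propositions, so the proof is precisely their conjunction, with the regularity of $L$ entering only in the converse direction as you note. No further comment is needed.
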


\begin{remark}\label{rem:hadamard}
Under the conditions of the above theorem, as a consequence of \cite[Theorem 29]{CCS}, the series $S=\sum_{w\in L} x_w\, w$ is recognizable. Note that since $L$ is a regular language, the series $S_j=\sum_{w\in L} \chi_j(\val_S(w))\, w$ is the characteristic series of a regular language $L_j = \{ w\in L \mid T[w,h]=T_j \}$. Hence, the Hadamard product $S \odot S_j$ of these two series is again recognizable.
\end{remark}

\section{What is a $\frac{p}{q}$-regular sequence\protect\footnote{The idea to start this paper came after listening to a talk ``{\em Avoiding fractional powers on an infinite alphabet}'' given by E. Rowland \cite{RowlandSti}.}?}\label{sec: p/q-regularite}

Let $\mathbf{s}$ denote the sum-of-digits sequence in the base-$\frac32$ numeration system $S=(L_{\frac{3}{2}},\{0,1,2\},<)$ (defined in Section~\ref{sec: intro}).
As mentioned in the introduction, the $3$-kernel of $\mathbf{s}$ given by Equation~\eqref{eq:kker} does not seem to provide any linear relationships.  We give here some more details about the sets of subsequences that can be associated with~$\mathbf{s}$.
The Mathematica package {\sc IntegerSequences} developed by E.~Rowland \cite{IS} implements a procedure for guessing $k$-regular sequences as described by Shallit \cite{Sha}. But it does not find any $2$-, $3$- nor $6$-regularity for the sequence~$\mathbf{s}$.

Let now us have a look at some elements $\tau(\mathbf{s},u)$ of the $S$-kernel of the sequence~$\mathbf{s}$. We will show that $S$-kernel of $\mathbf{s}$ given by Equation~\eqref{eq:sker} is not finitely generated.
$$\begin{array}{r|l}
u\in \{0,1,2\}^* & \tau(\mathbf{s},u) \\
\hline
\varepsilon & 0, 2, 3, 3, 5, 4, 5, 7, 5, 5, 7, 8, 5, 7, 6, 7, 9, 9, 5, 7, 8 ,\ldots \\
0 & 0, 0, 3, 0, 5, 0, 5, 0, 5, 0, 7, 0, 5, 0, 6, 0, 9, 0, 5, 0, 8 ,\ldots \\
1 & 0, 3, 0, 4, 0, 5, 0, 8, 0, 6, 0, 9, 0, 8, 0, 8, 0, 10, 0, 8, 0 ,\ldots \\
10 & 0, 3, 0, 0, 0, 5, 0, 0, 0, 6, 0, 0, 0, 8, 0, 0, 0, 10, 0, 0, 0 ,\ldots \\
01 & 0, 0, 4, 0, 0, 0, 6, 0, 0, 0, 8, 0, 0, 0, 7, 0, 0, 0, 6, 0, 0 ,\ldots \\
11 & 0, 0, 0, 5, 0, 0, 0, 9, 0, 0, 0, 10, 0, 0, 0, 9, 0, 0, 0, 9, 0 ,\ldots \\
00 & 0, 0, 0, 0, 5, 0, 0, 0, 5, 0, 0, 0, 5, 0, 0, 0, 9, 0, 0, 0, 8 ,\ldots \\
     000& 0, 0, 0, 0, 0, 0, 0, 0, 5, 0, 0, 0, 0, 0, 0, 0, 9, 0, 0, 0, 0 ,\ldots \\
          \end{array}
          $$
Note that all suffixes $u$ are not considered in the previous table since some can be obtained from others. For example, \[
\tau(\mathbf{s},2)=2, 0, 5, 0, 7, 0, 7, 0, 7, 0, 9, 0, 7, 0, 8, 0, 11, 0, 7, 0, 10, \ldots
\]
can be expressed as $\tau(\mathbf{s},2)=\tau(\mathbf{s},0)+(20)^\omega$ and $\tau(\mathbf{s},21)=\tau(\mathbf{s},00)+(3000)^\omega$. 
Also observe that, for any $u\in \{0,1,2\}^*$, the sequence $(\tau(\mathbf{s},u)(n))_{n\ge 1}$ has positive values separated by $2^{|u|}-1$ zeroes because there are $2^{|u|}$ types of factors of height~$|u|$ and $u$ belongs to the domain of exactly one of them.

          With basic knowledge about the numeration system $S$, it is not difficult to see that $\ker_S(\mathbf{s})$ contains a sequence having non zero terms only for all positive positions being congruent to $r$ mod $2^j$ for all $j\ge 0$ and all $0\le r<2^j$.
          Since the characteristic sequences of the sets $\mathbb{N}$, $2\mathbb{N}$, $4\mathbb{N}$, $4\mathbb{N}+1$, \ldots, $2^j\mathbb{N}$,\ldots, $2^j\mathbb{N}+2^{j-1}-1$ are linearly independent, then $\langle \ker_S(\mathbf{s})\rangle$ cannot be finitely generated.

As a conclusion, no reasonable kernel associated with $\mathbf{s}$ seems to be finitely generated. In view of Theorem~\ref{the:Sreg}, instead of considering a definition based on the kernel, we therefore propose the following.

          \begin{definition}\label{def:Sregular}
Let $p$ and $q$ be two relatively prime integers with $p > q > 1$.
            A sequence $\mathbf{x}$ taking values in $\mathbb{K}$ is {\em $(\mathbb{L},\frac{p}{q})$-regular} whenever the decorated tree $T_\mathbf{x}(L_{\frac{p}{q}})$ is $(\mathbb{L},h)$-linear for some $h\ge 1$.
          \end{definition}

Due to Theorems~\ref{the:kautom} and~\ref{the:Sreg}, Definition~\ref{def:Sregular} indeed generalizes $k$-regular and $S$-regular sequences in the usual sense.

          We now give some closure properties of $(\mathbb{L},\frac{p}{q})$-regular sequences.

\begin{prop}\label{pro:module}
The set of $(\mathbb{L},\frac{p}{q})$-regular sequences is a $\mathbb{L}$-module.
\end{prop}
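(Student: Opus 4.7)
The plan is to verify the two module axioms. Closure under scalar multiplication will be immediate: if $T_\mathbf{x}(L_\frac{p}{q})$ is $(\mathbb{L},h)$-linear with constants $c_{j,u,v}$ and $\lambda\in\mathbb{L}$, then multiplying each defining relation $x_{wu}=\sum_v c_{j,u,v}x_{wv}$ by $\lambda$ shows that $T_{\lambda\mathbf{x}}(L_\frac{p}{q})$ is $(\mathbb{L},h)$-linear with the very same constants.

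For closure under addition, suppose $\mathbf{x}$ is $(\mathbb{L},h_\mathbf{x})$-linear and $\mathbf{y}$ is $(\mathbb{L},h_\mathbf{y})$-linear. By Lemma~\ref{lem:h+1}, both become $(\mathbb{L},h)$-linear for $h=\max(h_\mathbf{x},h_\mathbf{y})$. One cannot expect $\mathbf{z}:=\mathbf{x}+\mathbf{y}$ to be $(\mathbb{L},h)$-linear at the same height, since the coefficients of the defining relations for $\mathbf{x}$ and $\mathbf{y}$ typically differ. The plan is instead to exhibit some $h'\ge h$ at which $\mathbf{z}$ becomes $(\mathbb{L},h')$-linear, by combining iterated $h$-linearity with a Noetherian stabilization argument inside a fixed finite-rank module.

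The key technical step will be a straightforward induction on $|u|$ based on the factorization $u=u_1u_2$ with $|u_2|=h$ and the $h$-linearity applied at the node $wu_1$. This will show that for every type $T'_{j'}$ of any height $h'\ge h$ extending a given height-$h$ type $T_j$, and every $u\in\dom(T'_{j'})$, there exist constants $\alpha^{\mathbf{x}}_{T'_{j'},u,v},\alpha^{\mathbf{y}}_{T'_{j'},u,v}\in\mathbb{L}$ (indexed by $v\in\dom(T_j)\cap A^{<h}$ and depending only on $T'_{j'}$ and $u$) such that $x_{wu}=\sum_v\alpha^{\mathbf{x}}_{T'_{j'},u,v}\,x_{wv}$ and $y_{wu}=\sum_v\alpha^{\mathbf{y}}_{T'_{j'},u,v}\,y_{wv}$ for every $w\in L_\frac{p}{q}$ with $T[w,h']=T'_{j'}$. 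Writing $d_j^<:=|\dom(T_j)\cap A^{<h}|$ and $\vec{\alpha}^{\mathbf{z}}_{T'_{j'},u}:=(\alpha^{\mathbf{x}}_{T'_{j'},u,\cdot},\alpha^{\mathbf{y}}_{T'_{j'},u,\cdot})\in\mathbb{L}^{2d_j^<}$, the $(\mathbb{L},h')$-linearity of $\mathbf{z}$ at $T'_{j'}$ reduces to the membership of $\vec{\alpha}^{\mathbf{z}}_{T'_{j'},u}$ for $|u|=h'$ in the $\mathbb{L}$-span of $\{\vec{\alpha}^{\mathbf{z}}_{T'_{j'},v'}:|v'|<h'\}$. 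For each infinite descending branch in $T(L_\frac{p}{q})$ through a fixed $w$ with $T[w,h]=T_j$, the ascending chain of such upper-level spans sits inside the Noetherian module $\mathbb{L}^{2d_j^<}$ and therefore stabilizes; at and beyond the stabilization height, any newly included level-$h'$ vector already lies in the stabilized span, which is exactly the desired linear relation.

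The principal obstacle will be promoting this branch-by-branch stabilization to a \emph{single} uniform height $h'$ valid simultaneously for all height-$h$ root types $T_j$ (finitely many, since $r(h)<\infty$) and for all their extensions (a priori infinitely many, as the language $L_\frac{p}{q}$ is non-regular and no finite automaton organizes its subtree extensions). Handling this will require combining the uniform rank bound $2d_j^<$ on each ambient module, the finiteness of $r(h)$, and Noetherianity of $\mathbb{L}$ to extract a single globally valid stabilization height that works across the entire tree at once.
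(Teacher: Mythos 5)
Your treatment of scalar multiplication is fine and is all the paper offers for that half. For the sum, you should know that the paper's entire proof is precisely the one-line argument you reject: it sets $H=\max(h,h')$, invokes Lemma~\ref{lem:h+1} to make both trees $H$-linear, and then asserts ``so is the tree $T_{\mathbf{x}+\mathbf{y}}(L_{\frac{p}{q}})$''. You are right that this does not follow: the constants witnessing $H$-linearity of $\mathbf{x}$ and of $\mathbf{y}$ are in general different, and $\sum_v c_{j,u,v}\,x_{wv}+\sum_v c'_{j,u,v}\,y_{wv}$ need not equal $\sum_v e_{j,u,v}(x_{wv}+y_{wv})$ for constants $e$ independent of $w$. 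So you have correctly located a real gap in the paper's own argument, and your intermediate steps (iterated $h$-linearity expressing every $x_{wu}$ as a fixed combination of the $x_{wv}$ with $|v|<h$, and the sufficiency of the membership condition on paired coefficient vectors) are sound.

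The difficulty is the step you yourself flag as the ``principal obstacle'': it is only announced, not carried out, and in fact it cannot be carried out. Noetherianity stabilizes the chain of spans along each individual chain of successively refined types, but there are $q^{h'}$ types of height $h'$, badness is not inherited along such chains, and a bad type can occur at every height on pairwise incomparable chains, so no K\"onig-type or rank-counting argument extracts a uniform $h'$. Concretely, in base $\frac32$ take $\mathbf{x}=(1)_{n\ge 0}$ and $\mathbf{y}=(2^{|\rep_{\frac32}(n)|_0})_{n\ge 0}$; both decorate $(\mathbb{Q},1)$-linear trees (for $\mathbf{y}$ the constant is $2$ on an edge labelled $0$ and $1$ otherwise). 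For every $h'\ge 1$, all $w$ with $\val_{\frac32}(w)\equiv 2^{h'-1}-1\pmod{2^{h'}}$ root the height-$h'$ factor with domain $\{1^k : 0\le k\le h'-1\}\cup\{1^{h'-1}0,\,1^{h'-1}2\}$; there the leaf $1^{h'-1}0$ carries $1+2^{|w|_0+1}$ while every internal node carries $1+2^{|w|_0}$, and since $|w|_0$ takes at least two values on that residue class (each level of $T(L_{\frac32})$ consists of consecutive integers, so the class meets a positive proportion of each level, while only polynomially many words of each length have a bounded number of zeros), no constants independent of $w$ can realize the required relation. Hence, with the constants required to be uniform in $w$ as the definition and all later sections demand, the sum case needs either a modified definition or a genuinely new idea that neither the paper's proof nor your proposal supplies.
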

\begin{proof}
We have to show that the set of $(\mathbb{L},\frac{p}{q})$-regular is closed under sum and multiplication by a constant.
The case of the multiplication by a constant is clear, so we only prove the case of the sum.
Let $\mathbf{x}$ and $\mathbf{y}$ be two $(\mathbb{L},\frac{p}{q})$-regular sequences.
Assume that the trees $T_\mathbf{x}(L_{\frac{p}{q}})$ and $T_\mathbf{y}(L_{\frac{p}{q}})$ are respectively $h$-linear and $h'$-linear for some $h,h'\ge 1$.
Let $H=\max(h,h')$.
Then both trees are $H$-linear by Lemma~\ref{lem:h+1}, so is the tree $T_{\mathbf{x}+\mathbf{y}}(L_{\frac{p}{q}})$.
Therefore the sequence $\mathbf{x}+\mathbf{y}$ is $(\mathbb{L},\frac{p}{q})$-regular.
\end{proof}

\subsection{Guessing $h$-linearity}
As in \cite[Section~6]{AS2},  a practical procedure may be applied to often succeed in deducing the $\frac{p}{q}$-regularity of a sequence by inspecting its first few terms.

Consider a long enough prefix of the sequence $\mathbf{x}=(x_n)_{n\ge 0}$ of interest. 
If we know or can compute $L$ terms of the sequence, we are thus able to decorate the nodes on the first $\ell$ levels of the tree $T(L_\frac{p}{q})$ for some $\ell \ge 0$. Fix an integer $h\ge 1$. We would like to conjecture the $h$-linearity of this tree by providing convenient candidates for linear relations occurring for the decorations of subtrees of height $h$.

Except for the prefix $T[\varepsilon,h]$, it is well-known that there are $q^h$ non-isomorphic (undecorated) subtrees of height $h$~\cite[Lemme 4.14]{Marsault-these}. More precisely, two factors $T[u,h]$ and $T[v,h]$ are equal if and only if $\val_{\frac{p}{q}}(u)\equiv\val_{\frac{p}{q}}(v)\pmod{q^h}$.

There are $p^h$ linear relations to guess because this is the total number of leaves of the $q^h$ pairwise distinct (undecorated) trees. Let $R\in\{0,\ldots,q^h-1\}$ be a remainder. Consider the finite collection of subtrees $T[u_j,h]$, $j\in J_R$, that can be extracted from the first $\ell$ levels of the tree $T(L_\frac{p}{q})$ and such that $\val_{\frac{p}{q}}(u_j)\equiv R\pmod{q^h}$. These trees are equal as undecorated trees. Let us say that they have $k_R$ leaves and $i_R$ internal nodes. In particular, $k_0+k_1+\cdots+k_{q^h-1}=p^h$. For each $j\in J_R$, we order by breadth-first traversal the nodes of $T[u_j,h]$ and we enumerate the corresponding decorations
$$\underbrace{x_{n_{j,1}},\ldots,x_{n_{j,i_R}}}_{\text{internal nodes}},\underbrace{x_{n_{j,i_R+1}},\ldots,x_{n_{j,i_R+k_R}}}_{\text{leaves}},$$
which is a subsequence of the sequence $(x_n)_{n\ge 0}$. 
For each $j\in J_R$ and for each $t\in\{1,\ldots,k_R\}$, we set up an equation:
$$\alpha_{R,t,1}\, x_{n_{j,1}}+ \cdots +\alpha_{R,t,i_R}\, x_{n_{j,i_R}}= x_{n_{j,i_R+t}},$$ where the coefficients $\alpha_{R,t,i}$ do not depend on $j$.
As $j$ varies in $J_R$, we get a tree (of the same type) with other decorations because the root is different. Assume the $h$-linearity of the decorated tree. If we are interested in the same leaf in every tree, the same linear relation should be satisfied. Therefore, when $j$ varies, we have a system of linear equations whose unknowns are $\alpha_{R,t,1},\ldots,\alpha_{R,t,i_R}$. If $\#J_R\ge i_R$ and the system has no solution, then the tree cannot be $h$-linear and one can test $(h+1)$-linearity. Otherwise, we could conjecture a linear relation expressing the decoration of the considered leaf in terms of the decorations of the internal nodes. Note that we have such a system for each leaf of every one of the $q^h$ non-isomorphic subtrees of height~$h$.

\begin{example}
As an example, consider the sequence of squares $(n^2)_{n\ge 0}$. Let $\frac{p}{q}=\frac{3}{2}$, $h=3$ and $R=2$. For this remainder~$R$, the shape of the tree is given in Figure~\ref{fig:8n+2}.
\begin{figure}[h!tb]
  \center
\tikzset{
  s_bla/.style = {circle,fill=white,thick, draw=black, inner sep=0pt, minimum size=5pt},
  s_red/.style = {circle,black,fill=gray,thick, inner sep=0pt, minimum size=5pt}
}
\begin{tikzpicture}[->,>=stealth',level/.style={sibling distance = 1.5cm/#1},level distance = 1cm]
  \node [s_bla] {} 
      child {node [s_bla] {} 
        child {node [s_bla] {} 
          child {node [s_red] {} edge from parent node[left] {$1$} }
          edge from parent node[left] {$1$} }
      edge from parent node[left] {$0$} }
    child {node [s_bla] {} 
      child {node [s_bla] {}
        child {node [s_red] {} edge from parent node[left] {$0$}}
        child {node [s_red] {} edge from parent node[right] {$2$}}
        edge from parent node[left] {$0$}} 
      child {node [s_bla] {}
        child {node [s_red] {} edge from parent node[right] {$1$} }
        edge from parent node[right] {$2$}} 
      edge from parent node[right] {$2$}}
;
\end{tikzpicture}\quad
 \tikzset{
   s_bla/.style = {circle,fill=gray!30, inner sep=1pt, minimum size=20pt},
   s_red/.style = {circle,fill=gray!30, inner sep=1pt, minimum size=20pt}
}
\begin{tikzpicture}[->,>=stealth',level/.style={sibling distance = 2.5cm/#1},level distance = 1.4cm]
  \node [s_bla] {$4$} 
      child {node [s_bla] {$9$} 
        child {node [s_bla] {$25$} 
          child {node [s_red] {$64$} edge from parent node[left] {$1$} }
          edge from parent node[left] {$1$} }
      edge from parent node[left] {$0$} }
    child {node [s_bla] {$16$} 
      child {node [s_bla] {$36$}
        child {node [s_red] {$81$} edge from parent node[left] {$0$}}
        child {node [s_red] {$100$} edge from parent node[right] {$2$}}
        edge from parent node[left] {$0$}} 
      child {node [s_bla] {$49$}
        child {node [s_red] {$121$} edge from parent node[right] {$1$} }
        edge from parent node[right] {$2$}} 
      edge from parent node[right] {$2$}}
;
\end{tikzpicture}
\caption{A type of tree of height $3$ and one occurrence.}\label{fig:8n+2}
\end{figure}
We encounter such a tree with the decorations (in breadth-first traversal) given in Table~\ref{tab:alval} where each line corresponds to an occurrence of such a decorated factor. We have written the  first eight occurrences but we could take more (this would increase the number of equations and thus of possible constraints). Table~\ref{tab:alval} has two columns: the first six values are the decorations of the internal nodes ($i_2=6$), and the last four values are the decorations of the leaves ($k_2=4$).
\begin{table}[h!t]
  \centering
  $$\begin{array}{cccccc|cccc}
      x_{n_j,1} & x_{n_j,2} &x_{n_j,3} &x_{n_j,4} &x_{n_j,5} &x_{n_j,6} &x_{n_j,7} &x_{n_j,8} &x_{n_j,9} &x_{n_j,10} \\
      \hline
 4 & 9 & 16 & 25 & 36 & 49 & 64 & 81 & 100 & 121 \\
 100 & 225 & 256 & 529 & 576 & 625 & 1225 & 1296 & 1369 & 1444 \\
 324 & 729 & 784 & 1681 & 1764 & 1849 & 3844 & 3969 & 4096 & 4225 \\
 676 & 1521 & 1600 & 3481 & 3600 & 3721 & 7921 & 8100 & 8281 & 8464 \\
 1156 & 2601 & 2704 & 5929 & 6084 & 6241 & 13456 & 13689 & 13924 & 14161 \\
    1764 & 3969 & 4096 & 9025 & 9216 & 9409 & 20449 & 20736 & 21025 & 21316 \\
    2500 & 5625 & 5776 & 12769 & 12996 & 13225 & 28900 & 29241 & 29584 & 29929 \\
      3364 & 7569 & 7744 & 17161 & 17424 & 17689 & 38809 & 39204 & 39601 & 40000
  \end{array}$$
  \caption{Decorations of some subtrees.}
  \label{tab:alval}
\end{table}
The goal is to conjecture relations for all four leaves.
If we want to do so for the first leaf, we consider the linear system
$$\left(\begin{array}{cccccc}
 4 & 9 & 16 & 25 & 36 & 49 \\
 100 & 225 & 256 & 529 & 576 & 625 \\
 324 & 729 & 784 & 1681 & 1764 & 1849 \\
 676 & 1521 & 1600 & 3481 & 3600 & 3721 \\
 1156 & 2601 & 2704 & 5929 & 6084 & 6241 \\
 1764 & 3969 & 4096 & 9025 & 9216 & 9409 \\
 2500 & 5625 & 5776 & 12769 & 12996 & 13225 \\
 3364 & 7569 & 7744 & 17161 & 17424 & 17689 \\
        \end{array}\right)
      \begin{pmatrix}
        \alpha_{2,1,1}\\
        \alpha_{2,1,2}\\
        \alpha_{2,1,3}\\
        \alpha_{2,1,4}\\
        \alpha_{2,1,5}\\
        \alpha_{2,1,6}\\
      \end{pmatrix}=
      \begin{pmatrix}
   64 \\
 1225 \\
 3844 \\
 7921 \\
 13456 \\
 20449 \\
 28900 \\
 38809 \\
      \end{pmatrix},
      $$
      where we used the values in the column $x_{n_j,7}$ of Table~\ref{tab:alval} corresponding to the first leaf.
      A solution is given by $ \alpha_{2,1,1}=
        \alpha_{2,1,2}=
        \alpha_{2,1,3}=0$, $\alpha_{2,1,4}=\alpha_{2,1,5}=5/4$, and $\alpha_{2,1,6}=-1/4$.  The solution remains valid when increasing the number of equations (for instance, taking the 20 first occurrences of the subtree). By replacing the column vector in the right-hand side with one of the last tree columns in Table~\ref{tab:alval} (they thus correspond to the last three leaves in Figure~\ref{fig:8n+2}), we get similar solutions and we conjecture the following values for the coefficients: 
\begin{tabular}{l}
        $ \alpha_{2,2,j}=0$ if $j\neq 5$ and $\alpha_{2,2,5}=9/4$;
\\
         $ \alpha_{2,3,1}=
        \alpha_{2,3,2}=
        \alpha_{2,3,3}=0$, $ \alpha_{2,3,4}=-1/4$, and $\alpha_{2,3,5}=\alpha_{2,3,6}=5/4$;
\\
         $ \alpha_{2,4,1}=
        \alpha_{2,4,2}=
        \alpha_{2,4,3}=0$, $ \alpha_{2,4,4}=1/2$, $\alpha_{2,4,5}=-7/4$, and $\alpha_{2,4,6}=7/2$.
\end{tabular}

In order to obtain the $h$-linearity of the tree, we have to do the same with the seven other remainders modulo $8$ and thus seven other types of trees with a total of $27-4=23$ other leaves. Having such conjectures, these relations are easy to prove. Indeed, let us take an example and let us prove the first relation we conjectured above. Recalling that
        \begin{equation}
          \label{eq:numsys}
\val_{\frac{p}{q}}(uv)=\val_{\frac{p}{q}}(u)\left(\frac{p}{q}\right)^{|v|}+\val_{\frac{p}{q}}(v),          
        \end{equation}
if $u$ is the $\frac{3}{2}$-representation of $2+8n$, then the conjectured relation becomes
\begin{align*}
&\alpha_{2,1,1}\, x_{\val_{\frac32}(u)} + \cdots +\alpha_{2,1,6}\, x_{\val_{\frac32}(u022)}  = x_{\val_{\frac32}(u011)}  \\
&\Leftrightarrow
5(\underbrace{\val_{\frac32}(u01)}_{=18n+5})^2+5(\underbrace{\val_{\frac32}(u20)}_{=18n+6})^2-(\underbrace{\val_{\frac32}(u22)}_{=18n+7})^2=4(\underbrace{\val_{\frac32}(u011)}_{=27n+8})^2
\end{align*}
and holds since $(\val_{\frac32}(u011))^2= 729 n^2+432 n+64$.
      \end{example}

      \begin{theorem}
        The sequence $(n^2)_{n\ge 0}$ is $(\mathbb{Q},\frac32)$-regular.
      \end{theorem}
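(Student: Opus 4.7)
The plan is to verify Definition~\ref{def:Sregular} with $h=3$ by extending the guessing computation carried out for $R=2$ above to all eight non-prefix subtree types and to the prefix. First I would parametrize: fix $R\in\{0,1,\ldots,7\}$ and a subtree $T[w,3]$ of type $R$, so that $\val_{\frac{3}{2}}(w)=8m+R$ for some $m\ge 0$. By equation~\eqref{eq:numsys}, for every word $u$ with $wu\in L_{\frac{3}{2}}$ and $|u|\le 3$, we have
\[
    \val_{\frac{3}{2}}(wu)=A_u\,m+B_u,\qquad A_u=2^{3-|u|}\,3^{|u|}\in\mathbb{N},
\]
where $B_u=\val_{\frac{3}{2}}(wu)-A_u m\in\mathbb{Z}$ depends only on $R$ and $u$, not on $m$. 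Consequently the decoration $x_{\val_{\frac{3}{2}}(wu)}=(A_u m+B_u)^2$ is a polynomial of degree $2$ in $m$ with $\mathbb{Z}$-coefficients determined by $R$ and $u$.

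All such decorations lie in the $3$-dimensional $\mathbb{Q}$-vector space $V=\mathbb{Q}\langle 1,m,m^2\rangle$. The crux is to show that for each of the eight non-prefix types one can choose three internal nodes $u_1^{(R)},u_2^{(R)},u_3^{(R)}$ whose associated quadratics form a $\mathbb{Q}$-basis of $V$. A Vandermonde-like computation yields
\[
    \det\begin{pmatrix} A_{u_1}^2 & 2A_{u_1}B_{u_1} & B_{u_1}^2 \\ A_{u_2}^2 & 2A_{u_2}B_{u_2} & B_{u_2}^2 \\ A_{u_3}^2 & 2A_{u_3}B_{u_3} & B_{u_3}^2 \end{pmatrix} = 2\,(A_{u_1}A_{u_2}A_{u_3})^2\prod_{i<j}\bigl(\tfrac{B_{u_j}}{A_{u_j}}-\tfrac{B_{u_i}}{A_{u_i}}\bigr),
\]
so linear independence reduces to pairwise distinctness of the ratios $B_{u_i}/A_{u_i}$. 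Once such a basis is fixed for each $R$, every leaf decoration $(A_u m+B_u)^2$ can be uniquely written as a $\mathbb{Q}$-linear combination of the three basis decorations with coefficients $c_{R,u,u_i}$ depending only on $R$ and the leaf position $u$, yielding $x_{wu}=\sum_{i=1}^{3}c_{R,u,u_i}\,x_{wu_i}$ for every occurrence $w$ of a type-$R$ subtree. The prefix subtree $T[\varepsilon,3]$ is unique of its type, so any $\mathbb{Q}$-linear expression of its leaves in terms of its internal decorations suffices and can be obtained by solving one linear system.

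The main hurdle is the finite case analysis verifying that, in each of the eight non-prefix types, a triple of internal nodes with pairwise distinct ratios $B_u/A_u$ exists; this is driven by the periodic signature $(02,1)^\omega$ of $T(L_{\frac{3}{2}})$, and the worked example for $R=2$ is typical. In potentially degenerate cases (e.g., where several low-level internal nodes happen to share a common ratio $B_u/A_u$, as occurs when $\val_{\frac{3}{2}}(u)=0$), different internal nodes can be selected at $h=3$, and if needed one may pass to $h=4$ via Lemma~\ref{lem:h+1}, since the polynomial-space dimension remains fixed at $3$ while the pool of internal nodes grows. Combining these cases establishes the $(\mathbb{Q},3)$-linearity of $T_{(n^2)}(L_{\frac{3}{2}})$, and hence the theorem.
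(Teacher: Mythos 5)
Your proposal is correct, but it proves the theorem by a genuinely different and more conceptual route than the paper. The paper simply exhibits the $27$ explicit linear identities (one per leaf of each of the eight residue classes modulo $8$) and notes that each can be verified from Equation~\eqref{eq:numsys}; it never explains \emph{why} such relations must exist. You instead observe that for a type-$R$ factor rooted at $w$ with $\val_{\frac32}(w)=8m+R$, every node value is affine in $m$ with slope $A_u=2^{3-|u|}3^{|u|}$ and intercept $B_u$ depending only on $(R,u)$, so all decorations live in the $3$-dimensional space of quadratics in $m$; the Vandermonde-type determinant then reduces everything to finding, for each $R$, three internal nodes with pairwise distinct ratios $B_u/A_u$. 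This is essentially the paper's own strategy for its \emph{next} theorem (the $(n^d)$ result for expanding bases), but your version repairs the feature that blocks that theorem in base $\frac32$: by drawing the three internal nodes from \emph{different levels} rather than from a single level of consecutive values, you handle the non-expanding chain $R\equiv 7\pmod 8$ (domain $\{\varepsilon,1,11,111\}$) uniformly, which is exactly the case the paper singles out with its relation~\eqref{eq:unique}. Two small caveats. First, the finite verification you defer is genuinely needed and should be recorded: running through $R=0,\dots,7$ one finds the internal ratios are respectively $\{0,\frac1{12},\frac1{18},\frac19\}$, $\{\frac18,\frac16,\frac29\}$, $\{\frac14,\frac13,\frac5{18},\frac7{18}\}$, $\{\frac38,\frac5{12},\frac49\}$, $\{\frac12,\frac7{12},\frac59,\frac{11}{18}\}$, $\{\frac58,\frac23,\frac{13}{18}\}$, $\{\frac34,\frac56,\frac79,\frac89\}$, $\{\frac78,\frac{11}{12},\frac{17}{18}\}$, so each class does contain three pairwise distinct ratios and the fallback to $h=4$ is never invoked. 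Second, if that fallback were needed, Lemma~\ref{lem:h+1} would not be the right tool (it propagates linearity from $h$ to $h+1$, not the other way); the correct justification is simply to rerun your argument at height $4$, where the pool of internal nodes is larger while the ambient space of quadratics still has dimension $3$. With the eight-case check written out, your argument is complete and, in my view, more transparent than the paper's list of identities.
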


      \begin{proof}
        The result follows from the next $27$ relations that can be easily deduced from Equation~\eqref{eq:numsys}.
        We start with the relations found in the above example. Note that for two distinct remainders, the corresponding trees have disjoint domains. If $\val_{\frac32}(u)\equiv 2\pmod{8}$,
        $$
        \begin{array}{rcl}
            5(\val_{\frac32}(u01))^2+5(\val_{\frac32}(u20))^2-(\val_{\frac32}(u22))^2&=&4(\val_{\frac32}(u011))^2  \\    9(\val_{\frac32}(u20))^2&=&4(\val_{\frac32}(u200))^2\\
        -(\val_{\frac32}(u01))^2+5(\val_{\frac32}(u20))^2+5(\val_{\frac32}(u22))^2&=&4(\val_{\frac32}(u202))^2\\
        2(\val_{\frac32}(u01))^2-7(\val_{\frac32}(u20))^2+14(\val_{\frac32}(u22))^2&=&4(\val_{\frac32}(u202))^2.\\
        \end{array}$$
        If $\val_{\frac32}(u)\equiv 3\pmod{8}$,
        $$
        \begin{array}{rcl}
         9(\val_{\frac32}(u11))^2&=&4(\val_{\frac32}(u110))^2\\
         189(\val_{\frac32}(u))^2-345(\val_{\frac32}(u1))^2+161(\val_{\frac32}(u1))^2&=&20(\val_{\frac32}(u112))^2.\\
        \end{array}$$
        If $\val_{\frac32}(u)\equiv 4\pmod{8}$,
        $$
        \begin{array}{rcl}
          5(\val_{\frac32}(u00))^2+5(\val_{\frac32}(u02))^2-(\val_{\frac32}(u21))^2&=&4(\val_{\frac32}(u001))^2\\
          9(\val_{\frac32}(u02))^2&=&4(\val_{\frac32}(u020))^2\\
          -(\val_{\frac32}(u00))^2+5(\val_{\frac32}(u02))^2+5(\val_{\frac32}(u21))^2&=&4(\val_{\frac32}(u022))^2\\
            2(\val_{\frac32}(u00))^2-7(\val_{\frac32}(u02))^2+14(\val_{\frac32}(u21))^2&=&4(\val_{\frac32}(u211))^2.\\
        \end{array}$$
         If $\val_{\frac32}(u)\equiv 5\pmod{8}$,
        $$
        \begin{array}{rcl}
          9(\val_{\frac32}(u10))^2&=&4(\val_{\frac32}(u100))^2\\
          -162(\val_{\frac32}(u))^2+119(\val_{\frac32}(u10))^2+102(\val_{\frac32}(u12))^2&=&84(\val_{\frac32}(u102))^2\\
                   324(\val_{\frac32}(u))^2-175(\val_{\frac32}(u10))^2+300(\val_{\frac32}(u12))^2&=&84(\val_{\frac32}(u121))^2.\\
        \end{array}$$
         If $\val_{\frac32}(u)\equiv 6\pmod{8}$,
        $$
        \begin{array}{rcl}
          9 (\val_{\frac32}(u01))^2  &=& 4 (\val_{\frac32}(u010))^2 \\
          2 (\val_{\frac32}(u01))^2 + 8 (\val_{\frac32}(u20))^2 - (\val_{\frac32}(u22))^2 &=& 4 (\val_{\frac32}(u012))^2 \\
          - (\val_{\frac32}(u01))^2 + 8 (\val_{\frac32}(u20))^2 +2 (\val_{\frac32}(u22))^2 &=& 4 (\val_{\frac32}(u201))^2 \\
          9 (\val_{\frac32}(u22))^2 &=& 4 (\val_{\frac32}(u220))^2 \\
          5 (\val_{\frac32}(u01))^2 - 16 (\val_{\frac32}(u20))^2 +20 (\val_{\frac32}(u22))^2 &=& 4 (\val_{\frac32}(u222))^2.\\
        \end{array}$$
        If $\val_{\frac32}(u)\equiv 7\pmod{8}$,
        \begin{equation}
          \label{eq:unique}
          27 (\val_{\frac32}(u))^2 - 57 (\val_{\frac32}(u1))^2 + 38 (\val_{\frac32}(u11))^2 = 8 (\val_{\frac32}(u111))^2.
        \end{equation}
        Before moving to the last two remainders, let us point out that we have here a single equation because the corresponding tree has a linear form (each node has a single child). Its domain is restricted to $\{\varepsilon,1,11,111\}$. This is the only relation where we need the decorations of the nodes on every level.
        If $\val_{\frac32}(u)\equiv 0\pmod{8}$,
         $$
        \begin{array}{rcl}
          9 (\val_{\frac32}(u00))^2  &=&  4 (\val_{\frac32}(u000))^2\\
          2 (\val_{\frac32}(u00))^2 + 8  (\val_{\frac32}(u02))^2 - (\val_{\frac32}(u21))^2 &=&  4 (\val_{\frac32}(u002))^2\\
          - (\val_{\frac32}(u00))^2 + 8 (\val_{\frac32}(u02))^2  +2 (\val_{\frac32}(u21))^2 &=& 4 (\val_{\frac32}(u021))^2\\
          9 (\val_{\frac32}(u21))^2 &=&  4(\val_{\frac32}(u210))^2\\
          5 (\val_{\frac32}(u00))^2 -16 (\val_{\frac32}(u02))^2 +20  (\val_{\frac32}(u21))^2 &=& 4 (\val_{\frac32}(u212))^2.\\
        \end{array}$$
        If $\val_{\frac32}(u)\equiv 1\pmod{8}$,
         $$
        \begin{array}{rcl}
          -162 (\val_{\frac32}(u))^2 +182  (\val_{\frac32}(u10))^2 + 39 (\val_{\frac32}(u12))^2 &=& 84 (\val_{\frac32}(u101))^2\\
          9 (\val_{\frac32}(u12))^2 &=& 4 (\val_{\frac32}(u120))^2\\
          810 (\val_{\frac32}(u))^2 - 406 (\val_{\frac32}(u10))^2 + 435 (\val_{\frac32}(u12))^2 &=& 84 (\val_{\frac32}(u122))^2.\\
        \end{array}$$
We conclude that the sequence of squares is indeed $(\mathbb{Q},\frac32)$-regular.
      \end{proof}

If the considered numeration system has the following additional property, then we have a more elegant result.
The base-$\frac{p}{q}$ numeration system is {\em expanding} if for all $\frac{p}{q}$-representations $w$ and for all $h\ge 1$, the number of leaves of $T[w,h+1]$ is greater than the number of leaves of $T[w,h]$.
As an example, base~$\frac32$ is not expanding because the numeration tree contains a factor whose domain is $\{1^n\mid 0\le n\le h\}$ for all $h$. This was exactly the case $\val_{\frac32}(u)\equiv 7\pmod{8}$ in the above proof. Such a particularity will be useful in the next subsection, see Example~\ref{exa:ceaut}.  To the contrary, base~$\frac52$ has the expanding property because its signature is $(024, 13)^\omega$ and thus every node in the numeration tree has degree at least $2$.

\begin{theorem}
Suppose that the base-$\frac{p}{q}$ numeration system is expanding. For all integer $d\ge 1$, the sequence $(n^d)_{n\ge 0}$ is $(\mathbb{Q},\frac{p}{q})$-regular. 
\end{theorem}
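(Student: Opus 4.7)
The plan is to exploit the fact that the decorations of $\mathbf{x}=(n^d)_{n\ge 0}$, viewed inside any subtree rooted at $u$, are polynomials of degree $d$ in $\val_{\frac{p}{q}}(u)$, and to use the expanding hypothesis to secure enough such polynomials at a single upper level to span the $(d+1)$-dimensional ambient polynomial space via a Vandermonde argument. Setting $\rho=p/q$, Equation~\eqref{eq:numsys} gives: for $u\in L_{\frac{p}{q}}$ and any word $v$ with $uv\in L_{\frac{p}{q}}$, the decoration at the node $uv$ equals
\[
P_v\bigl(\val_{\frac{p}{q}}(u)\bigr):=\bigl(\val_{\frac{p}{q}}(u)\,\rho^{|v|}+\val_{\frac{p}{q}}(v)\bigr)^d,
\]
where $P_v\in\mathbb{Q}[y]_{\le d}$ depends only on $v$. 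Hence $H$-linearity of $T_{\mathbf{x}}(L_{\frac{p}{q}})$ will follow once, for every subtree type $T_j$ of height $H$, the polynomials $\{P_v:v\in\dom(T_j),\ |v|<H\}$ span $\mathbb{Q}[y]_{\le d}$.

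Next I would choose $H=d+2$ and locate $d+1$ useful nodes at level $H-1=d+1$ in every subtree type. Writing $n_h^{(w)}$ for the number of level-$h$ nodes in the subtree rooted at $w$, the expanding hypothesis gives $n_{h+1}^{(w)}>n_h^{(w)}$ for all $h\ge 1$ and all $\frac{p}{q}$-representations $w$, which by induction yields the uniform bound $n_h^{(w)}\ge h$ for all $h\ge 1$. In particular $n_{d+1}^{(w)}\ge d+1$, so each type $T_j$ contains nodes $v_0,\ldots,v_d$ at level $d+1$. Their values $\val_{\frac{p}{q}}(v_i)$ are pairwise distinct: picking any root $u$ of a subtree of type $T_j$, the words $uv_0,\ldots,uv_d$ are pairwise distinct elements of $L_{\frac{p}{q}}$, hence pairwise distinct canonical $\frac{p}{q}$-representations, so have pairwise distinct values, and cancelling the common term $\val_{\frac{p}{q}}(u)\,\rho^{d+1}$ forces the $\val_{\frac{p}{q}}(v_i)$ to be distinct.

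The last step is a Vandermonde calculation: expanding the binomial gives $P_{v_i}(y)=\sum_{k=0}^{d}\binom{d}{k}\rho^{k(d+1)}\val_{\frac{p}{q}}(v_i)^{d-k}\,y^k$, and after reindexing rows by $\ell=d-k$ the coefficient matrix factors as an invertible diagonal matrix times the Vandermonde matrix $\bigl(\val_{\frac{p}{q}}(v_i)^\ell\bigr)_{0\le \ell,i\le d}$ on the distinct $\val_{\frac{p}{q}}(v_i)$. Therefore $P_{v_0},\ldots,P_{v_d}$ form a $\mathbb{Q}$-basis of $\mathbb{Q}[y]_{\le d}$, every leaf polynomial $P_w$ (with $|w|=H$) is a $\mathbb{Q}$-linear combination of them, and specializing this polynomial identity at $y=\val_{\frac{p}{q}}(u)$ for each admissible root $u$ yields the required $u$-independent linear relation between the decoration at $uw$ and the decorations at the $uv_i$. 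This establishes $(d+2)$-linearity of $T_{\mathbf{x}}(L_{\frac{p}{q}})$, hence $(\mathbb{Q},\frac{p}{q})$-regularity by Definition~\ref{def:Sregular}. The delicate point is the uniform lower bound on the number of upper-level nodes across all subtree types, which is exactly what the expanding hypothesis provides; without it, as Equation~\eqref{eq:unique} illustrates in base $\frac{3}{2}$, a subtree may reduce to a single path and one must instead run a Vandermonde argument using the distinct multipliers $\rho^{|v|}$ across several levels rather than the distinct values $\val_{\frac{p}{q}}(v)$ at a single level.
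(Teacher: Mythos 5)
Your proof is correct and follows essentially the same route as the paper's: both use the expanding hypothesis to secure $d+1$ nodes on the lowest internal level of a sufficiently tall subtree, and both conclude via a Vandermonde determinant that the corresponding $d+1$ degree-$d$ polynomials span the space of polynomials of degree at most $d$, so every leaf decoration is a fixed $\mathbb{Q}$-linear combination of them. The only cosmetic differences are that you parametrize by $y=\val_{\frac{p}{q}}(u)$ and argue distinctness of the $\val_{\frac{p}{q}}(v_i)$ (and make the height $H=d+2$ explicit), whereas the paper parametrizes by $n$ with $\val_{\frac{p}{q}}(w)=q^hn+r$ and uses the consecutiveness of the values on the internal level.
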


\begin{proof}
  By assumption, if we consider factors of increasing height, then they have more and more leaves: there exists an integer $h>1$ such that for all representations $w$ in base $\frac{p}{q}$, the tree factor $T[w,h-1]$ has at least $d+1$ leaves. Consider a tree of the form $T[w,h]$ where $\val_{\frac{p}{q}}(w)=q^h n+r$ for some $0\le r<q^h$. 
The lowest level of internal nodes contains at least $d+1$ nodes that have consecutive $\frac{p}{q}$-numerical values (because of the breadth-first ordering). Assume that the first node on this level is of the form $wx$ with $|x|=h-1$. Hence, by Equation~\eqref{eq:numsys}, its value is given by
  $$\val_{\frac{p}{q}}(wx)=\val_{\frac{p}{q}}(w)\left(\frac{p}{q}\right)^{|x|}+\val_{\frac{p}{q}}(x)=
  (q^h n+r)\left(\frac{p}{q}\right)^{h-1}+\val_{\frac{p}{q}}(x),$$
  which we denote by $\alpha n+\beta$ with $\alpha=q\, p^{h-1}$ and $\beta\in\mathbb{Q}$. Now consider a leaf of the form $wy$ with $|y|=h$. Its value is given by 
  $$\val_{\frac{p}{q}}(wy)=
  (q^h n+r)\left(\frac{p}{q}\right)^{h}+\val_{\frac{p}{q}}(y),$$
  which we denote by $\mu n+\nu$ with $\mu=p^{h}$ and $\nu\in\mathbb{Q}$. We claim that the decoration $(\mu n+\nu)^d$ of this leaf can be expressed as a linear combination of the $d+1$ decorations $$(\alpha n+\beta)^d,(\alpha n+\beta+1)^d,\ldots, (\alpha n+\beta+d)^d$$ of the leaves on the above level because these polynomials form a basis of the $\mathbb{Q}$-vector space of the polynomials in $\mathbb{Q}[x]$ of degree less than or equal to $d$. Indeed, develop these polynomials using the binomial theorem and write down the coefficients of $n^j$ as rows in the following matrix:
  $$
  \begin{pmatrix}
    \alpha^d & \binom{d}{1} \alpha^{d-1} \beta & \binom{d}{2} \alpha^{d-2} \beta^2 & \cdots & \binom{d}{d-1} \alpha \beta^{d-1}& \beta^d \\
    \alpha^d & \binom{d}{1} \alpha^{d-1} (\beta+1) & \binom{d}{2} \alpha^{d-2} (\beta+1)^2 & \cdots & \binom{d}{d-1} \alpha (\beta+1)^{d-1}& (\beta+1)^d \\
    \vdots & & & & & \vdots \\
      \alpha^d & \binom{d}{1} \alpha^{d-1} (\beta+d) & \binom{d}{2} \alpha^{d-2} (\beta+d)^2 & \cdots & \binom{d}{d-1} \alpha (\beta+d)^{d-1}& (\beta+d)^d \\  
  \end{pmatrix}.$$
  Its determinant is given by
  $$  \alpha^{d!}\binom{d}{1}\cdots \binom{d}{d-1}\det
  \begin{pmatrix}
    1 & \beta & \cdots &\beta^d \\
    1 & \beta+1 & \cdots & (\beta+1)^d\\
    \vdots & & & \vdots \\
    1 & \beta+d & \cdots & (\beta+d)^d\\
  \end{pmatrix},
  $$
  which is non-zero because we have a Vandermonde matrix.
  We have just proven that the decorated tree $T_{(n^d)_{n\ge 0}}(L_{\frac{p}{q}})$ is $h$-linear, so the sequence $(n^d)_{n\ge 0}$ is $(\mathbb{Q},\frac{p}{q})$-regular.
\end{proof}

\begin{corollary}
Suppose that the base-$\frac{p}{q}$ numeration system is expanding. Let $P\in\mathbb{Q}[x]$ be a non-constant polynomial. The sequence $(P(n))_{n\ge 0}$ is $(\mathbb{Q},\frac{p}{q})$-regular. 
\end{corollary}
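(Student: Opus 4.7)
The plan is to express $(P(n))_{n\ge 0}$ as a finite $\mathbb{Q}$-linear combination of sequences that have already been shown to be $(\mathbb{Q},\frac{p}{q})$-regular, and then invoke the closure of this class under linear combinations.

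First, I would write $P(x)=\sum_{d=0}^{D} a_d\, x^d$ with $a_d\in\mathbb{Q}$ and $D=\deg P\ge 1$. By the preceding theorem, each sequence $(n^d)_{n\ge 0}$ with $d\ge 1$ is $(\mathbb{Q},\frac{p}{q})$-regular under the expanding assumption. So the only ingredient missing from a direct appeal to Proposition~\ref{pro:module} is the constant sequence corresponding to the term $a_0$.

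Second, I would verify that any constant sequence $\mathbf{c}=(c,c,c,\ldots)$ is $(\mathbb{Q},\frac{p}{q})$-regular. This is immediate from the definition: in the decorated tree $T_{\mathbf{c}}(L_{\frac{p}{q}})$, every node carries the same value $c$, so for any $h\ge 1$ the decoration of each leaf equals the decoration of the root, which is a trivial $\mathbb{Q}$-linear relation in terms of decorations on strictly lower levels. Hence $T_{\mathbf{c}}(L_{\frac{p}{q}})$ is $(\mathbb{Q},h)$-linear for every $h\ge 1$, so $\mathbf{c}$ is $(\mathbb{Q},\frac{p}{q})$-regular.

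Third, I would apply Proposition~\ref{pro:module} inductively (or directly, since the $\mathbb{Q}$-module structure is already available): the sum $(P(n))_{n\ge 0}=a_0\,(1)_{n\ge 0}+\sum_{d=1}^{D} a_d\, (n^d)_{n\ge 0}$ is a $\mathbb{Q}$-linear combination of $(\mathbb{Q},\frac{p}{q})$-regular sequences, hence is itself $(\mathbb{Q},\frac{p}{q})$-regular. No genuine obstacle is expected here; the argument is purely a bookkeeping exercise combining the previous theorem with the module structure of Proposition~\ref{pro:module}. The mildly subtle point worth stating explicitly is just the inclusion of the constant term, which is why I would handle the $d=0$ case separately before invoking closure under sums.
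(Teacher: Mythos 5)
Your proof is correct and takes essentially the same approach as the paper, which disposes of the corollary in one line by citing the theorem on $(n^d)_{n\ge 0}$ together with the module structure from Proposition~\ref{pro:module}. Your explicit check that constant sequences are $(\mathbb{Q},\frac{p}{q})$-regular (every leaf decoration equals the root decoration, so the tree is $(\mathbb{Q},h)$-linear for every $h\ge 1$) correctly fills in the one detail the paper leaves implicit, since its theorem only covers exponents $d\ge 1$.
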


\begin{proof}
  This is a direct consequence of the above theorem and Proposition~\ref{pro:module}.
\end{proof}

Based on computer experiments, we conjecture that the above results also hold for rational base systems without the expanding property, such as for base~$\frac32$. In that case, the regularity of the sequence $(n^d)_{n\ge 0}$ should be achieved by taking factors of height at least $d+1$. One expects to express the decoration of the unique leaf of a linear tree, whose domain contains a single word of each length, from the $d+1$ decorations of the nodes above as for relation \eqref{eq:unique}.

\subsection{Sequences taking finitely many values}
In this section, we study sequences taking finitely many values.
With Proposition~\ref{pro: p/q-reg implies p/q-aut}, we show that a $(\mathbb{L},\frac{p}{q})$-regular sequence taking finitely many values is always $\frac{p}{q}$-automatic. We prove that, however, the converse does not hold in general.

\begin{definition}
Let $B\subseteq \mathbb{K}$ be a finite alphabet.
 A sequence $\mathbf{x}=x_0x_1\cdots\in B^\mathbb{N}$ is  {\em $\frac{p}{q}$-automatic} if there exists a deterministic finite automaton with output (DFAO for short) $\mathcal{A}=(Q,q_0,A_p,\delta,\mu:Q\to B)$ such that $x_n=\mu(\delta(q_0,\rep_{\frac{p}{q}}(n)))$ for all $n\ge 0$.
\end{definition}

We recall some notation and results from \cite{RS}.

\begin{definition}
For all $h\ge 0$, we let $F_h$ denote the set of factors of height $h$ occurring in $T_\mathbf{x}(L_{\frac{p}{q}})$, and we let $F_h^\infty \subseteq F_h$ denote the subset of elements in $F_h$ occurring infinitely often in $T_\mathbf{x}(L_{\frac{p}{q}})$. 
For any letter $a\in A_p$, we let $F_{h,a}^\infty \subseteq F_h^\infty$ denote the set of factors of height $h$ occurring infinitely often in $T_\mathbf{x}(L_{\frac{p}{q}})$ such that the label of the edge between the first node on level $h-1$ and its first child is $a$. Otherwise stated, the first word of length $h$ by lexicographic order in the domain of the factor ends with~$a$.
\end{definition}

In the signature $(w_0,\ldots,w_{q-1})$ of $T_\mathbf{x}(L_{\frac{p}{q}})$, we let $w_{j,0}$ denote the first symbol of $w_j$, for all $0 \le j\le q-1$.

\begin{theorem}\cite[Theorem~38]{RS}\label{thm: RS-Thm 36}
  Let $\mathbf{x}$ be a sequence over a finite alphabet~$B$.
  If there exists some $h\ge 0$ such that
  $\# F_{h+1,w_{j,0}}^\infty\le \# F_{h}^\infty$ for all $0 \le j\le q-1$, then $\mathbf{x}$ is $\frac{p}{q}$-automatic.
\end{theorem}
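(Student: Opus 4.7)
The plan is to construct a deterministic finite automaton with output (DFAO) $\mathcal{A}=(Q,q_0,A_p,\delta,\mu)$ realizing $\mathbf{x}$, whose ``recurrent'' states are the elements of $F_h^\infty$, with output $\mu(T_\mathbf{x}[w,h])=x_w$ equal to the root decoration, and whose finitely many ``transient'' states handle the finitely many words $w\in L_{\frac{p}{q}}$ for which $T_\mathbf{x}[w,h]\notin F_h^\infty$. Finiteness of the state set follows from $B$ being finite together with the existence of at most $q^h+1$ undecorated shapes of height $h$ in $T(L_{\frac{p}{q}})$ (the prefix $T[\varepsilon,h]$ plus the $q^h$ classes indexed by $\val_{\frac{p}{q}}(\cdot)\bmod q^h$), so $F_h$, and hence $F_h^\infty$, is finite.

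The transitions I would use are $\delta(T_\mathbf{x}[w,h],a)=T_\mathbf{x}[wa,h]$ whenever $wa\in L_{\frac{p}{q}}$. The crucial point is well-definedness on $F_h^\infty$: if $T_\mathbf{x}[w,h]=T_\mathbf{x}[w',h]$ as decorated trees, then I need $T_\mathbf{x}[wa,h]=T_\mathbf{x}[w'a,h]$ for each valid child letter $a$. Since $T_\mathbf{x}[wa,h]$ is the height-$h$ subtree rooted at $wa$ of $T_\mathbf{x}[w,h+1]$, this reduces to showing that every factor $f\in F_h^\infty$ admits a \emph{unique} height-$(h+1)$ extension in $F_{h+1}^\infty$.

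To establish this uniqueness, I would analyze the truncation map $\tau\colon F_{h+1}^\infty\to F_h^\infty$. A pigeonhole argument (each $f\in F_h^\infty$ occurs at infinitely many $w\in L_{\frac{p}{q}}$, while $F_{h+1}$ is finite) forces $\tau$ to be surjective. The shape of $f$ determines the mod-$q$ position $j(f)$ of its leftmost leaf, so every height-$(h+1)$ extension of $f$ lives in $F_{h+1,w_{j(f),0}}^\infty$. Thus $\tau$ decomposes into surjections $\tau_j\colon F_{h+1,w_{j,0}}^\infty \twoheadrightarrow G_h(j)$, where $G_h(j):=\{f\in F_h^\infty : j(f)=j\}$ partitions $F_h^\infty$. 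Combining the hypothesis $\#F_{h+1,w_{j,0}}^\infty\le \#F_h^\infty$ with the surjection bound $\#F_{h+1,w_{j,0}}^\infty\ge \#G_h(j)$, I would argue that each $\tau_j$ is in fact a bijection, yielding the unique height-$(h+1)$ extension and hence deterministic transitions.

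Once $\delta$ is deterministic, the rest is routine: I chain the finitely many transient states so that, starting from $q_0$ representing $\varepsilon$, reading $\rep_{\frac{p}{q}}(n)$ lands in the state encoding $T_\mathbf{x}[\rep_{\frac{p}{q}}(n),h]$, and the output is $x_n$ by definition of $\mu$. The hard part will be the bijectivity step for $\tau_j$: the hypothesis compares $\#F_{h+1,w_{j,0}}^\infty$ to the full $\#F_h^\infty$ rather than directly to $\#G_h(j)$, so converting the inequality into injectivity demands either a finer pigeonhole exploiting the partition $F_h^\infty=\bigsqcup_j G_h(j)$, or an iterated stabilization argument pushing the height to some $h'\ge h$ at which the sizes $\#F_{h'}^\infty$ can no longer strictly increase and the truncations become forced bijections.
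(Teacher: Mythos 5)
First, a caveat: the paper does not prove this statement --- it is quoted verbatim from \cite[Theorem~38]{RS} --- so I can only assess your argument on its own merits, not against an in-paper proof. Your general architecture (a DFAO whose recurrent states are the elements of $F_h^\infty$ with output the root decoration, finitely many transient states, and a counting argument to make the transitions deterministic) is reasonable, and your preliminary steps are fine: $F_{h+1}$ is finite, truncation $\tau\colon F_{h+1}^\infty\to F_h^\infty$ is surjective by pigeonhole, and the reduction of well-definedness of $\delta$ to uniqueness of the height-$(h+1)$ extension of each $f\in F_h^\infty$ is correct (modulo the bookkeeping that the transient part must absorb \emph{all} words up to the length of the last ``exceptional'' occurrence, not only those $w$ with $T_\mathbf{x}[w,h]\notin F_h^\infty$).

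The genuine gap is the assertion that ``the shape of $f$ determines the mod-$q$ position $j(f)$ of its leftmost leaf, so every height-$(h+1)$ extension of $f$ lives in $F_{h+1,w_{j(f),0}}^\infty$.'' This is false, and it is exactly the non-regularity phenomenon the present paper isolates in Remark~\ref{rk: regular lang important assumption}: the undecorated shape of $T[w,h]$ is equivalent to $\val_{\frac{p}{q}}(w)\bmod q^h$, which determines the residues mod $q$ of the \emph{internal} nodes but not of the nodes on level $h$. Each height-$h$ shape is the truncation of exactly $q$ distinct height-$(h+1)$ shapes, one for each residue of the first level-$h$ node, i.e., one for each label $w_{j,0}$; one checks (e.g.\ for $\frac32$, the leftmost child of $n$ is $\lceil 3n/2\rceil$, whose parity depends on $n\bmod 4$, not on $n\bmod 2$) that the $q$ lifts of $\val_{\frac{p}{q}}(w)\bmod q^h$ to $q^{h+1}$ realize all $q$ values of $w_{j,0}$. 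Consequently $j(f)$ is not a function of $f$, the sets $G_h(j)$ do not partition $F_h^\infty$, the extensions of a single $f$ are spread across \emph{all} the sets $F_{h+1,w_{j,0}}^\infty$, and the inequality $\#F_{h+1,w_{j,0}}^\infty\ge\#G_h(j)$ on which your bijectivity argument rests is unavailable. Two further points compound the problem. Surjectivity of each restricted truncation $\tau_j\colon F_{h+1,w_{j,0}}^\infty\to F_h^\infty$ is not automatic either (a factor $f$ may occur infinitely often only at roots lying in one residue class mod $q^{h+1}$), so even the corrected pigeonhole does not force each $\tau_j$ to be a bijection. And if each $\tau_j$ \emph{were} a bijection onto $F_h^\infty$, every $f$ would have exactly $q$ extensions in $F_{h+1}^\infty$ rather than a unique one, so your reduction would still not yield deterministic transitions: the automaton would additionally have to decide, from the state and the letter read, \emph{which} of the $q$ extensions is realized at the current node --- and that is precisely the information (a residue modulo a higher power of $q$) that a bounded-memory device cannot track in a non-regular numeration language unless the decorations themselves encode it. Your closing paragraph candidly flags the bijectivity step as unresolved; with the partition ill-defined, neither of the two fixes you sketch goes through as stated, so the proof is incomplete at its pivotal point.
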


\begin{prop}\label{pro: p/q-reg implies p/q-aut}
If a sequence $\mathbf{x}$ is $(\mathbb{L},\frac{p}{q})$-regular and takes only finitely many values, then it is $\frac{p}{q}$-automatic.
\end{prop}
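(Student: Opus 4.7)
The plan is to invoke Theorem~\ref{thm: RS-Thm 36} at any $h$ witnessing the $(\mathbb{L}, h)$-linearity of $T_\mathbf{x}(L_{\frac{p}{q}})$, which exists by Definition~\ref{def:Sregular}. It suffices to exhibit such an $h$ with $\# F_{h+1, w_{j,0}}^\infty \le \# F_h^\infty$ for every $0 \le j \le q-1$. Since $\mathbf{x}$ takes finitely many values and factors of bounded height have boundedly many nodes, $F_h$ (hence $F_h^\infty$) is a finite set, so the comparison is between finite cardinals.

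The central step is to prove that, for this fixed $h$, the restriction map
\[
\varphi_j \colon F_{h+1, w_{j,0}}^\infty \to F_h^\infty, \qquad T[w, h+1] \mapsto T[w, h]
\]
is injective for every $j$. Suppose $T[w, h+1]$ and $T[w', h+1]$ both lie in $F_{h+1, w_{j,0}}^\infty$ and satisfy $T[w, h] = T[w', h]$ as decorated trees. First I would show the two factors coincide as \emph{undecorated} trees. Their level-$h$ descendants in $T(L_{\frac{p}{q}})$ form contiguous blocks of BFS positions, say $n, n+1, \ldots, n+\ell-1$ and $n', n'+1, \ldots, n'+\ell-1$, with the same $\ell$ since $T[w, h] = T[w', h]$. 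For rational bases, the first letters $w_{0,0}, \ldots, w_{q-1,0}$ of the periodic signature are pairwise distinct: they run through $\mathbb{Z}/q\mathbb{Z}$ via the permutation $i \mapsto (-pi) \bmod q$, which is a bijection because $\gcd(p,q) = 1$. Hence membership in $F_{h+1, w_{j,0}}^\infty$ forces $n \equiv n' \equiv j \pmod q$, so $n+i \equiv n'+i \pmod q$ for every $i$, the signatures of all the level-$h$ nodes agree between the two factors, and the undecorated height-$(h+1)$ extensions coincide.

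Once the undecorated extensions match, $h$-linearity supplies the decorations. For each child $a$ of $w$ (equivalently of $w'$), the subfactors $T[wa, h]$ and $T[w'a, h]$ share the same undecorated type and agree on depths $0$ through $h-1$, since those depths sit inside $T[w, h] = T[w', h]$. By $h$-linearity, their bottom-level decorations are identical fixed $\mathbb{L}$-linear combinations of these upper decorations, so they coincide. Assembling the pieces, $T[w, h+1] = T[w', h+1]$ as decorated trees, so $\varphi_j$ is injective and $\# F_{h+1, w_{j,0}}^\infty \le \# F_h^\infty$. Theorem~\ref{thm: RS-Thm 36} then delivers the $\frac{p}{q}$-automaticity of $\mathbf{x}$.

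The delicate point, and the main obstacle, is matching the undecorated extensions. For a regular language this would be automatic via the underlying minimal automaton (Remark~\ref{rem:isoTw}), but for rational bases the ambiguity flagged in Remark~\ref{rk: regular lang important assumption} says that a single height-$h$ factor a priori extends in multiple ways. What rescues the argument is precisely the filtering by the first edge label $w_{j,0}$: conditioning on that label, combined with the distinctness of the first letters of the $w_i$'s forced by $\gcd(p,q) = 1$, pins down the BFS residue modulo $q$ of the first level-$h$ node, and the periodic signature then determines the remaining children uniformly in $w$ and $w'$.
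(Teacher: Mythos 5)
Your proof is correct and follows essentially the same route as the paper's: take the height $h$ witnessing linearity, show that the restriction map $F_{h+1,w_{j,0}}^\infty\to F_{h}^\infty$ is injective, and invoke Theorem~\ref{thm: RS-Thm 36}. The paper does this in three lines (at heights $h$ versus $h-1$) and leaves implicit the point you spell out explicitly, namely that filtering by the first edge label, together with the distinctness of the $w_{i,0}$ coming from $\gcd(p,q)=1$ and the consecutiveness of the level-$h$ values, pins down the undecorated extension before $h$-linearity determines the new decorations.
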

\begin{proof}
Suppose that $\mathbf{x}$ is $(\mathbb{L},\frac{p}{q})$-regular and takes its values in the finite alphabet $B\subseteq \mathbb{K}$.
Then the decorated tree $T_\mathbf{x}(L_{\frac{p}{q}})$ is $(\mathbb{L},h)$-linear for some $h\ge 1$.
This means that the decorations on level $h$ of a given element of $F_{h}^\infty$ are uniquely determined by the decorations of its first $h-1$ levels.
In particular, $\# F_{h,w_{j,0}}^\infty\le \# F_{h-1}^\infty$ for all $0 \le j\le q-1$.
We now conclude by Theorem~\ref{thm: RS-Thm 36}.
\end{proof}

\begin{prop}
Let $m\ge 2$.  If a sequence $\mathbf{x}\in\mathbb{Z}^{\mathbb{N}}$ is $(\mathbb{Z},\frac{p}{q})$-regular, then the sequence $\mathbf{y}=(x_n \bmod{m})_{n\ge 0}$ is $\frac{p}{q}$-automatic.
\end{prop}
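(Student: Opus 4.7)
The plan is to push the $h$-linearity through the reduction modulo $m$ and then invoke the previous proposition. The key observation is that if a decoration on level~$h$ is an integer linear combination of decorations on the upper levels, then reducing modulo $m$ preserves that relation, simply with coefficients viewed in $\mathbb{Z}/m\mathbb{Z}$.

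In more detail, I would proceed in three steps. First, by hypothesis and Definition~\ref{def:Sregular}, there exists some $h\ge 1$ such that $T_\mathbf{x}(L_{\frac{p}{q}})$ is $(\mathbb{Z},h)$-linear. This gives, for each of the finitely many (undecorated) subtree types $T_j$ of height $h$ appearing in $T(L_{\frac{p}{q}})$ and each leaf position $u\in\dom(T_j)\cap A_p^h$, integer constants $c_{j,u,v}\in\mathbb{Z}$ such that for every $w\in L_{\frac{p}{q}}$ with $T[w,h]=T_j$,
\[
x_{wu}=\sum_{\substack{v\in\dom(T_j)\\|v|<h}} c_{j,u,v}\, x_{wv}.
\]
Second, reducing this identity modulo $m$ and using that $\mathbf{y}=(x_n\bmod m)_{n\ge 0}$, I obtain
\[
y_{wu}\equiv \sum_{\substack{v\in\dom(T_j)\\|v|<h}} (c_{j,u,v}\bmod m)\, y_{wv}\pmod{m},
\]
which shows that $T_\mathbf{y}(L_{\frac{p}{q}})$ is $(\mathbb{Z}/m\mathbb{Z},h)$-linear. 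Since the ring $\mathbb{Z}/m\mathbb{Z}$ is a (finite, hence) commutative Noetherian ring and it can be taken as both $\mathbb{L}$ and $\mathbb{K}$ in Definition~\ref{def:Sregular}, the sequence $\mathbf{y}$ is $(\mathbb{Z}/m\mathbb{Z},\frac{p}{q})$-regular.

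Third, $\mathbf{y}$ takes values in the finite alphabet $\{0,1,\ldots,m-1\}$, so the hypothesis of Proposition~\ref{pro: p/q-reg implies p/q-aut} is met, and I conclude that $\mathbf{y}$ is $\frac{p}{q}$-automatic. There is no real obstacle here; the only thing worth being careful about is that the notion of $(\mathbb{L},h)$-linearity is genuinely preserved by the surjective ring homomorphism $\mathbb{Z}\twoheadrightarrow\mathbb{Z}/m\mathbb{Z}$, which is immediate since $h$-linearity is a set of equalities of $\mathbb{L}$-linear combinations and any ring homomorphism carries such equalities forward.
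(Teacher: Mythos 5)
Your proof is correct and uses the same key idea as the paper, namely that reducing the defining relations of $(\mathbb{Z},h)$-linearity modulo $m$ shows that the decorations on the lowest level of each height-$h$ factor of $T_\mathbf{y}(L_{\frac{p}{q}})$ are determined by the decorations above. The only (harmless) difference is packaging: you make the intermediate claim that $\mathbf{y}$ is $(\mathbb{Z}/m\mathbb{Z},\tfrac{p}{q})$-regular explicit and then invoke Proposition~\ref{pro: p/q-reg implies p/q-aut}, whereas the paper inlines essentially that same argument and applies Theorem~\ref{thm: RS-Thm 36} directly (after passing to $(h+1)$-linearity via Lemma~\ref{lem:h+1}).
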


\begin{proof}
  The tree $T_\mathbf{x}(L_\frac{p}{q})$ is $h$-linear for some $h\ge 1$ and thus it is also $(h+1)$-linear by Lemma~\ref{lem:h+1}.  Since the decorations of $T_\mathbf{y}(L_\frac{p}{q})$ are in $\{0,\ldots,m-1\}$, there is a finite number of decorated factors of height~$h$ in $T_\mathbf{y}(L_\frac{p}{q})$. Take a factor in $F_{h}^\infty$ and consider its possible extensions to a factor in $F_{h+1,w_{j,0}}^\infty$. Such an extension is unique because taking modulo $m$ any linear relation occurring in the $(h+1)$-linear tree $T_\mathbf{x}(L_\frac{p}{q})$ implies that the decorations on the last level of any factor of height $h+1$ in $T_\mathbf{y}(L_\frac{p}{q})$ depends on the decorations above. We can therefore apply Theorem~\ref{thm: RS-Thm 36}.
\end{proof}

Due to the non-regularity of the language $L_{\frac{p}{q}}$, we cannot expect a converse of the above proposition, i.e., there are $\frac{p}{q}$-automatic sequences which are not $\frac{p}{q}$-regular even in the case where the states of the DFAO have pairwise distinct outputs (which is supposedly easier to handle since there is no ambiguity about the current state when an output is produced, see the discussion in \cite[Remark~37]{RS}).

Let us consider the following example in base $\frac{3}{2}$ of an automatic sequence that is not regular.

\begin{example}\label{exa:ceaut} We will make use of the fact that the base-$\frac{3}{2}$ numeration is non-expanding. 
Consider the sequence $\mathbf{x}=q_0q_1q_2131551333223553\cdots$ generated by the DFAO in Figure~\ref{fig:counterex} in base $\frac{3}{2}$.
\begin{figure}[htb]
\begin{center}
\begin{tikzpicture}
  \tikzstyle{every node}=[shape=circle, fill=none, draw=black,minimum size=20pt, inner sep=2pt]
\node(1) at (0,0) {$q_0$};
\node(2) at (1.5,0) {$q_1$};
\node(3) at (3,0) {$q_2$};
\node(4) at (4.5,0) {$1$};
\node(5) at (6,0) {$2$};
\node(6) at (3,-1.5) {$3$};
\node(7) at (4.5,-1.5) {$5$};

\tikzstyle{every node}=[shape=circle, minimum size=5pt, inner sep=2pt]

\draw [-Latex] (-1,0) to node [above] {} (1);

\draw [-Latex] (1) to [] node [above] {$2$} (2);
\draw [-Latex] (2) to [] node [above] {$1$} (3);
\draw [-Latex] (3) to [] node [above] {$0$} (4);
\draw [-Latex] (3) to [] node [left] {$2$} (6);
\draw [-Latex] (4) to [loop above] node [above] {$1$} (4);
\draw [-Latex] (4) to [bend left] node [above] {$0,2$} (5);
\draw [-Latex] (5) to [bend left] node [below] {$0,1,2$} (4);
\draw [-Latex] (6) to [loop left] node [left] {$1$} (6);
\draw [-Latex] (6) to [bend left] node [above] {$0,2$} (7);
\draw [-Latex] (7) to [bend left] node [below] {$0,1,2$} (6);
\draw [-Latex] (1) to [loop above] node [above] {$0,1$} (1);
\draw [-Latex] (2) to [loop above] node [above] {$0,2$} (2);
\draw [-Latex] (3) to [loop above] node [above] {$1$} (3);
\end{tikzpicture}
\caption{A DFAO with 7 states reading $\frac{3}{2}$-representations.}
\label{fig:counterex}
\end{center}
\end{figure}
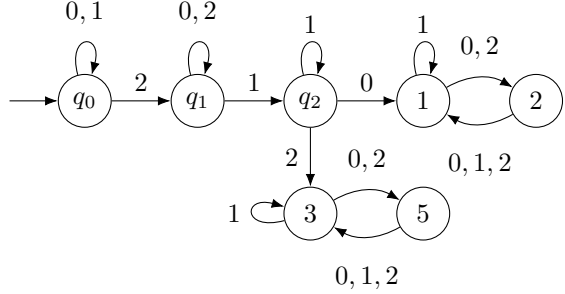
We have chosen outputs $1,2,3,5$ because the vectors $(1,2)$ and $(3,5)$ are linearly independent.
The outputs $q_0$, $q_1$ and $q_2$ are not relevant.

We show that the sequence $\mathbf{x}$ is not $\frac{3}{2}$-regular. We proceed by contradiction. Let $h\ge 1$ and assume that $T_\mathbf{x}(L_\frac{3}{2})$ is $h$-linear. Consider the word $s=1^{h-1}0\in A_3^*$. We make use of basic properties of rational base numeration systems. There exists a word $w_0$ such that $w_0s\in L_{\frac{3}{2}}$ (see, for instance, \cite[Corollaire~4.17]{Marsault-these}) and, for all $k\ge 0$, the word $w_k$ representing the integer $\val_{\frac32}(w_0)+k\cdot 2^h$ is such that $w_ks\in L_{\frac{3}{2}}$ (see, for instance, \cite[Lemme~4.14]{Marsault-these}). Analyzing the numeration tree (where the number of nodes on each level is increasing), the functions
$$g_0:n\mapsto \# \{ x\in\{0,1,2\}^n \mid 210x \in L_{\frac{3}{2}}\}$$
and
$$g_2:n\mapsto \# \{ x\in\{0,1,2\}^n \mid 212x \in L_{\frac{3}{2}}\}$$
are both increasing for $n\ge 2$. Note that $g_0(n)+g_2(n)$ is the number of words of length $n+3$ in $L_\frac{3}{2}$. So there exists a large enough integer $N$ such that $g_0(N)>2^h$ and $g_2(N)>2^h$. Without loss of generality, we may assume that $N$ is even. So there exist words $w_i,w_j \in A_3^*$ respectively of the form $210x$ and $212x'$ such that $w_is,w_js\in L_{\frac{3}{2}}$ and $|w_i|=|w_j|$ is odd.

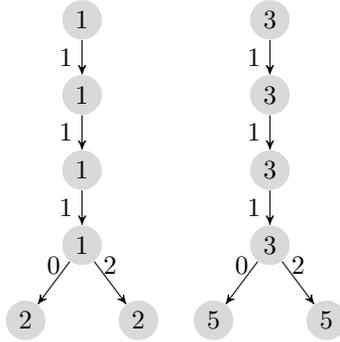
\begin{figure}[h!tb]
  \centering
    \tikzset{
  s_gra/.style = {circle,fill=gray!30, inner sep=1pt, minimum size=15pt}
}
\tikzstyle{level 1}=[sibling distance=60mm]
\tikzstyle{level 2}=[sibling distance=50mm]
\tikzstyle{level 3}=[sibling distance=40mm]
\tikzstyle{level 4}=[sibling distance=15mm]
\tikzstyle{level 5}=[sibling distance=5mm]

\begin{tikzpicture}[->,>=stealth',level distance = 1cm]
  \node [s_gra] {$1$} 
   child {node [s_gra] {$1$} 
        child {node [s_gra] {$1$}
          child {node [s_gra] {$1$}
            child {node [s_gra] {$2$} edge from parent node[above] {$0$}}
            child {node [s_gra] {$2$} edge from parent node[above] {$2$}}
            edge from parent node[left] {$1$}}
          edge from parent node[left] {$1$}} 
            edge from parent node[left] {$1$}}
;
\end{tikzpicture}
\quad 
\begin{tikzpicture}[->,>=stealth',level distance = 1cm]
  \node [s_gra] {$3$} 
   child {node [s_gra] {$3$} 
        child {node [s_gra] {$3$}
          child {node [s_gra] {$3$}
            child {node [s_gra] {$5$} edge from parent node[above] {$0$}}
            child {node [s_gra] {$5$} edge from parent node[above] {$2$}}
            edge from parent node[left] {$1$}}
          edge from parent node[left] {$1$}} 
            edge from parent node[left] {$1$}}
;
\end{tikzpicture}
\caption{The trees $T[w_i,h]$ and $T[w_j,h]$ for $h=4$.}\label{fig:counterex2}
\end{figure}

In the decorated tree $T_\mathbf{x}(L_\frac{3}{2})$, the factor $T[w_i,h]$ has all its decorations equal to $1$, except on the last level, where the decorations of the two nodes are equal to $2$ (the decoration of the root is also $1$ since $|w_i|$ is odd). See Figure~\ref{fig:counterex2} for an illustration when $h=4$. Such a factor has $h+2$ nodes, one on every internal level and two on the last level. As labeled trees, the factors $T[w_i,h]$ and $T[w_j,h]$ are equal (they have the same structure). Nevertheless, the factor $T[w_j,h]$ has all its decorations equal to $3$, except on the last level, where the decorations are equal to $5$. See Figure~\ref{fig:counterex2} again.  
Since we assume $h$-linearity, a linear relation must exist for the first node on the last level, linking its decoration with the decorations of the nodes above, and in both trees, the relation is the same.
More precisely, there must exist coefficients $c_0, c_1, \ldots, c_{h-1}$ such that
\begin{align*}
2 &= c_0 \cdot 1 + \cdots +  c_{h-1} \cdot 1, \\
5 &= c_0 \cdot 3 + \cdots +  c_{h-1} \cdot 3
\end{align*}
both hold. This is impossible since in the second tree all decorations are equal to three times the ones of the first tree, the same multiplicative relation should occur on the last level but $2\cdot 3\neq 5$. The same reasoning can be done for an arbitrary $h$.
\end{example}

\subsection{Cumulative version of a sequence}

In this section, we define and study the cumulative version of a sequence.
For any abstract numeration system $S=(L,A,<)$, recall that integers are in one-to-one correspondence with words in $L$, so we may index a sequence by words in $L$, i.e., we write $x_w$ instead of $x_n$ whenever $\rep_S(n)=w$. 
\begin{definition}
Suppose that the numeration language $L$ is prefix-closed.
  For any sequence $\mathbf{x}\in \mathbb{K}^\mathbb{N}$, its \emph{cumulative version} is the sequence $\mathbf{y}\in \mathbb{K}^\mathbb{N}$ defined, for all words $w\in L$, by
\[
y_w = \sum_{u \in \mathrm{Pref}(w)} x_u,
\]
where the sum goes over all prefixes $\varepsilon,w_1,w_1w_2,\ldots,w_1\cdots w_k$ of $w=w_1\cdots w_k$.
\end{definition}

\begin{example}
We consider the base-$\frac{3}{2}$ case. The sequence $\mathbf{x}=(n)_{n\ge 0}$ and its cumulative version are depicted in Figure~\ref{fig:ex-cumulative}.
For instance, from the root of the tree $T( L_{\frac{3}{2}})$, the third node on the last level is reached by reading the word $2120$, so its decoration in the second tree is given by \[
x_\varepsilon + x_{2} + x_{21} + x_{212} + x_{2120} 
=0+1+2+4+6
= 13.
\]
 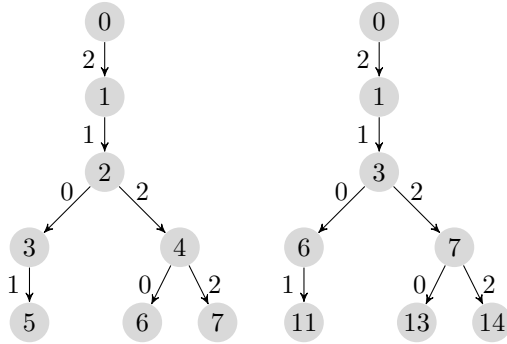
\begin{figure}[h!t]
    \centering
    \tikzset{
  s_gra/.style = {circle,fill=gray!30, inner sep=1pt, minimum size=15pt}
}
\tikzstyle{level 1}=[sibling distance=40mm]
\tikzstyle{level 2}=[sibling distance=30mm]
\tikzstyle{level 3}=[sibling distance=20mm]
\tikzstyle{level 4}=[sibling distance=10mm]
\begin{tikzpicture}[->,>=stealth',level distance = 1cm]
  \node [s_gra] {$0$} 
  child { node [s_gra] {$1$} 
    child { node [s_gra] {$2$} 
      child {node [s_gra] {$3$} 
        child {node [s_gra] {$5$}
          edge from parent node[left] {$1$}} 
            edge from parent node[above] {$0$}}
    child {node [s_gra] {$4$} 
      child {node [s_gra] {$6$}
        edge from parent node[left] {$0$}} 
      child {node [s_gra] {$7$}
        edge from parent node[right] {$2$}} 
  edge from parent node[above] {$2$}}
edge from parent node[left] {$1$}}
edge from parent node[left] {$2$}}  
;
\end{tikzpicture}
\quad 
\begin{tikzpicture}[->,>=stealth',level distance = 1cm]
  \node [s_gra] {$0$} 
  child { node [s_gra] {$1$} 
    child { node [s_gra] {$3$} 
      child {node [s_gra] {$6$} 
        child {node [s_gra] {$11$}
          edge from parent node[left] {$1$}} 
            edge from parent node[above] {$0$}}
    child {node [s_gra] {$7$} 
      child {node [s_gra] {$13$}
        edge from parent node[left] {$0$}} 
      child {node [s_gra] {$14$}
        edge from parent node[right] {$2$}} 
  edge from parent node[above] {$2$}}
edge from parent node[left] {$1$}}
edge from parent node[left] {$2$}}  
;
\end{tikzpicture}
    \caption{The first few levels of the trees in base $\frac{3}{2}$ respectively decorated by the sequence $(n)_{n\ge 0}$ and its cumulative version.}
    \label{fig:ex-cumulative}
  \end{figure}

  For the sum-of-digits sequence $\mathbf{s}$, this cumulative version is a sum weighted by the position of the digit, i.e., for a word $w_k\cdots w_0$, we get $k.w_k+\cdots +1.w_1$.
\end{example}

\begin{theorem}
Let $S=(L,A,<)$ be an abstract numeration system for which the language $L$ is prefix-closed. 
  Let $\mathbf{x}$ be a sequence such that  the decorated tree $T_\mathbf{x}(L)$ is $(\mathbb{L},h)$-linear for some $h\ge 1$. The tree decorated by the cumulative version of $\mathbf{x}$ is $(\mathbb{L},h+1)$-linear.
\end{theorem}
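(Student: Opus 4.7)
The plan is to combine two telescoping-type identities. From the definition of the cumulative version, for any node $wz\in L$ with $|z|\ge 1$, one has
$$x_{wz} = y_{wz} - y_{wz'},$$
where $z'$ denotes $z$ with its last letter removed (and we adopt the convention $y_{w\varepsilon} = y_w$). In particular, for any leaf $wu$ of a height-$(h+1)$ subtree $T[w,h+1]$, writing $u = u_1 u_2\cdots u_{h+1}$, we get
$$y_{wu} = y_{wu_1\cdots u_h} + x_{wu},$$
so the first summand is already a $y$-decoration at level $h$ of $T[w,h+1]$, lying on one of the ``upper'' levels.

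It remains to express $x_{wu}$ as an $\mathbb{L}$-linear combination of $y$-decorations at internal nodes of $T[w,h+1]$. Since $L$ is prefix-closed, $wu_1\in L$, and the subtree $T[wu_1,h]$ has some type $j'$ which, crucially, is uniquely determined by $u_1$ and by the type of $T[w,h+1]$. Applying the assumed $(\mathbb{L},h)$-linearity of $T_\mathbf{x}(L)$ to $T[wu_1,h]$ produces constants $c_{j',\, u_2\cdots u_{h+1},\, v}\in\mathbb{L}$ such that
$$x_{wu} = \sum_{\substack{v\in\dom(T[wu_1,h])\\ |v|<h}} c_{j',\, u_2\cdots u_{h+1},\, v}\; x_{wu_1 v}.$$
Each word $u_1 v$ here satisfies $1\le |u_1 v|\le h$, so applying the first identity converts each $x_{wu_1 v}$ into $y_{wu_1 v} - y_{w(u_1 v)'}$. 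Substituting yields an expression of $y_{wu}$ as an $\mathbb{L}$-linear combination of $y_{wz}$ for nodes $wz\in\dom(T[w,h+1])$ with $|z|\le h$, that is, at internal nodes of $T[w,h+1]$.

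To conclude, the coefficients in this combination depend only on $u$ and on $j'$, which in turn depends only on $u_1$ and the type of $T[w,h+1]$; hence they depend only on $u$ and on the type of $T[w,h+1]$, as required by the definition of $(\mathbb{L},h+1)$-linearity. The argument is mostly bookkeeping; the only subtle point is checking that all indices $wu_1 v$ and $w(u_1 v)'$ remain within $T[w,h+1]$ at levels $\le h$ (which follows from $|v|<h$ together with prefix-closure of $L$) and that the boundary case $v=\varepsilon$ is handled correctly by the convention $y_{w\varepsilon}=y_w$, corresponding to the root of the subtree. I do not foresee a genuine obstacle beyond this, since the two identities mesh naturally: telescoping shifts a leaf-level $y$-value down one level in exchange for an $x$-value, and $h$-linearity of $\mathbf{x}$ then reduces that $x$-value to $x$-values strictly above, which telescoping converts back into $y$-values on the admissible levels.
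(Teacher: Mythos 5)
Your proof is correct and follows essentially the same route as the paper's: the telescoping identity $x_{wz}=y_{wz}-y_{wz'}$, the splitting $y_{wu}=y_{wu_1\cdots u_h}+x_{wu}$, and the application of $(\mathbb{L},h)$-linearity to the subtree $T[wu_1,h]$ (the paper's $T[wa,h]$) before converting back to $y$-decorations. The bookkeeping points you flag (the $v=\varepsilon$ boundary case and the fact that all resulting indices lie on levels $0,\ldots,h$) are exactly the ones the paper also notes, so nothing is missing.
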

\begin{proof}
Let $\mathbf{y}$ denote the cumulative version of $\mathbf{x}$. Let $w\in L$. Take a word $u\in A^{h-1}$ and two letters $a,b\in A$ such that $aub \in \dom(T[w,h+1])$. Note that $|aub|=h+1\ge 2$. By definition of the sequence $\mathbf{y}$, we have $y_{waub} = y_{wau} + x_{waub}$.
By assumption on the sequence $\mathbf{x}$, there exist constants $c_{wa,ub,v}\in\mathbb{L}$ such that 
\[
x_{waub}=\sum_{\substack{v\in \dom(T[wa,h])\\ |v|<h}} c_{wa,ub,v}\, x_{wav}.
\]
If $z=z_\ell\cdots z_1z_0$ is a non-empty word, we let $\text{del}(z)=z_\ell\cdots z_1$ be the word obtained by deleting the last letter.
By definition of the sequence $\mathbf{y}$,  we have $x_{wav} = y_{wav} - y_{\text{del}(wav)}$. 
Therefore 
\[
y_{waub} = y_{wau} + \sum_{\substack{v\in \dom(T[wa,h])\\ |v|<h}} c_{wa,ub,v}\, (y_{wav} - y_{\text{del}(wav)}),
\]
which means that the decoration $y_{waub}$ of a leaf of $T[w,h+1]$ can be expressed as a linear combination of the decorations of the nodes of $T[w,h]$ (in the above sum, for $|v|=0$, $\text{del}(wav)=w$ is the root of $T[w,h]$ and for $|v|=h-1$, $|av|=|au|=h$).
Therefore $T_\mathbf{y}(L)$ is $(\mathbb{L},h+1)$-linear.
\end{proof}

As a consequence of Theorem~\ref{the:kautom}, we have the following result.

\begin{corollary}
Let $\mathbf{x}\in\mathbb{K}^\mathbb{N}$ be a $(\mathbb{L},k)$-regular sequence.
Then its cumulative version is also $(\mathbb{L},k)$-regular.
\end{corollary}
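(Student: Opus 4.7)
The plan is to chain together the two main results the paper has just established. The key observation is that the base-$k$ numeration system $S_k=(L_k,A_k,<)$ is an abstract numeration system whose language $L_k$ (given by \eqref{eq:Lk}) is prefix-closed and, in fact, regular. Hence both Theorem~\ref{the:kautom} and the preceding theorem on cumulative versions apply to $S_k$.

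First I would invoke Theorem~\ref{the:kautom} on $\mathbf{x}$: since $\mathbf{x}$ is $(\mathbb{L},k)$-regular, the decorated tree $T_\mathbf{x}(L_k)$ is $(\mathbb{L},h)$-linear for some $h\ge 1$. Next I would apply the preceding theorem (cumulative version of a linear decorated tree) to the abstract numeration system $S_k$, which yields that the tree $T_\mathbf{y}(L_k)$ decorated by the cumulative version $\mathbf{y}$ of $\mathbf{x}$ is $(\mathbb{L},h+1)$-linear. Finally I would apply Theorem~\ref{the:kautom} in the reverse direction to conclude that $\mathbf{y}$ is $(\mathbb{L},k)$-regular.

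The only point that deserves a brief justification is that the ``cumulative version'' of $\mathbf{x}$ in the sense of the definition above (summing $x_u$ over prefixes $u$ of $\rep_{S_k}(n)$ in the abstract numeration system $S_k$) coincides with what one would naturally call the cumulative version of a $k$-regular sequence. This is essentially automatic: since $L_k$ is prefix-closed and the $j$th prefix of $\rep_k(n)$ is the base-$k$ representation of $\lfloor n/k^{\ell-j}\rfloor$ (where $\ell=|\rep_k(n)|$), the two definitions give the same sequence up to the usual small adjustment related to the empty word representing $0$, which is covered by the handling of leading zeroes discussed in the remark about leading zeroes earlier in the paper. No real obstacle arises here; the corollary is a direct composition of Theorem~\ref{the:kautom} with the preceding theorem.
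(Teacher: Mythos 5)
Your proof is correct and follows exactly the route the paper intends: the corollary is stated as a consequence of Theorem~\ref{the:kautom}, i.e., translate $(\mathbb{L},k)$-regularity into $h$-linearity of $T_\mathbf{x}(L_k)$, apply the preceding theorem to get $(h+1)$-linearity for the cumulative version, and translate back. Your extra remark on reconciling the prefix-based definition of the cumulative version with the base-$k$ setting is a reasonable (if not strictly necessary) precaution, since the paper's definition is already phrased in terms of prefixes of $\rep_S(n)$.
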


As a consequence of Theorem~\ref{the:Sreg}, we have the following result.

\begin{corollary}\label{cor:cumulative for ANS reg and prefixed closed}
  Let $S=(L,A,<)$ be an abstract numeration system built on a prefix-closed regular language. If a sequence $\mathbf{x}\in\mathbb{K}^\mathbb{N}$ is $(\mathbb{L},S)$-regular, then its cumulative version is also $(\mathbb{L},S)$-regular.
\end{corollary}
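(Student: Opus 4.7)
The plan is to chain together the two results that precede this corollary: Theorem~\ref{the:Sreg}, which characterizes $(\mathbb{L},S)$-regularity in the prefix-closed regular case as the existence of some $h\ge 1$ making $T_\mathbf{x}(L)$ an $(\mathbb{L},h)$-linear tree, and the theorem just above, which transfers $(\mathbb{L},h)$-linearity of a decoration to $(\mathbb{L},h+1)$-linearity of its cumulative version. Since the corollary's hypotheses exactly match the hypotheses of both results (the language $L$ is prefix-closed, and we do not need regularity for the cumulative-version theorem, only for invoking Theorem~\ref{the:Sreg}), no additional work should be required beyond assembling the implications in the right order.

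Concretely, I would proceed in three short steps. First, assume $\mathbf{x}$ is $(\mathbb{L},S)$-regular. Using the ``only if'' direction of Theorem~\ref{the:Sreg}, applied to the prefix-closed regular language $L$, extract an integer $h\ge 1$ such that $T_\mathbf{x}(L)$ is $(\mathbb{L},h)$-linear. Second, apply the previous theorem to deduce that the cumulative version $\mathbf{y}$ of $\mathbf{x}$ has decorated tree $T_\mathbf{y}(L)$ which is $(\mathbb{L},h+1)$-linear. Third, invoke the ``if'' direction of Theorem~\ref{the:Sreg} with height $h+1$: since $L$ is still prefix-closed and regular, $(\mathbb{L},h+1)$-linearity of $T_\mathbf{y}(L)$ implies that $\mathbf{y}$ is $(\mathbb{L},S)$-regular.

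There is no genuine obstacle here; the corollary is by design a packaging of the two previous results through the bridge provided by Theorem~\ref{the:Sreg}. The only point deserving a line of justification in the write-up is that both directions of Theorem~\ref{the:Sreg} require $L$ to be a prefix-closed regular language, a hypothesis explicitly carried in the statement of the corollary, so the argument goes through without modification. If one wanted to be fully explicit, one could note that the intermediate step does not even use regularity, but only prefix-closedness, so regularity is consumed exactly at the two applications of Theorem~\ref{the:Sreg} and nowhere else.
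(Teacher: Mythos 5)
Your proposal is correct and is exactly the argument the paper intends: the corollary is stated as an immediate consequence of Theorem~\ref{the:Sreg} combined with the preceding theorem on cumulative versions, applied in precisely the order you describe. Your added remark that regularity of $L$ is consumed only at the two invocations of Theorem~\ref{the:Sreg} is accurate and consistent with the paper's later use of the same cumulative-version theorem in the non-regular rational-base setting.
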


Finally, from Definition~\ref{def:Sregular}, we also have the following.

\begin{corollary}\label{cor:cumulative for rational bases}
Let $\mathbf{x}$ be a $(\mathbb{L},\frac{p}{q})$-regular sequence.
Then its cumulative version is also $(\mathbb{L},\frac{p}{q})$-regular.
\end{corollary}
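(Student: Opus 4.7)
The plan is to observe that this corollary is an immediate application of the preceding theorem together with Definition~\ref{def:Sregular}. The only facts we need are (i) the rational base numeration language $L_{\frac{p}{q}}$ is prefix-closed (which is explicitly recalled in the introduction and in the discussion of rational base numeration systems preceding Example~\ref{exa:rat32}), and (ii) the definition of $(\mathbb{L},\frac{p}{q})$-regularity is phrased purely in terms of the $(\mathbb{L},h)$-linearity of the decorated tree $T_\mathbf{x}(L_{\frac{p}{q}})$.

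First I would unfold the hypothesis: assuming $\mathbf{x}$ is $(\mathbb{L},\frac{p}{q})$-regular, Definition~\ref{def:Sregular} gives some $h\ge 1$ such that the decorated tree $T_\mathbf{x}(L_{\frac{p}{q}})$ is $(\mathbb{L},h)$-linear. Next, since $L_{\frac{p}{q}}$ is prefix-closed, the preceding theorem applies verbatim with $L=L_{\frac{p}{q}}$: the tree decorated by the cumulative version $\mathbf{y}$ of $\mathbf{x}$ is $(\mathbb{L},h+1)$-linear. Finally, invoking Definition~\ref{def:Sregular} once more, this linearity of $T_\mathbf{y}(L_{\frac{p}{q}})$ is exactly what it means for $\mathbf{y}$ to be $(\mathbb{L},\frac{p}{q})$-regular.

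There is no real obstacle here; the corollary is essentially a restatement of the theorem in the particular case of the numeration language $L_{\frac{p}{q}}$. The content of the result lies entirely in the preceding theorem, whose proof handled the delicate manipulation with $\mathrm{del}$ and with linear relations on the leaves of $T[w,h+1]$. The only thing worth emphasizing is that, unlike Corollary~\ref{cor:cumulative for ANS reg and prefixed closed}, this version does not require regularity of the underlying language, since both the hypothesis and the conclusion are stated in terms of linearity of the decorated tree rather than in terms of a kernel being finitely generated.
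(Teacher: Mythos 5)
Your proof is correct and matches the paper's own (implicit) argument exactly: the paper derives this corollary directly from the preceding theorem on cumulative versions together with Definition~\ref{def:Sregular}, using only that $L_{\frac{p}{q}}$ is prefix-closed. Nothing further is needed.
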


For all words $u,w$, we let $|u|_w$ denote the number of occurrences of the factor $w$ in $u$.

\begin{corollary}\label{cor:Zs}
Let $S=(L,A,<)$ be an abstract numeration system built on a prefix-closed regular language. For any word $w\in A^*$, the sequence $(|\rep_S(n)|_w)_{n\ge 0}$ is $(\mathbb{Z},S)$-regular.
\end{corollary}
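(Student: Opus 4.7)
The plan is to realize $(|\rep_S(n)|_w)_{n \ge 0}$ as the cumulative version of a suitable $0/1$-valued sequence and then invoke Corollary~\ref{cor:cumulative for ANS reg and prefixed closed}. Concretely, define $\mathbf{x} = (x_n)_{n \ge 0}$ by
$$x_n = \begin{cases} 1, & \text{if } w \text{ is a suffix of } \rep_S(n), \\ 0, & \text{otherwise.}\end{cases}$$
For any $v \in L$, the prefixes of $v$ that end with $w$ are in one-to-one correspondence with the occurrences of $w$ as a factor of $v$, so the cumulative version $\mathbf{y}$ of $\mathbf{x}$ satisfies $y_{\rep_S(n)} = |\rep_S(n)|_w$ for every $n \ge 0$.

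The heart of the proof is therefore to show that $\mathbf{x}$ itself is $(\mathbb{Z}, S)$-regular. Since both $A^* w$ and $L$ are regular, the set $E_w = A^* w \cap L$ is a regular language, and $\mathbf{x}$ is precisely the $S$-automatic sequence produced by a DFAO $\mathcal{A}$ recognizing $E_w$. I would then argue that its $S$-kernel is in fact finite: for each $u \in A^*$, the value $\tau(\mathbf{x}, u)(n) = x_{\rep_S(n) u}$ depends on $\rep_S(n)$ only through the pair of states reached after reading $\rep_S(n)$ in $\mathcal{A}$ and in the minimal automaton $\mathcal{M}_L$ (the latter being needed to detect whether $\rep_S(n) u \in L$ in the first place). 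As there are only finitely many such pairs, the set $\ker_S(\mathbf{x})$ is finite, and the $\mathbb{Z}$-module it generates is trivially finitely generated. Thus $\mathbf{x}$ is $(\mathbb{Z}, S)$-regular.

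Having verified this, Corollary~\ref{cor:cumulative for ANS reg and prefixed closed} directly applies and yields that $\mathbf{y}$, which is the sequence $(|\rep_S(n)|_w)_{n \ge 0}$, is $(\mathbb{Z}, S)$-regular, as claimed. The main obstacle is the intermediate step establishing $(\mathbb{Z}, S)$-regularity of $\mathbf{x}$; while morally a standard fact that $S$-automatic sequences have finite $S$-kernel, one has to handle small subtleties concerning leading zeroes and the sequences $\tau(\mathbf{x}, u)$ associated with suffixes $u$ for which $\rep_S(n) u$ leaves $L$. These are easily absorbed into the product-automaton argument thanks to the regularity of $L$.
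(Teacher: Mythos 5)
Your proposal is correct and follows essentially the same route as the paper: both realize $(|\rep_S(n)|_w)_{n\ge 0}$ as the cumulative version of the $S$-automatic indicator sequence of ``$\rep_S(n)$ ends with $w$'' and then invoke Corollary~\ref{cor:cumulative for ANS reg and prefixed closed}. The only difference is that where the paper cites \cite[Theorem 38]{CCS} for the fact that this automatic sequence is $(\mathbb{Z},S)$-regular, you prove it directly by observing that $\tau(\mathbf{x},u)(n)$ factors through the pair of states reached by $\rep_S(n)$ in the DFAO and in $\mathcal{M}_L$, so the $S$-kernel is finite --- a correct, self-contained substitute for the citation.
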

\begin{proof}
Fix a word $w\in A^*$.
Consider a DFAO over $A$ whose output is $1$ if and only if the input word has the suffix $w$. Otherwise, the output is $0$. This DFAO produces an $S$-automatic sequence $\mathbf{x}$.
By~\cite[Theorem 38]{CCS}, the sequence $\mathbf{x}$ is in particular $(\mathbb{Z},S)$-regular.
The cumulative version of $\mathbf{x}$ is exactly the sequence $(|\rep_S(n)|_w)_{n\ge 0}$.
Therefore by Corollary~\ref{cor:cumulative for ANS reg and prefixed closed}, the latter sequence is also $(\mathbb{Z},S)$-regular.
\end{proof}

\begin{corollary}
Let $w\in A_p^*$ be a word. The sequence $(|\rep_{\frac{p}{q}}(n)|_w)_{n\ge 0}$ is $(\mathbb{Q},\frac{p}{q})$-regular.
\end{corollary}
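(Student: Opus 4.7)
The plan is to reduce the claim, via Corollary~\ref{cor:cumulative for rational bases}, to showing that the indicator sequence $\mathbf{x} = (x_n)_{n \ge 0}$ defined by $x_n = 1$ if $\rep_{\frac{p}{q}}(n)$ ends with $w$ and $x_n = 0$ otherwise is itself $(\mathbb{Q}, \frac{p}{q})$-regular. As in the proof of Corollary~\ref{cor:Zs}, a direct computation shows that the cumulative version of $\mathbf{x}$ equals $(|\rep_{\frac{p}{q}}(n)|_w)_{n \ge 0}$: for any prefix $u$ of a representation, $x_u = 1$ precisely when an occurrence of $w$ terminates at position $|u|$, so summing $x_u$ over all prefixes of $\rep_{\frac{p}{q}}(n)$ counts the total number of occurrences of $w$ in the representation.

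Unlike in the proof of Corollary~\ref{cor:Zs}, we cannot appeal to an ``automatic implies regular'' result, since this implication fails in rational bases (see Example~\ref{exa:ceaut}). Instead, I would show directly that the decorated tree $T_{\mathbf{x}}(L_{\frac{p}{q}})$ is $(\mathbb{Q}, h)$-linear for some sufficiently large $h$. Set $k = |w|$ and let $\alpha_j(V) = \mathbf{1}_{V \text{ ends with the length-}j\text{ prefix of } w}$ for $0 \le j \le k$. A direct case analysis yields
\[
x_{Vv} = \begin{cases} \alpha_{k-|v|}(V) \cdot \mathbf{1}_{v = \sigma_{|v|}} & \text{if } |v| < k, \\ \mathbf{1}_{v \text{ ends with } w} & \text{if } |v| \ge k, \end{cases}
\]
where $\sigma_\ell$ denotes the length-$\ell$ suffix of $w$. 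Consequently, every leaf of $T[V,h]$ (for $h \ge k$) carries a constant decoration, either $0$ or $1$. Leaves with decoration $0$ are handled by the zero combination; for a leaf $u$ with decoration $1$, it suffices to exhibit an internal node $v \in \dom(T_j)$ at some level $\ell \in \{k, \ldots, h-1\}$ also ending with $w$, since then $x_{Vv} \equiv 1$ and one takes $c_{j,u,v} = 1$ with all other coefficients zero.

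The main obstacle will be establishing the required structural property of $T(L_{\frac{p}{q}})$: that for all sufficiently large $h$, every shape $T_j$ of height $h$ containing a leaf ending in $w$ also contains an internal node at depth in $\{k, \ldots, h-1\}$ ending in $w$. This should follow from an analysis of the periodic signature of $T(L_{\frac{p}{q}})$, exploiting that over sufficiently many levels the tree unfolds enough structure to force multiple occurrences of $w$ as a terminal suffix of some path. If this direct argument fails for particular thin shapes (e.g., long $1$-chains in base $\frac{3}{2}$), a backup approach is to use the fact that every decoration $x_{Vv}$, viewed as a function of $V$ ranging over $\{V : T[V,h] = T_j\}$, lies in the $(k+1)$-dimensional $\mathbb{Q}$-span of $\{1, \alpha_1(V), \ldots, \alpha_k(V)\}$; once enough internal nodes of $T_j$ supply $k+1$ linearly independent coefficient vectors in this span, every leaf decoration is automatically a $\mathbb{Q}$-linear combination of internal ones with coefficients independent of $V$, which yields the desired $h$-linearity.
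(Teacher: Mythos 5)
Your overall plan (prove that the indicator of ``$\rep_{\frac{p}{q}}(n)$ ends with $w$'' is regular, note that its cumulative version is $(|\rep_{\frac{p}{q}}(n)|_w)_{n\ge 0}$, and conclude via the cumulative-version corollary) is exactly the paper's, and you correctly identified where the difficulty lies. But the obstacle you flag is fatal for the $\{0,1\}$-valued indicator: its decorated tree is in general \emph{not} $(\mathbb{Q},h)$-linear for any $h$, so neither your direct argument nor your backup can be completed. Concretely, take $\frac{p}{q}=\frac32$ and $w=0$. For every $h\ge 1$ there is exactly one residue modulo $2^h$ whose height-$h$ factor is the chain $\varepsilon,1,11,\ldots,1^{h-1}$ followed by the two leaves $1^{h-1}0$ and $1^{h-1}2$ (the parity vector ``odd, \dots, odd, even'' is realized by a unique residue). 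Choose a root $V$ of this type whose representation does not end in $0$; such $V$ exist because the last digit of $\rep_{\frac32}(n)$ is governed by $n\bmod 3$, which is independent of $n\bmod 2^h$. Then every node of $T[V,h]$ on levels $0,\dots,h-1$ carries decoration $0$ while the leaf $V1^{h-1}0$ carries decoration $1$, and no linear combination of zeros equals $1$. (Already $V=\rep_{\frac32}(1)=2$ with $h=2$ kills $2$-linearity: $x_2=x_{21}=0$ but $x_{210}=1$.) Your backup fails for the same reason: for this shape the internal decorations span at most $\{c\,\alpha_k(V)\}$ with $\alpha_k$ non-constant, so the constant function $1$ is not in their span. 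The structural property you hoped for is thus false, not merely unproved.

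The paper's proof avoids this by decorating with $1$ and $2$ instead of $0$ and $1$. In every window $T[y,|w|+1]$ there is at least one node $yz$ at level $|w|$; its decoration is determined inside the window and is a \emph{nonzero} constant ($2$ if $z=w$, else $1$), so each leaf decoration (also a constant in $\{1,2\}$ determined inside the window) is a scalar multiple of $x_{yz}$. This yields $(|w|+1)$-linearity with no structural hypothesis on the tree; the shallow levels, whose decorations depend on the past, simply receive coefficient $0$. The paper then recovers your $\{0,1\}$-valued sequence as $\mathbf{x}-(1)_{n\ge 0}$ via Proposition~\ref{pro:module} before taking the cumulative version. To repair your write-up, switch to the $\{1,2\}$-valued indicator and follow that route --- but note that the counterexample above shows the $\{0,1\}$-valued difference is itself not $h$-linear for any $h$, which is in tension with Proposition~\ref{pro:module} as stated (the linear relations for the two summands need not be the same), so that step of the argument deserves independent scrutiny.
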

Because of Example~\ref{exa:ceaut}, we cannot proceed as in the proof of Corollary~\ref{cor:Zs}: in this setting, a $\frac{p}{q}$-automatic sequence is not always $\frac{p}{q}$-regular. 
\begin{proof}
We start by defining an auxiliary sequence through decorations in the tree $T(L_\frac{p}{q})$ as follows. The only two decorations are $1$ and $2$ (we have chosen these numbers instead of $0,1$ because they are invertible). 
  If $u\in L_\frac{p}{q}$ ends with $w$, then the decoration of the node $u$ is $2$, otherwise it is $1$. In particular, this defines a sequence $\mathbf{x}=(x_n)_{n\ge 0}$ such that $x_n$ is the decoration associated with the $\frac{p}{q}$-representation of $n$. We will show that the sequence $\mathbf{x}$ is $(\mathbb{Q},\frac{p}{q})$-regular by showing that the corresponding decorated tree $T_\mathbf{x}(L_\frac{p}{q})$ is $(|w|+1)$-linear.

  Let $T[y,|w|+1]$ be a factor of $T_\mathbf{x}(L_\frac{p}{q})$. The decorations of the nodes on the $|w|-1$ first levels depend on the ``past'' of the factor: one has to know the last $|w|$ letters of the path from the root of $T(L_\frac{p}{q})$ to such a node to determine the decorations. This cannot be done inside the window given by $T[y,|w|+1]$. Nevertheless, the nodes on the last two levels of $T[y,|w|+1]$ are completely determined within the factor. In particular, the decoration of a node on the last level can be expressed from the decoration of a node on the previous level.

  Assume that $yw\in L_\frac{p}{q}$, i.e., $w$ belongs to the domain of $T[y,|w|+1]$. Hence $x_{yw}=2$. Now consider nodes on the last level. If there exists a digit $d$ such that $ydw$ belongs to the domain of $T[y,|w|+1]$, then $x_{ydw}=x_{yw}$. For all the other words $v$ of length $|w|+1$ such that $yv$ belongs to the domain of $T[y,|w|+1]$, then $x_{yv}=\frac{x_{yw}}{2}$. If there is a word $z\neq w$ of length $|w|$ such that $yz\in L_\frac{p}{q}$, then $x_{yz}=1$. We can therefore replace the above two equations by $x_{ydw}=2x_{yz}$ and $x_{yv}=x_{yz}$.

  So the sequence $\mathbf{x}$ indicating all $\frac{p}{q}$-representation ending with $w$ is $(\mathbb{Q},\frac{p}{q})$-regular.
 By Proposition~\ref{pro:module}, the sequence $\mathbf{x}-(1)_{n\ge 0}$ is also $(\mathbb{Q},\frac{p}{q})$-regular.
The cumulative version of the latter sequence is exactly the sequence $(|\rep_\frac{p}{q}(n)|_w)_{n\ge 0}$.
We conclude by using Corollary~\ref{cor:cumulative for rational bases}.
\end{proof}




          
\section{Graph directed linear representations}\label{sec: matrices}

In this section, we consider a prefix-closed language $L\subseteq A^*$, and when necessary, the abstract numeration system $S=(L,A,<)$ built on $L$. 
Let $T_\mathbf{x}(L)$ be a $(\mathbb{L},h)$-linear decorated tree for some $h\ge 1$. Our aim is to define the analogue of a linear representation of a rational series.

Consider the restriction $D$ to words of length less than $h$ of the union of the domains of the pairwise distinct factors $T_1,\ldots,T_{r(h)}$ of height~$h$ occurring in the undecorated tree $T(L)$, i.e.,
$$D=\bigcup_{j=1}^{r(h)} \dom(T_j) \cap A^{<h}.$$ We enumerate the words of $D$ by radix order. We will consider matrices in $\mathbb{L}^{D\times D}$ and column vectors in $\mathbb{L}^D$. A vector in $\mathbb{L}^D$ encodes the decorations of the nodes of the prefix of height $h-1$ of any factor of height $h$ occurring in $T_\mathbf{x}(L)$. Some entry may be equal to $0$ whenever the corresponding word of $D$ does not belong to the domain of the considered factor.
\begin{example}[Coding vectors]
In Example~\ref{exa:fib}, $h=2$ and $D=\{\varepsilon<0<1\}$. So we consider column vectors of dimension $3$ encoding the corresponding decorations $x_{uz}$ of the nodes $uz$, $z\in D$, of the considered factors $T[u,1]$ of height $h-1=1$. Hence, $T[100,1]$, $T[1000,1]$ and $T[101,1]$ are respectively coded (see Figure~\ref{fig:ex2b}) by
$$
\begin{pmatrix}
  x_{u}\\
  x_{u0}\\
  x_{u1}\\
\end{pmatrix}=
\begin{pmatrix}
  4\\5\\6\\
\end{pmatrix},\
\begin{pmatrix}
  5\\6\\8\\
\end{pmatrix},\
\begin{pmatrix}
  4\\6\\0\\
\end{pmatrix}.$$
Since $1$ does not belong to domain of $T[101,1]$, the last entry is set to $0$.
\end{example}

A word $w=w_0w_1\cdots w_\ell$ in $L$, with $\ell\ge h-1$, defines a path in $T_\mathbf{x}(L)$ and thus a sequence of factors of height $h$ of $T_\mathbf{x}(L)$ (given by the sequence of respective roots):
\begin{equation}
  \label{eq:seqtree}
  T[\varepsilon,h],\ T[w_0,h],\ T[w_0w_1,h],\ \ldots,\ T[w_0\cdots w_{\ell-h},h].
\end{equation}
In particular, it also provides us with a sequence of factors of height $h-1$ of $T_\mathbf{x}(L)$ and thus a sequence of column vectors in $\mathbb{L}^D$. Since the tree is $h$-linear, we will explain that this sequence of vectors can be computed by convenient successive matrix multiplications until we get the desired decoration.

In the following lemma, we assume that the types of the undecorated trees occurring in the sequence \eqref{eq:seqtree} are known. This assumption has to be checked for the numeration system of interest --- this will be discussed later on.

\begin{lemma}\label{lem:matrices}
Let $0\le i<\ell-h$. Let $z_i,z_{i+1}\in\mathbb{L}^D$ be vectors respectively coding the decorations of the prefix of height $h-1$ of $T[w_0\cdots w_{i},h]$ and $T[w_0\cdots w_{i+1},h]$. Assume that $T[w_0\cdots w_{i},h]=T_j$ and $T[w_0\cdots w_{i+1},h]=T_{j'}$ for some $j,j'\in\{1,\ldots,r(h)\}$. Then there is a matrix $M_{jj'}\in\mathbb{L}^{D\times D}$ such that $M_{jj'}z_i=z_{i+1}$.
\end{lemma}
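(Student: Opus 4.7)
The plan is to construct the matrix $M_{jj'}$ row by row, where both rows and columns are indexed by words in $D$ (enumerated in radix order as specified).

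Write $W = w_0 \cdots w_i$ so that the root of $T[W w_{i+1}, h] = T_{j'}$ is the word $Ww_{i+1}$. Fix a row index $v' \in D$. If $v' \notin \dom(T_{j'})$, the entry of $z_{i+1}$ at position $v'$ is zero by convention, so I can safely set that row of $M_{jj'}$ to zero. Otherwise, the entry equals $x_{Ww_{i+1}v'}$, and the key observation is that the node $Ww_{i+1}v'$ lies at depth $1+|v'| \in \{1,\ldots,h\}$ below the root $W$ of $T[W,h] = T_j$. The argument splits according to this depth.

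If $1+|v'| \le h-1$, then $w_{i+1}v'$ belongs to $D \cap \dom(T_j)$, and $x_{Ww_{i+1}v'}$ is precisely the coordinate of $z_i$ indexed by $w_{i+1}v'$. The $v'$-th row of $M_{jj'}$ thus has a single $1$ in column $w_{i+1}v'$ and zeros elsewhere. If $1+|v'|=h$, i.e., $|w_{i+1}v'|=h$, then applying the $(\mathbb{L},h)$-linearity of $T_\mathbf{x}(L)$ to the factor $T[W,h]$ of type $T_j$ yields
$$x_{Ww_{i+1}v'} = \sum_{\substack{v \in \dom(T_j) \\ |v|<h}} c_{j, w_{i+1}v', v}\, x_{Wv},$$
with constants $c_{j,w_{i+1}v',v} \in \mathbb{L}$. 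Since each $v$ appearing in the sum lies in $D$, every $x_{Wv}$ is a coordinate of $z_i$, and the $v'$-th row of $M_{jj'}$ is obtained by placing the coefficient $c_{j,w_{i+1}v',v}$ in column $v$ (and zero in the remaining columns). Assembling all rows produces a matrix $M_{jj'} \in \mathbb{L}^{D\times D}$ with $M_{jj'} z_i = z_{i+1}$.

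There is no genuine obstacle here; the $(\mathbb{L},h)$-linearity hypothesis is exactly what is needed to handle the new deepest-level entries, while the shallower entries translate unchanged from one root to the next. The only subtlety worth flagging is bookkeeping: the matrix depends implicitly on the letter $w_{i+1}$, not only on the pair $(j,j')$, so strictly speaking one chooses such a letter compatible with the transition and takes any associated matrix. This will not cause trouble later because, for fixed $T_j$ and $T_{j'}$ realized as parent--child factors, only finitely many such matrices arise and any one of them validates the statement.
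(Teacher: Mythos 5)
Your proof is correct and follows essentially the same route as the paper: rows indexed by words $v'\in D$ are filled with a single $1$ in column $w_{i+1}v'$ when $|v'|\le h-2$ (since those decorations already appear in $z_i$), and with the coefficients furnished by $(\mathbb{L},h)$-linearity of the factor $T[W,h]=T_j$ when $|v'|=h-1$. Your closing remark that the matrix really depends on the connecting letter $w_{i+1}$ rather than only on the pair $(j,j')$ is a fair observation the paper leaves implicit; it is resolved later in the text, where in the regular case the transition letter is recorded on the edges of the graph-directed representation and in the rational-base case the matrices are indexed by the length-$h$ window $M_{w_i\cdots w_{i+h-1}}$.
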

\begin{proof}
Observe that $T[w_0\cdots w_{i},h]$ contains $T[w_0\cdots w_{i+1},h-1]$ as a factor.
Let $m$ be the number of words of length less than $h-1$ in $D$.

We start by building the first $m$ rows of $M_{jj'}$.
The nodes of the first $h-2$ levels of $T[w_0\cdots w_{i+1},h-1]$ are nodes of $T[w_0\cdots w_{i},h-1]$. Their decorations are thus found in $z_i$. This means that the first $m$ rows of $M_{jj'}$ are made of zero entries and at most one entry equal to $1$. 
More precisely, for a word $u$ with $|u|\le h-2$, the entry $x_{w_0\cdots w_{i+1}u}$ corresponding to $u$ in $z_{i+1}$ is equal to the one $x_{w_0\cdots w_{i} \cdot w_{i+1}u}$ corresponding to $w_{i+1}u$ in $z_i$ and $|w_{i+1}u|\le h-1$.

Now, we build the remaining rows of $M_{jj'}$. The nodes of the last level of $T[w_0\cdots w_{i+1},h-1]$ are the nodes of the last level of $T[w_0\cdots w_{i},h]$. By assumption ($T_\mathbf{x}(L)$ is $h$-linear), we know that the decorations of these nodes can be obtained by some linear relations among decorations of the nodes on the upper levels found in $z_i$. The last rows of $M_{jj'}$ encode these relations.
\end{proof}

\begin{definition}[Matrices]
For all $j,j'\in\{1,\ldots,r(h)\}$, we let $M_{jj'}\in\mathbb{L}^{D\times D}$ denote the matrix defined in Lemma~\ref{lem:matrices}.
\end{definition}


\subsection{Regular abstract numeration systems}

We first tackle the case of numeration systems having a regular language. We start off with the example of the Fibonacci numeration system, as we will see that the general case behaves similarly.

\begin{example}
In Example~\ref{exa:fib}, we have three types of trees of height $2$: $t_0,t_1,t_2$ (see Figure~\ref{fig:ex2}). Following a path from the root of $T_\mathbf{x}(F)$ labeled by a valid representation, the only possible transitions (i.e., the type of trees that can follow another tree in a sequence of trees of height $2$ given by \eqref{eq:seqtree}) is: $t_0\to t_1$, $t_1\to t_2$, $t_2\to t_2$ and $t_2\to t_1$.  The corresponding matrices are respectively
  $$M_{01}=\begin{pmatrix}
  0&0&1\\
  -1&0&2\\
  0&0&0\\
\end{pmatrix},\ M_{12}=
\begin{pmatrix}
  0&1&0\\
  2&0&0\\
  2&0&0\\
\end{pmatrix},\ M_{22}=
\begin{pmatrix}
  0&1&0\\
  -1&2&0\\
  2&0&0\\
\end{pmatrix},\ M_{21}=
\begin{pmatrix}
  0&0&1\\
  0&0&\frac{3}{2}\\
  0&0&0\\
\end{pmatrix}.
$$
Here we let $M_{ij}$ denote the matrix that permits us to get the vector coding the decorations of level less than $2$ in a tree of type $t_j$ from the one coding the decorations of a tree of type $t_i$.
For instance, if $T[u,2]=t_1$ for some word $u\in\{0,1\}^*$, then $T[u0,2]=t_2$ and we have
\[
\begin{pmatrix}
  x_{u0}\\
  x_{u00}\\
  x_{u01}\\
\end{pmatrix}
=
\begin{pmatrix}
  x_{u0}\\
  2 x_{u}\\
  2 x_{u}\\
\end{pmatrix}
=
\underbrace{\begin{pmatrix}
  0&1&0\\
  2&0&0\\
  2&0&0\\
\end{pmatrix}}_{=M_{12}}
\begin{pmatrix}
  x_{u}\\
  x_{u0}\\
  x_{u1}\\
\end{pmatrix},
\]
by using the relations in Figure~\ref{fig:ex2}.

To compute $x_n$ for all $n\ge 1$, write the Fibonacci representation of $n$ and read the most significant digit first, and then follow the multiplication of matrices applied to the initial column vector coding the prefix of height $1$ of $T_\mathbf{x}(F)$. As an example, ten is written as $10010$ in the Fibonacci numeration system and gives the product 
$$M_{21}M_{22}M_{12}M_{01} \begin{pmatrix}
  1\\
  0\\
  2\\
\end{pmatrix}=
\begin{pmatrix}
  6\\ 9\\ 0\\
\end{pmatrix}
.$$
Indeed, first the vector coding the height-$1$ prefix of $T_\mathbf{x}(F)$ is 
\[
\begin{pmatrix}
  x_\varepsilon\\
  x_0\\
  x_1\\
\end{pmatrix}
=
\begin{pmatrix}
  1\\
  0\\
  2\\
\end{pmatrix}.
\]
Now following in $T(F)$ the path labeled by $10010$, we see the trees $t_0=T[\varepsilon,2]$, $t_1=T[1,2]$, $t_2=T[10,2]$, $t_2=T[100,2]$ in this particular order, so we multiply the initial column vector first by $M_{01}$, then by $M_{12}$ and finally by $M_{22}$ because we move from $t_0$ to $t_1$, then from $t_1$ to $t_2$ and finally, from $t_2$ to itself. With these three (rightmost) multiplications, we get the vector coding the prefix of height $1$ of $T[100,2]$. In particular, we know the decorations of $x_{1000}$ and $x_{1001}$. Since we want the one of $x_{10010}$, an extra multiplication is needed. The next tree necessarily has type $t_1$, so we have a fourth multiplication by $M_{21}$ and we get the vector coding the decorations of $T[1001,1]$. So a multiplication by $
\begin{pmatrix}
  0&1&0
\end{pmatrix}$, which is the characteristic vector for the suffix $0$ of the Fibonacci representation of ten, gives the expected decoration: $x_{10010}=9$.
Alternatively, we could have applied a fifth multiplication by $M_{12}$ to get the vector coding the decorations of $T[10010,1]$ and then a multiplication by $
\begin{pmatrix}
  1&0&0
\end{pmatrix}$, which is the characteristic vector for the remaining suffix $\varepsilon$.
\end{example}

Let us stress the importance of having here a regular numeration language. Actually, when the numeration language is regular (which is true for $F$), for large enough $h$ (and thanks to Lemma~\ref{lem:h+1} we can assume that this is the case), the undecorated factors of height $h$ are in one-to-one correspondence with the states of the minimal automaton of the numeration language (see Remark~\ref{rem:isoTw}). There is no need to separately obtain the sequence of types of trees in \eqref{eq:seqtree} that is seen during a computation. This information is put inside a finite automaton isomorphic to the minimal automaton of the numeration language where labels carry the matrix to apply --- similarly to GIFS (graph iterated function systems, see for instance \cite{Boore}). The initial state is initialized with the vector coding the decorations of the prefix $T[\varepsilon,h-1]$.  Each processed digit, so each transition, gives rise to a multiplication of the current vector by a matrix on the left. We add an output function for a final multiplication by a convenient characteristic row vector.
The Fibonacci case is depicted in Figure~\ref{fig:gd1}.
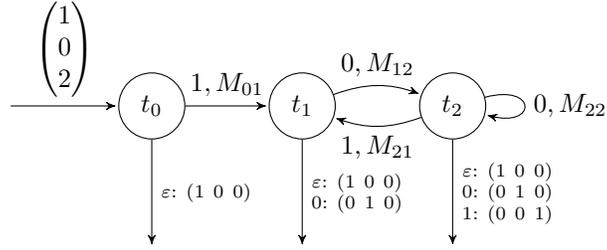
\begin{figure}[h!tb]
  \centering
  \begin{tikzpicture}[>=stealth',shorten >=1pt,auto,node distance=2cm]
    
    \node[state] (q0)               {$t_0$};
    \node (q) [left of=q0] {};    
  \node (q0') [below of=q0] {};
  \node[state]         (q1) [right of=q0] {$t_1$};
  \node (q1') [below of=q1] {};
  \node[state]         (q2) [right of=q1] {$t_2$};
  \node (q2') [below of=q2] {};
  \path[->]          (q0)  edge                    node {$1,M_{01}$} (q1);
  \path[->]          (q1)  edge   [bend left=20]   node {$0,M_{12}$} (q2);
  \path[->]          (q2)  edge   [bend left=20]   node {$1,M_{21}$} (q1);
  \path[->]          (q2)  edge   [loop right] node {$0,M_{22}$} (q2);
  \path[->] (q0) edge node {$\substack{\varepsilon:\ (1\ 0\ 0)}$} (q0');
  \path[->] (q) edge node {$
    \begin{pmatrix}
      1\\0\\2\\
    \end{pmatrix}
$} (q0);
  \path[->] (q1) edge node {$\substack{\varepsilon:\ (1\ 0\ 0)\\ 0:\ (0\ 1\ 0)}$} (q1');
  \path[->] (q2) edge node {$\substack{\varepsilon:\ (1\ 0\ 0)\\ 0:\ (0\ 1\ 0)\\ 1:\ (0\ 0\ 1)}$} (q2');
\end{tikzpicture}
  \caption{Graph directed multiplications in the Fibonacci numeration system.}
  \label{fig:gd1}
\end{figure}

What we have seen in the example about the Fibonacci numeration system is general for any abstract numeration system built on a prefix-closed regular language. In the special case of the $k$-ary system, the automaton accepting the language~\eqref{eq:Lk} is restricted to two states.

\begin{remark}
  We can compare our construction with the one of \cite{CCS}.
The linear representation $(\lambda,\mu,\gamma)$ of the series $\sum_{w\in F} x_w w$ is given by 
\[
\lambda=
\begin{pmatrix}
 1 & 0 & 1 & 0\\
\end{pmatrix},
\mu_0=
\left(
\begin{array}{cccc}
 0 & -1 & 0 & 1 \\
 1 & 2 & 0 & 0 \\
 0 & 1 & 0 & -1 \\
 0 & 0 & 0 & 0 \\
\end{array}
\right), 
\mu_1=
\left(
\begin{array}{cccc}
 0 & 0 & 0 & 0 \\
 0 & 0 & 0 & 0 \\
 2 & 3 & 0 & 0 \\
 2 & 3 & 0 & 0 \\
\end{array}
\right),
\gamma=
\begin{pmatrix}
 1 \\ 0 \\ 0 \\ 0
\end{pmatrix}.
\]  
Observe that these matrices have dimension $4$ whereas those in the previous example have dimension $3$, but we need the extra information given by the directed graph in Figure~\ref{fig:gd1}.
  
\end{remark}

\subsection{Rational base}
We now consider the case of the rational base numeration system $(L_\frac{p}{q},A_p,<)$. Almost all the above reasonings remain valid. If $T_\mathbf{x}(L_{\frac{p}{q}})$ is $h$-linear, the set $D$ is $A_p^{<h}$. We can code the decorations of the prefix of height $h-1$ by vectors in $\mathbb{L}^D$. Since $L_\frac{p}{q}$ is non-regular, we do not have the same behavior as in the regular case. Nevertheless, each word in $A_p^h$ belongs to the domain of exactly one of the $q^h$ factors of height $h$ occurring in $L_\frac{p}{q}$. Thus the knowledge of $h$ consecutive letters $w_i\cdots w_{i+h-1}$ in a $\frac{p}{q}$-representation unambiguously determines the tree to consider and thus the matrix to apply.

\begin{definition}
We let $M_{w_i\cdots w_{i+h-1}}$ denote the matrix which, applied to the column vector coding the decorations of the prefix of height $h-1$ of $T[w_0\cdots w_{i-1},h]$, gives those of the prefix of height $h-1$ of $T[w_0\cdots w_{i},h]$.  
\end{definition}

With Example~\ref{exa:rat32}, $h=2$ and $D=\{\varepsilon<0<1<2\}$, so we have $9$ matrices of the form:
$$M_{02}=M_{00}=\left(
\begin{array}{cccc}
 0 & 1 & 0 & 0 \\
 0 & 1 & 0 & 0 \\
 0 & 0 & 0 & 0 \\
 0 & 0 & 0 & 1 \\
\end{array}
\right),\ M_{01}=\left(
  \begin{array}{cccc}
 0 & 1 & 0 & 0 \\
 0 & 0 & 0 & 0 \\
 0 & \frac{1}{2} & 0 & \frac{1}{2} \\
 0 & 0 & 0 & 0 \\
\end{array}
\right),
$$
$$M_{10}=M_{12}=\left(
\begin{array}{cccc}
 0 & 0 & 1 & 0 \\
 0 & 0 & 1 & 0 \\
 0 & 0 & 0 & 0 \\
 -2 & 0 & 3 & 0 \\
\end{array}
\right), \
M_{11}=\left(
\begin{array}{cccc}
 0 & 0 & 1 & 0 \\
 0 & 0 & 0 & 0 \\
 -1 & 0 & 2 & 0 \\
 0 & 0 & 0 & 0 \\
\end{array}
\right),$$
$$M_{20}=M_{22}=\left(
\begin{array}{cccc}
 0 & 0 & 0 & 1 \\
 0 & 0 & 0 & 1 \\
 0 & 0 & 0 & 0 \\
 0 & -1 & 0 & 2 \\
\end{array}
\right)
,\ M_{21}=\left(
\begin{array}{cccc}
 0 & 0 & 0 & 1 \\
 0 & 0 & 0 & 0 \\
 -\frac{1}{2} & 0 & 0 & \frac{3}{2} \\
 0 & 0 & 0 & 0 \\
\end{array}
\right).$$
For instance, suppose that $u\in A_3^*$ can be followed by $02$ in a base-$\frac{3}{2}$ representation.
Then $M_{02}$ codes the relation between the vectors
\[
\begin{pmatrix}
  x_{u}\\x_{u0}\\x_{u1}\\x_{u2}\\
\end{pmatrix}
\text{ and }
\begin{pmatrix}
  x_{u0}\\x_{u00}\\x_{u01}\\x_{u02}\\
\end{pmatrix}
\]
when $T[u,2]$ is the third tree on the first row of Figure~\ref{fig:ex3} (the tree is determined by the length-$2$ word indexing $M_{02}$).
Looking at this particular tree allows us to fill in the entries of $M_{02}$:
\[
\begin{pmatrix}
  x_{u0}\\x_{u00}\\x_{u01}\\x_{u02}\\
\end{pmatrix}
=
 \begin{pmatrix}
  x_{u0}\\x_{u0}\\0\\x_{u2}\\
\end{pmatrix}
=
\underbrace{\left(\begin{array}{cccc}
 0 & 1 & 0 & 0 \\
 0 & 1 & 0 & 0 \\
 0 & 0 & 0 & 0 \\
 0 & 0 & 0 & 1 \\
\end{array}
\right)}_{=M_{02}}
\begin{pmatrix}
  x_{u}\\x_{u0}\\x_{u1}\\x_{u2}\\
\end{pmatrix}.
\]

The prefix of height $1$ of $T_\mathbf{s}(L_{\frac{3}{2}})$ is coded by the vector $
\begin{pmatrix}
  0&0&0&2\\
\end{pmatrix}^T$ because $x_\varepsilon=0$, $x_2=2$ and $0,1$ are not valid $\frac{3}{2}$-representations.

Now take the representation $2120012$ of the integer $22$ in base $\frac{3}{2}$. We compute
$$ M_{12}M_{01}M_{00}M_{20}M_{12}M_{21}\begin{pmatrix}
  0\\0\\0\\2\\
\end{pmatrix}=
\begin{pmatrix}
  6\\6\\0\\8\\
\end{pmatrix},
$$
as sliding a length-$2$ window in the word $2120012$ gives the six matrix indices $21, 12, 20, 00, 01, 12$.
We get the decorations of the prefix of height $1$ of the tree $T[212001,2]$. Since the remaining suffix is $2$, we apply a last multiplication by a characteristic row vector $
\begin{pmatrix}
  0&0&0&1\\
\end{pmatrix}$ to get $x_{2120012}=8$.

\begin{remark}
  In the above example, we could even use vectors and matrices of dimension $3$ using the fact that a node $u$ cannot simultaneously have children $u0,u2$ and $u1$. Our matrices and vectors always have a zero row. 
\end{remark}

As a conclusion, we have shown that one way to extend the notion of regular sequences is to consider linear decorated numeration trees. We have not considered the formalism of rational series. Let $k\ge 2$ be an integer. 
As a reminder \cite{Ring,BR2}, a sequence $f:\mathbb{N}\to\mathbb{K}$ is $k$-regular if the formal series $\sum_{w\in A_k^*} f(\val_k(w))\, w$ is $\mathbb{K}$-recognizable (for a convenient choice of a semiring $\mathbb{K}$). One can equivalently consider a series whose support is $A_k^*\setminus 0A_k^*$. In a similar way, one could ask if there exists a function $f:\mathbb{N}\to\mathbb{K}$ such that the series $\sum_{w\in A_p^*} f(\val_{\frac{p}{q}}(w))\, w$ is $\mathbb{K}$-recognizable and whose support is $L_\frac{p}{q}$.

\section*{Acknowledgment}
Manon Stipulanti is supported by the FNRS Research grant 1.B.397.20.

\end{document}